\theoremstyle{plain}
\newtheorem{theorem}{Theorem}[section]
\newtheorem{lemma}{Lemma}[section]
\newtheorem{proposition}{Proposition}[section]
\newtheorem{corollary}[lemma]{Corollary}
\newtheorem*{fact}{Fact}
\theoremstyle{remark}
\newtheorem{remark}{Remark}[section]
\theoremstyle{definition}
\newtheorem*{definition}{Definition}
\numberwithin{equation}{section}
\def\Om{\Omega}
\def\e{\varepsilon}
\def\p{\partial}
\def\D{\Delta}
\def\E{\mbox{\rm e}}
\def\b{\beta}
\def\d{\delta}
\def\l{\lambda}
\def\Odr{\mathcal{O}}
\def\H{W_2}
\def\Hloc{W_{2,loc}}
\def\di{\,\mathrm{d}}
\def\iu{\mathrm{i}}
\def\la{\langle}
\def\ra{\rangle}
\DeclareMathOperator{\RE}{Re}
\DeclareMathOperator{\IM}{Im}
\DeclareMathOperator{\supp}{supp}
\DeclareMathOperator{\sgn}{sign}
\DeclareMathOperator{\conspec}{\sigma_\mathrm{e}}
\DeclareMathOperator{\resspec}{\sigma_\mathrm{r}}
\DeclareMathOperator{\pointspec}{\sigma_\mathrm{p}}
\newcommand{\Dom}{\mathfrak{D}}
\newcommand{\ie}{\emph{i.e.}}
\newcommand{\eg}{\emph{e.g.}}
\newcommand{\cf}{\emph{cf}}
\newcommand{\PT}{\mathcal{PT}}
\newcommand{\Real}{\mathbb{R}}
\newcommand{\Nat}{\mathbb{N}}
\newcommand{\Int}{\mathbb{Z}}
\newcommand{\Com}{\mathbb{C}}
\newcommand{\sii}{L_2}
\newcommand{\dist}{\mathrm{dist}}
\begin{document}
\title{\textbf{$\mathcal{PT}$-symmetric waveguides}}
\author{%
Denis Borisov$^{a,b}$ \ and \ David Krej\v{c}i\v{r}{\'\i}k$^b$%
}
\date{
\ \vspace{-4ex} \\
\small
\emph{
\begin{quote}
\begin{itemize}
\item[$a)$]
Department of Physics and Mathematics, Bashkir State Pedagogical
University, October rev. st.~3a, 450000, Ufa, Russia
\item[$b)$]
Department of Theoretical Physics,
Nuclear Physics Institute,
A\-cad\-e\-my of Sciences, 25068 \v Re\v z, Czechia
\end{itemize}
\end{quote}
}
\ \vspace{-4ex} \\
\begin{center}
\texttt{borisovdi@yandex.ru},
\texttt{krejcirik@ujf.cas.cz}
\bigskip \\
11 November 2008
\end{center}
}
\maketitle
\begin{abstract}
\noindent
We introduce a planar waveguide of constant width with
non-Hermitian $\mathcal{PT}$-symmetric Robin boundary conditions.
We study the spectrum of this system in the regime
when the boundary coupling function
is a compactly supported perturbation of a homogeneous coupling.
We prove that the essential spectrum is positive
and independent of such perturbation,
and that the residual spectrum is empty.
Assuming that the perturbation is small in the supremum norm,
we show that it gives rise to real weakly-coupled eigenvalues
converging to the threshold of the essential spectrum.
We derive sufficient conditions for these eigenvalues
to exist or to be absent.
Moreover, we construct the leading terms of the asymptotic expansions
of these eigenvalues and the associated eigenfunctions.
\medskip
\begin{itemize}
\item[\textbf{MSC\,2000:}]
35P15, 35J05, 47B44, 47B99.
\item[\textbf{Keywords:}]
non-self-adjointness, $J$-self-adjointness,
$\mathcal{PT}$-symmetry, waveguides, Robin boundary conditions,
Robin Laplacian, eigenvalue and eigenfunction asymptotics, essential
spectrum, reality of the spectrum.
\bigskip
\item[\textbf{To appear in:}]
Integral Equations and Operator Theory
\medskip \\
\verb|http://dx.doi.org/10.1007/s00020-008-1634-1|
\end{itemize}
\end{abstract}
\newpage
\section{Introduction}
%
There are two kinds of motivations for the present work.
The first one is due to the growing interest in spectral theory
of non-self-adjoint operators.
It is traditionally relevant to the study of dissipative processes,
resonances if one uses the mathematical tool of complex scaling,
and many others.
The most recent and conceptually new application
is based on the potential quantum-mechanical
interpretation of non-Hermitian Hamiltonians which have real spectra
and are invariant under a simultaneous
$\mathcal{P}$-parity and $\mathcal{T}$-time reversal.
For more information on the subject,
we refer to the pioneering work \cite{Bender-Boettcher_1998}
and especially to the recent review~\cite{Bender_2007} with many references.

The other motivation is due to the interesting phenomena
of the existence of bound states in quantum-waveguide systems
intensively studied for almost two decades.
Here we refer to the pioneering work~\cite{ES}
and to the reviews~\cite{DE,KKriz}.
In these models the Hamiltonian is self-adjoint
and the bound states -- often without classical interpretations --
correspond to an electron trapped inside the waveguide.

In this paper we unify these two fields of mathematical physics
by considering a quantum waveguide modelled by a non-Hermitian
$\PT$-symmetric Hamiltonian.
Our main interest is to develop a spectral theory
for the Hamiltonian and demonstrate
the existence of eigenvalues outside the essential spectrum.
For non-self-adjoint operators the location
of the various essential spectra is often
as much as one can realistically hope
for in the absence of the powerful tools
available when the operators are self-adjoint,
notably the spectral theorem and minimax principle.
In the present paper we overcome this difficulty
by using perturbation methods to study the point spectrum
in the weak-coupling regime.
In certain situations we are also able to prove
that the total spectrum is real.

Let us now briefly recall the notion of $\PT$-symmetry.
If the underlying Hilbert space of a Hamiltonian~$H$
is the usual realization of square integrable functions $\sii(\Real^n)$,
the $\PT$-symmetry invariance can be stated in terms
of the commutator relation
\begin{equation}\label{commutator}
  (\PT)H = H(\PT)  \,,
\end{equation}
where the parity and time reversal operators
are defined by $(\mathcal{P}\psi)(x):=\psi(-x)$
and $\mathcal{T}\psi:=\overline{\psi}$, respectively.
In most of the $\PT$-symmetric examples
$H$~is the Schr\"odinger operator $-\Delta+V$
with a potential~$V$ satisfying~(\ref{commutator}),
so that
$
  H^* = \mathcal{T}H\mathcal{T}
$
where~$H^*$ denotes the adjoint of~$H$.
This property is known as the $\mathcal{T}$-self-adjointness of~$H$
in the mathematical literature~\cite{Edmunds-Evans},
and it is not limited to $\PT$-symmetric Schr\"odinger operators.
More generally, given any linear operator~$H$
in an abstract Hilbert space~$\mathcal{H}$,
we understand the $\PT$-symmetry property as a special case
of the $J$-self-adjointness of~$H$:
\begin{equation}\label{Jsa}
  H^* = J H J \,,
\end{equation}
where~$J$ is a conjugation operator, \ie,
$$
  \forall \phi,\psi \in \mathcal{H} \,, \qquad
  (J\phi,J\psi)_\mathcal{H}=(\psi,\phi)_\mathcal{H} \,, \quad
  J^2\psi=\psi \,.
$$
This setting seems to be adequate
for a rigorous formulation of $\PT$-symmetric problems,
and alternative to that based on Krein spaces
\cite{Langer-Tretter_2004,Ali7}.

The nice feature of the property~(\ref{Jsa}) is that~$H$ ``is not
too far'' from the class of self-adjoint operators. In particular,
the eigenvalues are found to be real for many $\PT$-symmetric Hamiltonians
\cite{Znojil_2001,DDT,Langer-Tretter_2004,CGS,Shin,
Caliceti-Cannata-Graffi_2006,KBZ}.
However, the situation is much
less studied in the case when the resolvent of~$H$ is not compact.

The spectral analysis of non-self-adjoint operators
is more difficult than in the self-adjoint case,
partly because the residual spectrum
is in general not empty for the former.
One of the goals of the present paper is to point out
that the existence of this part of spectrum
is always ruled out by~(\ref{Jsa}):
\begin{fact}
Let~$H$ be a densely defined closed linear operator
in a Hilbert space satisfying~\eqref{Jsa}.
Then the residual spectrum of~$H$ is empty.
\end{fact}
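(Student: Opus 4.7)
The plan is to reduce the statement to the elementary identity $\sigma_p(H^*) = \overline{\sigma_p(H)}$ (with a twist coming from the antilinearity of $J$), by exploiting the standard characterization of the residual spectrum via the adjoint.

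First I would recall the characterization: for a densely defined closed operator $H$, a point $\lambda\in\Com$ lies in $\resspec(H)$ iff $H-\lambda$ is injective and its range is not dense in $\mathcal{H}$. Since $H$ is closed and densely defined, one has the orthogonal-complement identity $\overline{\mathrm{ran}(H-\lambda)}^{\,\perp}=\ker(H^*-\overline{\lambda})$. Hence non-density of the range is equivalent to $\overline{\lambda}\in\pointspec(H^*)$. So it suffices to show that under \eqref{Jsa} the existence of an eigenvector $\phi\in\Dom(H^*)$ with $H^*\phi=\overline{\lambda}\phi$ forces $\lambda$ to be an eigenvalue of $H$, contradicting the assumed injectivity of $H-\lambda$.

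Second, I would push the eigenvalue through the conjugation. From \eqref{Jsa} the equation $H^*\phi=\overline{\lambda}\phi$ becomes $JHJ\phi=\overline{\lambda}\phi$. Applying $J$ to both sides and using the involution property $J^2=I$, one gets $HJ\phi=J(\overline{\lambda}\phi)$. The key point is the antilinearity of $J$: because $J$ conjugates scalars, $J(\overline{\lambda}\phi)=\lambda\,J\phi$. Since $J$ is a bijection of $\mathcal{H}$ and $\phi\neq0$, the vector $\psi:=J\phi$ is nonzero and belongs to $\Dom(H)$ (because \eqref{Jsa} implicitly asserts $J\Dom(H^*)=\Dom(H)$), so $H\psi=\lambda\psi$, i.e.\ $\lambda\in\pointspec(H)$. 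This contradicts $\lambda\in\resspec(H)\subset\Com\setminus\pointspec(H)$, proving $\resspec(H)=\emptyset$.

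The only subtle point is really the domain bookkeeping: one must take the equality \eqref{Jsa} seriously as including the identification of domains $\Dom(H^*)=J\Dom(H)$, so that the vector $\psi=J\phi$ legitimately sits in $\Dom(H)$. Apart from this, the argument is essentially an abstract three-line computation; the antilinearity of $J$ is what reconciles the $\overline{\lambda}$ appearing on the adjoint side with the $\lambda$ needed on the $H$ side.
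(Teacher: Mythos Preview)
Your proof is correct and follows essentially the same route as the paper: characterize the residual spectrum via $\overline{\mathrm{ran}(H-\lambda)}^{\perp}=\ker(H^*-\overline{\lambda})$, and then use the $J$-self-adjointness to set up the bijection $\phi\mapsto J\phi$ between eigenvectors of $H^*$ for $\overline{\lambda}$ and eigenvectors of $H$ for $\lambda$ (the paper phrases this as the kernels of $H-\lambda$ and $H^*-\overline{\lambda}$ having the same dimension, citing \cite[Lem.~III.5.4]{Edmunds-Evans}, and spells out the eigenfunction correspondence explicitly in the proof of Corollary~\ref{Corol.residual}). Your remark on the domain identification $\Dom(H^*)=J\Dom(H)$ is a worthwhile clarification that the paper leaves implicit.
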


The proof follows easily by noticing that the kernels
of $H-\lambda$ and $H^*-\overline{\lambda}$
have the same dimension \cite[Lem.~III.5.4]{Edmunds-Evans}
and by the the general fact
that the orthogonal complement of the range of a densely defined
closed operator in a Hilbert space is equal to the kernel of its adjoint.
The above result is probably not well known
in the $\PT$-symmetry community.

We continue with an informal presentation of our model
and main spectral results obtained in this paper.
The rigorous and more detailed statements are postponed
until the next section because they require a number
of technical definitions.

The Hamiltonian we consider in this paper acts as the Laplacian
in the Hilbert space of square integrable functions
over a straight planar strip and the non-Hermiticity enters
through $\PT$-symmetric boundary conditions only.
The boundary conditions are of Robin type
but with imaginary coupling.
The $\PT$-symmetric invariance then implies that
we actually deal with an electromagnetic waveguide
with radiation/dissipative boundary conditions.
In fact, the one-dimensional spectral problem
in the waveguide cross-section
has been studied recently in~\cite{KBZ} (see also~\cite{K4})
and our model can be viewed as a two-dimensional
extension of the former.

Schr\"odinger-type operators with similar non-Hermitian boundary
conditions were studied previously by Kaiser, Neidhardt and Rehberg
\cite{Kaiser-Neidhardt-Rehberg_2003a,
Kaiser-Neidhardt-Rehberg_2003b, Kaiser-Neidhardt-Rehberg_2002}.
In their papers, motivated by the needs of semiconductor physics,
the configuration space is a bounded domain and the boundary
coupling function is such that the Hamiltonian is a dissipative
operator. The latter excludes the $\PT$-symmetric models
of~\cite{KBZ} and the present paper.

The $\mathcal{T}$-self-adjointness property~(\ref{Jsa})
of our Hamiltonian is proved in Section~\ref{Sec.Domain}.
If the boundary coupling function is constant,
the spectral problem can be solved by separation of variables
and we find that the spectrum is purely essential,
given by a positive semibounded interval
(\cf~Section~\ref{Sec.un}).
In Section~\ref{Sec.ess}
we prove that the essential spectrum is stable
under compactly supported perturbations
of the coupling function.
Consequently, the essential spectrum is always real in our setting,
however, it exhibits important differences
as regards similar self-adjoint problems.
Namely, it becomes as a set independent of the value
of the coupling function at infinity
when the latter overpasses certain critical value.

In Section~\ref{Sec.point} we study the point spectrum.
We focus on the existence of eigenvalues emerging
from the threshold of the essential spectrum in the limit
when the compactly supported perturbation of the coupling function
tends to zero in the supremum norm.
It turns out that the weakly-coupled eigenvalues
may or may not exist, depending on mean values
of the local perturbation.
In the case when the point spectrum exists,
we derive asymptotic expansions of the eigenvalues
and the associated eigenfunctions.

Because of the singular nature of the $\PT$-symmetric interaction,
our example is probably the simplest non-trivial, multidimensional
$\PT$-symmetric model whatsoever
for which both the point and essential spectra exist.
We hope that the present work will stimulate more research effort
in the direction of spectral and scattering properties
of the present and other non-Hermitian $\PT$-symmetric operators.

\section{Main results}\label{Sec.results}
%
Given a positive number~$d$, we write $I:=(0,d)$
and consider an infinite straight strip $\Omega:=\Real \times I$.
We split the variables consistently by writing
$x=(x_1,x_2)$ with $x_1\in\Real$ and $x_2\in I$.
Let~$\alpha$ be a bounded real-valued function on~$\Real$;
occasionally we shall denote by the same symbol
the function $x\mapsto\alpha(x_1)$ on $\Omega$.
The object of our interest is the operator
in the Hilbert space $\sii(\Omega)$
which acts as the Laplacian and satisfies
the following $\PT$-symmetric boundary conditions:
\begin{equation}\label{bc}
  \partial_2\Psi + \iu\,\alpha\,\Psi
  \, = \, 0
  \qquad\mbox{on}\quad
  \partial\Omega
  \,.
\end{equation}
More precisely, we introduce
\begin{equation}\label{Hamiltonian}
  H_\alpha\Psi := -\Delta\Psi
  \,,
  \qquad
  \Psi\in\Dom(H_\alpha)
  := \left\{
  \Psi\in\H^2(\Omega) \, | \
  \Psi \ \mbox{satisfies~(\ref{bc})}
  \right\}
  \,,
\end{equation}
where the action of~$H_\alpha$ should be understood in the
distributional sense and~(\ref{bc}) should be understood in the
sense of traces~\cite{Adams}. In Section~\ref{Sec.Domain} we
show that~$H_\alpha$ is well defined in the sense that it is an
$m$-sectorial operator and that
its adjoint is easy to identify:
\begin{theorem}\label{Thm.domain}
Let $\alpha \in W_\infty^1(\mathbb{R})$. Then $H_\alpha$ is an
$m$-sectorial operator in $\sii(\Omega)$ satisfying
\begin{equation}\label{symmetry}
  H_\alpha^* = H_{-\alpha}
  \,.
\end{equation}
\end{theorem}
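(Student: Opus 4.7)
My plan is to realize $H_\alpha$ as the $m$-sectorial operator associated, via the first representation theorem of Kato, with the sesquilinear form
\begin{equation*}
  h_\alpha[\Psi,\Phi] := \int_\Omega \nabla\Psi\cdot\overline{\nabla\Phi}\,\di x + \iu\int_{\Real}\alpha(x_1)\bigl[\Psi(x_1,d)\overline{\Phi(x_1,d)}-\Psi(x_1,0)\overline{\Phi(x_1,0)}\bigr]\,\di x_1
\end{equation*}
with form domain $\Dom(h_\alpha):=\H^1(\Omega)$, the boundary integrals being taken in the trace sense. The signs are dictated by Green's formula with outward normals together with the Robin condition~(\ref{bc}). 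Standard trace estimates of the form $\|\Psi\|^2_{\sii(\p\Omega)}\le\e\|\nabla\Psi\|^2_{\sii(\Omega)}+C_\e\|\Psi\|^2_{\sii(\Omega)}$ on $\H^1(\Omega)$, combined with the boundedness of $\alpha$, allow the imaginary part of $h_\alpha$ to be controlled by $\RE h_\alpha[\Psi]=\|\nabla\Psi\|^2_{\sii(\Omega)}$ plus a constant multiple of $\|\Psi\|^2_{\sii(\Omega)}$. This yields sectoriality of $h_\alpha$ and shows that its form norm is equivalent to the $\H^1$-norm, whence closedness; the first representation theorem then produces a uniquely determined $m$-sectorial operator $\widetilde H_\alpha$ in $\sii(\Omega)$.

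The next step is to identify $\widetilde H_\alpha$ with $H_\alpha$ from~(\ref{Hamiltonian}). The inclusion $H_\alpha\subset\widetilde H_\alpha$ is immediate from Green's formula: for $\Psi\in\Dom(H_\alpha)$ and $\Phi\in\H^1(\Omega)$, integrating by parts and substituting~(\ref{bc}) gives $(-\D\Psi,\Phi)_{\sii(\Omega)}=h_\alpha[\Psi,\Phi]$. The opposite inclusion is where the real work lies: given $\Psi\in\H^1(\Omega)$ and $f\in\sii(\Omega)$ with $h_\alpha[\Psi,\Phi]=(f,\Phi)_{\sii(\Omega)}$ for every $\Phi\in\H^1(\Omega)$, one must upgrade the regularity to $\Psi\in\H^2(\Omega)$ and then read off~(\ref{bc}) as a trace identity. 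I would do this by the usual two-step procedure: difference quotients in the translation-invariant direction $x_1$ produce the tangential second derivative, and then the equation $-\D\Psi=f$ recovers $\p_2^2\Psi$. The hypothesis $\alpha\in W^1_\infty(\Real)$ enters precisely at the difference-quotient step, where commuting the difference operator with the boundary term generates difference quotients of $\alpha$ that must be uniformly controlled. Once $\Psi\in\H^2(\Omega)$, integrating by parts back in the identity $h_\alpha[\Psi,\Phi]=(-\D\Psi,\Phi)_{\sii(\Omega)}$ shows that the boundary trace of $\p_2\Psi+\iu\alpha\Psi$ vanishes when tested against all boundary traces of $\H^1(\Omega)$-functions, hence vanishes in the trace sense.

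The adjoint identity~(\ref{symmetry}) is then obtained at the form level. A direct computation using that $\alpha$ is real-valued gives $\overline{h_\alpha[\Phi,\Psi]}=h_{-\alpha}[\Psi,\Phi]$, so that the adjoint form of $h_\alpha$ coincides with $h_{-\alpha}$. By the first representation theorem, the operator associated with the adjoint form is $H_\alpha^*$, while the operator associated with $h_{-\alpha}$ is $H_{-\alpha}$ by the identification already carried out; this yields~(\ref{symmetry}). I expect the main obstacle to be the $\H^2$-regularity step described above: although standard for Robin-type boundary value problems, it must be executed carefully because of the non-Hermitian boundary coupling, and it is precisely this step that forces the Lipschitz assumption $\alpha\in W^1_\infty(\Real)$.
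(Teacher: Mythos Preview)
Your proposal is correct and follows essentially the same route as the paper: the same sesquilinear form $h_\alpha$ on $\H^1(\Omega)$, relative boundedness of the boundary term to obtain closedness and sectoriality, the first representation theorem to produce $\widetilde H_\alpha$, Green's formula for the easy inclusion, difference quotients in $x_1$ (using $\alpha\in W^1_\infty$) followed by the equation for $\p_2^2\Psi$ to get the $\H^2$-regularity and hence the reverse inclusion, and finally $h_\alpha^*=h_{-\alpha}$ for the adjoint. The only cosmetic difference is that the paper proves the relative boundedness by the explicit identity $h_\alpha^2[\Psi]=\int_\Omega\alpha\,\p_2|\Psi|^2$ rather than invoking an abstract trace inequality, which incidentally yields the sharper parabolic enclosure~\eqref{sector}.
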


Of course, $H_\alpha$~is not self-adjoint
unless~$\alpha$ vanishes identically
(in this case $H_0$~is the Neumann Laplacian in $\sii(\Omega)$).
However, $H_\alpha$~is $\mathcal{T}$-self-adjoint,
\ie, it satisfies~(\ref{Jsa})
with~$J$ being the complex conjugation $\mathcal{T}:\Psi\mapsto\overline{\Psi}$.
Indeed, $H_\alpha$~satisfies the relation~(\ref{symmetry})
and it is easy to see that
\begin{equation}
  H_{-\alpha}=\mathcal{T}H_\alpha\mathcal{T}
  \,.
\end{equation}
This reflects the $\PT$-symmetry~(\ref{commutator}) of our problem,
with~$\mathcal{P}$ being defined by $(\mathcal{P}\Psi)(x):=\Psi(x_1,d-x_2)$.

An important property of an operator~$H$ in a Hilbert space~$\mathcal{H}$
being $m$-sectorial is that it is closed.
Then, in particular, the spectrum~$\sigma(H)$ is well defined
as the set of complex points~$z$ such that $H-z$ is not bijective
as the operator from $\Dom(H)$ to $\mathcal{H}$.
Furthermore, its spectrum is contained in a sector
of complex numbers~$z$ such that
$|\arg(z-\gamma)| \leqslant \theta$
with some $\gamma\in\Real$ and $\theta \in [0,\pi/2)$.
In our case, however, we are able to establish a stronger result
\begin{equation}\label{sector}
  \sigma(H_\alpha) \ \subseteq \ \Xi_\alpha :=
  \left\{
  z\in\Com \,:\,
  \RE z \geqslant 0 \,, \
  |\IM z| \leqslant 2 \, \|\alpha\|_{L_\infty(\Real)} \, \sqrt{\RE z}
  \right\}
  .
\end{equation}
This follows directly from Lemma~\ref{lm3.0a}
on which the proof of Theorem~\ref{Thm.domain} is based
(\cf~the end of Section~\ref{Sec.Domain} for more details).
Consequently, the resolvent set
$
  \rho(H_\alpha) := \Com\setminus\sigma(H_\alpha)
$
contains the complement of~$\Xi_\alpha$
and we have the bound
\begin{equation}\label{resolvent.bound}
  \|(H_\alpha - z)^{-1}\| \,\leq\, 1/\dist(z,\partial\Xi_\alpha)
  \qquad\mbox{for all}\quad
  z \in \Com\setminus\Xi_\alpha
  \,,
\end{equation}
where $\|\cdot\|$ denotes the operator norm in $\sii(\Omega)$.

Given a closed operator~$H$ in a Hilbert space~$\mathcal{H}$,
we use the following decomposition of the spectrum $\sigma(H)$:
\begin{definition}
The \emph{point} spectrum~$\pointspec(H)$
equals the set of points~$\lambda$ such that $H-\lambda$ is not injective.
The \emph{essential} spectrum~$\conspec(H)$
equals the set of points~$\lambda$ such that $H-\lambda$ is not Fredholm.
Finally, the \emph{residual} spectrum~$\resspec(H)$
equals the set of points~$\lambda$ such that $H-\lambda$ is injective
but the range of $H-\lambda$ is not dense in~$\mathcal{H}$.
\end{definition}
\begin{remark}\label{Rem.ess}
1.~The reader is warned that various other types
of essential spectra of non-self-adjoint operators
are used in the literature;
\cf~\cite[Chapt.~IX]{Edmunds-Evans} for five distinct definitions
and a detailed description of their properties.
Among them we choose that of Wolf~\cite{Wolf},
which is in general larger than that of Kato~\cite[Sec.~IV.5.6]{Kato}
based on violating the semi-Fredholm property.
(Recall that a closed operator in a Hilbert space
is called Fredholm if its range is closed
and both its kernel and its cokernel are finite-dimensional,
while it is called semi-Fredholm if its range is closed
and its kernel or its cokernel is finite-dimensional.)
However, since our operator~$H_\alpha$ is $\mathcal{T}$-self-adjoint,
the majority of the different definitions coincide
\cite[Thm~IX.1.6]{Edmunds-Evans}, in particular the two above,
and that is why we use the common notation $\conspec(\cdot)$ in this paper.
Then our choice also coincides with the definition of ``continuous spectrum''
as used for instance in the Glazman's book~\cite{Glazman}.
\smallskip \\
2.~We indeed have the decomposition (\cf~\cite[Sec.~I.1.1]{Glazman})
$$
  \sigma(H)
  = \pointspec(H) \cup \conspec(H) \cup \resspec(H)
  \,,
$$
but note that there might be intersections on the right hand side.
In particular, $\conspec(H)$ contains eigenvalues
of infinite geometric multiplicity.
\smallskip \\
3.~On the other hand, the definitions of point and residual spectra
are standard and they form disjoint subsets of $\sigma(H)$.
Recalling the general fact \cite[Sec.~V.3.1]{Kato}
that the orthogonal complement of the range of a densely defined
closed operator in a Hilbert space is equal to the kernel of its adjoint,
we obtain the following characterization of the residual spectrum
in terms of the point spectrum of the operator and its adjoint:
\begin{equation}\label{Hilbert.property}
  \resspec(H) =
  \left\{
  \lambda\in\Com\ | \
  \bar{\lambda}\in\pointspec(H^*)
  \ \& \
  \lambda\not\in\pointspec(H)
  \right\}
  .
\end{equation}
\end{remark}

The $\mathcal{T}$-self-adjointness of~$H_\alpha$
immediately implies:
\begin{corollary}\label{Corol.residual}
Suppose the hypothesis of Theorem~\ref{Thm.domain}.
Then
\begin{equation*}
  \resspec(H_\alpha)=\varnothing \,.
\end{equation*}
\end{corollary}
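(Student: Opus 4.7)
My plan is to deduce the corollary directly from the characterization~\eqref{Hilbert.property} of the residual spectrum together with the adjoint identity~\eqref{symmetry} supplied by Theorem~\ref{Thm.domain}; the argument will be essentially the one sketched for the \emph{Fact} in the introduction, specialized to the conjugation $J = \mathcal{T}$.

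First I would verify that the antilinear involution $\mathcal{T}:\Psi\mapsto\overline{\Psi}$ conjugates $H_\alpha$ into $H_{-\alpha}$. Since $\mathcal{T}$ preserves $\H^2(\Omega)$, and the defining boundary condition~\eqref{bc} for $H_\alpha$, namely $\partial_2\Psi + \iu\alpha\Psi = 0$, turns after complex conjugation into $\partial_2(\mathcal{T}\Psi) - \iu\alpha(\mathcal{T}\Psi) = 0$, one has $\mathcal{T}\Dom(H_\alpha) = \Dom(H_{-\alpha})$. Because $-\Delta$ commutes with $\mathcal{T}$ in the distributional sense, this upgrades to the operator identity $\mathcal{T} H_\alpha \mathcal{T} = H_{-\alpha}$.

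Next I would proceed by contradiction. Suppose $\lambda \in \resspec(H_\alpha)$. Then~\eqref{Hilbert.property} combined with~\eqref{symmetry} forces $\overline{\lambda} \in \pointspec(H_{-\alpha})$, so some nonzero $\Psi\in\Dom(H_{-\alpha})$ satisfies $H_{-\alpha}\Psi = \overline{\lambda}\Psi$. Applying $\mathcal{T}$ to both sides and using the intertwining of the previous step yields $H_\alpha(\mathcal{T}\Psi) = \lambda\,\mathcal{T}\Psi$ with $\mathcal{T}\Psi\neq 0$. Hence $\lambda \in \pointspec(H_\alpha)$, contradicting the second condition in~\eqref{Hilbert.property}.

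There is no genuine obstacle: the Hilbert-space characterization~\eqref{Hilbert.property} is granted in Remark~\ref{Rem.ess}, the adjoint identity~\eqref{symmetry} is supplied by Theorem~\ref{Thm.domain}, and the only calculation is the purely formal intertwining relation above. The same argument would go through verbatim for any densely defined closed $J$-self-adjoint operator, which is exactly the content of the \emph{Fact}; the corollary is simply the specialization to $J=\mathcal{T}$.
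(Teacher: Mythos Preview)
Your proposal is correct and follows essentially the same approach as the paper's own proof: both arguments combine the characterization~\eqref{Hilbert.property} with the $\mathcal{T}$-self-adjointness $H_\alpha^* = \mathcal{T}H_\alpha\mathcal{T}$ to show that $\bar\lambda\in\pointspec(H_\alpha^*)$ forces $\lambda\in\pointspec(H_\alpha)$. The paper phrases this as the biconditional ``$\lambda$ is an eigenvalue of $H_\alpha$ with eigenfunction $\Psi$ iff $\bar\lambda$ is an eigenvalue of $H_\alpha^*$ with eigenfunction $\overline{\Psi}$'' and then invokes~\eqref{Hilbert.property}, whereas you unwind the same step by contradiction; the content is identical.
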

\begin{proof}
We repeat the proof sketched in Introduction.
Since~$H_\alpha$ is $\mathcal{T}$-self-adjoint, it is easy to see that
$\lambda$~is an eigenvalue of~$H_\alpha$ (with eigenfunction~$\Psi$)
if, and only if,
$\bar{\lambda}$~is an eigenvalue of~$H_\alpha^*$
(with eigenfunction~$\overline{\Psi}$).
It is then clear from the general identity~\eqref{Hilbert.property}
that the residual spectrum of~$H_\alpha$ must be empty.
\end{proof}

The case of uniform boundary conditions, \ie~when $\alpha$
equals identically a constant~$\alpha_0$, can be solved by
separation of variables (\cf~Section~\ref{Sec.un}).
We find
\begin{equation}\label{separation}
  \sigma(H_{\alpha_0})
  = \conspec(H_{\alpha_0})
  = [\mu_0^2,+\infty)
  \,,
\end{equation}
where the threshold~$\mu_0^2$, with the notation
\begin{equation}\label{threshold}
  \mu_0=
  \begin{cases}
    \alpha_0
    & \mbox{if} \quad |\alpha_0| \leqslant \pi/d \,,
    \\
    \pi/d
    & \mbox{if} \quad |\alpha_0| > \pi/d \,,
  \end{cases}
\end{equation}
denotes the bottom of the spectrum of the ``transverse'' operator
\begin{equation}\label{trans.op}
\begin{aligned}
  -\Delta_{\alpha_0}^I\psi &:= -\psi'' ,
  \\
  \psi \in \Dom(-\Delta_{\alpha_0}^I) &:=
  \left\{
  \psi\in\H^2(I) \, |\
  \psi'+\iu\alpha_0\psi=0 \quad \mbox{at} \quad \partial I
  \right\} .
\end{aligned}
\end{equation}

The operator~$-\Delta_{\alpha_0}^I$ was studied in~\cite{KBZ}.
Its spectrum is purely discrete and real:
\begin{equation}\label{spectrum}
  \sigma(-\Delta_{\alpha_0}^I)
  = \{\mu_j^2\}_{j=0}^\infty
  \,,
\end{equation}
where~$\mu_0$ has been introduced in~(\ref{threshold}),
\begin{equation*}
  \mu_1=
  \begin{cases}
    \alpha_0
    & \mbox{if} \quad |\alpha_0| > \pi/d \,,
    \\
    \pi/d
    & \mbox{if} \quad |\alpha_0| \leqslant \pi/d \,,
  \end{cases}
  \qquad\mbox{and}\qquad
  \mu_j:=\pi j/d
  \quad\mbox{for} \quad j\geqslant 2 \,.
\end{equation*}
Making the hypothesis
\begin{equation}\label{hypothesis}
  \alpha_0 d/\pi \ \not\in \ \Int\!\setminus\!\{0\}
  \,,
\end{equation}
the eigenvalues of~$-\Delta_{\alpha_0}^I$ are simple
and the corresponding set of eigenfunctions $\{\psi_j\}_{j=0}^\infty$
can be chosen as
\begin{equation}\label{psi}
  \psi_j(x_2) :=
  \cos(\mu_j x_2) - \iu \, \frac{\alpha_0}{\mu_j}\, \sin(\mu_j x_2)
  \,.
\end{equation}
We refer to Section~\ref{Sec.trans} for more results
about the operator $-\Delta_{\alpha_0}^I$.

Let us now turn to the non-trivial case
of variable coupling function~$\alpha$.
Among a variety of possible situations,
in this paper we restrict the considerations
to local perturbations of the uniform case.
Namely, we always assume that the difference $\alpha-\alpha_0$
is compactly supported.

First of all, in Section~\ref{Sec.ess} we show
that the essential component of the spectrum of~$H_\alpha$
is stable under the local perturbation of the uniform case:
\begin{theorem}\label{Thm.ess}
Let $\alpha-\alpha_0\in C_0(\mathbb{R})\cap W_\infty^1(\Real)$
with $\alpha_0\in\Real$. Then
\begin{equation*}
  \conspec(H_\alpha) = [\mu_0^2,+\infty)
  \,.
\end{equation*}
\end{theorem}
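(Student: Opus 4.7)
The plan is to derive Theorem~\ref{Thm.ess} from the stability of the Wolf essential spectrum under compact resolvent perturbations, combined with the explicit computation~(\ref{separation}) for the unperturbed operator. By~(\ref{sector}) the intersection $\rho(H_\alpha)\cap\rho(H_{\alpha_0})$ is non-empty (any sufficiently negative real number works), and I would prove that for some such~$z$ the resolvent difference
\begin{equation*}
(H_\alpha-z)^{-1}-(H_{\alpha_0}-z)^{-1}
\end{equation*}
is a compact operator on $\sii(\Omega)$. Once this is in hand, the standard stability theorem for the Wolf essential spectrum (see, \eg, \cite[Thm~IX.2.1]{Edmunds-Evans}) immediately gives $\conspec(H_\alpha)=\conspec(H_{\alpha_0})=[\mu_0^2,+\infty)$.

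For the compactness I would argue as follows. Given $f\in\sii(\Omega)$, set $u:=(H_\alpha-z)^{-1}f$, $v:=(H_{\alpha_0}-z)^{-1}f$ and $w:=u-v$. Subtracting the two Robin conditions and using $w=u-v$ yields
\begin{equation*}
(-\Delta-z)w=0 \ \text{in}\ \Omega, \qquad \p_2 w+\iu\alpha w=-\iu(\alpha-\alpha_0)\,v \ \text{on}\ \p\Omega.
\end{equation*}
Since $\alpha-\alpha_0$ has compact support $K\subset\Real$, the boundary data are supported in the bounded piece $K\times\{0,d\}$ of $\p\Omega$. Now $f\mapsto v$ is bounded from $\sii(\Omega)$ into $\H^2(\Omega)$, the trace map $\H^2(\Omega)\to \H^{3/2}(\p\Omega)$ is continuous, and restriction to $K\times\{0,d\}$ followed by the embedding $\H^{3/2}\hookrightarrow L_2$ is compact by Rellich--Kondrachov. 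Consequently $f\mapsto -\iu(\alpha-\alpha_0)\,v|_{\p\Omega}$ is a compact map from $\sii(\Omega)$ into $L_2(\p\Omega)$ supported in $K\times\{0,d\}$.

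To finish I would exhibit a bounded solution operator $g\mapsto w$ for the inhomogeneous boundary value problem "$(-\Delta-z)w=0$ in $\Omega$, $\p_2 w+\iu\alpha w=g$ on $\p\Omega$" for boundary data $g$ supported in $K\times\{0,d\}$. A convenient construction is first to lift $g$ to some $G\in\H^2(\Omega)$ satisfying $\p_2 G+\iu\alpha G=g$ on $\p\Omega$, which is possible since $\alpha\in W_\infty^1(\Real)$, and then to set $w:=G-(H_\alpha-z)^{-1}\bigl((-\Delta-z)G\bigr)$; this defines a bounded map from a suitable $L_2$-based boundary Sobolev space into $\sii(\Omega)$. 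Composing with the compact map of the previous paragraph yields compactness of $f\mapsto w$, i.e.\ of the resolvent difference. The main obstacle is producing this lifting and solution operator cleanly for the complex Robin condition in the $\PT$-symmetric setting; it is technical but standard, being bought by the regularity hypothesis $\alpha\in W_\infty^1(\Real)$. A more pedestrian alternative would be to prove the two inclusions separately: $[\mu_0^2,+\infty)\subseteq\conspec(H_\alpha)$ via singular (Weyl) sequences $\Psi_n(x)=\chi_n(x_1)\E^{\iu k x_1}\psi_0(x_2)$ with $\chi_n$ cut-off functions whose supports escape $K$ to infinity (so they automatically satisfy the $H_\alpha$ boundary conditions and approximate eigenfunctions at $\lambda=\mu_0^2+k^2$), and the reverse inclusion via a direct Fredholm analysis of $H_\alpha-\lambda$ for $\lambda\notin[\mu_0^2,+\infty)$.
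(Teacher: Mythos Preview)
Your proposal is correct and follows the same route as the paper: compactness of the resolvent difference via an inhomogeneous Robin boundary value problem for $w=u-v$, combined with the stability of the Wolf essential spectrum and~(\ref{separation}). The only notable difference is that the paper writes $w$ as a solution of the \emph{constant-coefficient} $\alpha_0$-Robin problem (with boundary data $-\iu(\alpha-\alpha_0)\,u|_{\p\Omega}$ rather than $-\iu(\alpha-\alpha_0)\,v|_{\p\Omega}$), which allows it to replace your lifting construction by a one-line a~priori estimate $\|w\|_{\H^1(\Omega)}\leqslant C\|\varphi\|_{L_2(\p\Omega)}$ obtained directly by integration by parts for $z$ sufficiently negative, thereby sidestepping the boundary Sobolev-index bookkeeping.
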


Notice that the essential spectrum as a set
is independent of~$\alpha_0$ as long as $|\alpha_0| \geqslant \pi/d$.
This is a consequence of the fact that
our Hamiltonian is not Hermitian.
On the other hand, it follows that the essential spectrum is real.
Recall that the residual spectrum
is always empty due to Corollary~\ref{Corol.residual}.
We do not have the proof of the reality for the point spectrum,
except for the particular case treated in the next statement:
\begin{theorem}\label{th1.2a}
Let $\alpha\in C_0(\mathbb{R})\cap W_\infty^1(\Real)$ be an odd
function. Then
\begin{equation*}
  \pointspec(H_\alpha)\subset \mathbb{R} \,.
\end{equation*}
\end{theorem}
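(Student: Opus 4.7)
My plan is to exploit an additional antilinear symmetry that $H_\alpha$ enjoys precisely when $\alpha$ is odd. Besides the $\PT$-symmetry with $(\mathcal{P}\Psi)(x_1,x_2) = \Psi(x_1, d-x_2)$ already at hand, I introduce the reflection in $x_1$,
\[
  (\mathcal{P}_1 \Psi)(x_1, x_2) := \Psi(-x_1, x_2),
\]
and the antilinear involution $\mathcal{R} := \mathcal{P}_1 \mathcal{T}$ on $\sii(\Omega)$. For $\Psi \in \Dom(H_\alpha)$ direct substitution shows that $\mathcal{P}_1 \Psi$ satisfies $\partial_2 (\mathcal{P}_1 \Psi) + \iu\, \alpha(-x_1)\,(\mathcal{P}_1 \Psi) = 0$ on $\partial\Omega$; by oddness $\alpha(-x_1) = -\alpha(x_1)$, this is the defining condition of $\Dom(H_{-\alpha}) = \Dom(H_\alpha^*)$. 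Combined with $-\Delta(\mathcal{P}_1 \Psi) = \mathcal{P}_1(-\Delta \Psi)$ and the $\mathcal{T}$-self-adjointness provided by Theorem~\ref{Thm.domain}, this yields the pseudo-Hermiticity $\mathcal{P}_1 H_\alpha \mathcal{P}_1 = H_\alpha^*$, the commutation $\mathcal{R} H_\alpha = H_\alpha \mathcal{R}$, and consequently that the linear involution $\mathcal{P}_1 \mathcal{P}$ commutes with $H_\alpha$.

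Let $\lambda \in \pointspec(H_\alpha)$. Since $\lambda$ lies outside the real essential spectrum $[0,+\infty)$ given by Theorem~\ref{Thm.ess}, its generalised eigenspace is finite-dimensional and invariant under the involution $\mathcal{P}_1 \mathcal{P}$, so one may choose an eigenfunction $\Psi$ with $\mathcal{P}_1 \mathcal{P} \Psi = \epsilon \Psi$, $\epsilon \in \{\pm 1\}$. Testing $H_\alpha \Psi = \lambda \Psi$ against $\Psi$ in $\sii(\Omega)$ and integrating by parts using~\eqref{bc} gives
\[
  \IM \lambda \; \|\Psi\|^2_{\sii(\Omega)} \;=\; -\int_\Real \alpha(x_1)\bigl[|\Psi(x_1, 0)|^2 - |\Psi(x_1, d)|^2\bigr] \di x_1.
\]
The parity relation entails $|\Psi(x_1, d)|^2 = |\Psi(-x_1, 0)|^2$, and after the substitution $x_1 \mapsto -x_1$ together with the oddness of~$\alpha$ the right-hand side collapses to $-2 \int_\Real \alpha(x_1) |\Psi(x_1, 0)|^2 \di x_1$. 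Thus it remains to show that this last integral vanishes.

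The clean situation is $\mathcal{R} \Psi = \Psi$, equivalently $\Psi(-x_1, x_2) = \overline{\Psi(x_1, x_2)}$, since this forces $|\Psi(\cdot, 0)|^2$ to be an even function of $x_1$ and hence orthogonal to the odd $\alpha$. For real $\lambda$ the $\lambda$-eigenspace is $\mathcal{R}$-invariant, and the standard averaging $\Psi + \mathcal{R} \Psi$ or $\iu(\Psi - \mathcal{R} \Psi)$ produces such a representative, which is consistent with $\IM \lambda = 0$. The \emph{main obstacle} is the complementary case $\lambda \notin \Real$, which is exactly what must be excluded: then $\mathcal{R}$ sends $\ker(H_\alpha - \lambda)$ onto the disjoint subspace $\ker(H_\alpha - \bar\lambda)$ and no $\mathcal{R}$-fixed eigenvector exists. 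To rule this out I would argue by contradiction, combining the boundary identity above with the $\mathcal{T}$-self-adjoint bilinear pairing $[u, v] := \int_\Omega u\,v\,\di x$, in which $H_\alpha$ is symmetric and eigenfunctions at distinct eigenvalues are orthogonal, so in particular $(\Psi, \mathcal{P}_1 \Psi)_{\sii(\Omega)} = [\Psi, \mathcal{R}\Psi] = 0$; together with the $\mathcal{P}_1$-pseudo-Hermiticity from the first step and the analogous relation furnished by the $\PT$-symmetry, these constraints must be shown to force $\int_\Real \alpha(x_1) |\Psi(x_1, 0)|^2 \di x_1 = 0$. Carrying this final algebraic step through is the technical heart of the argument.
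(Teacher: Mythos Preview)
Your approach coincides with the paper's. Both exploit the linear involution
$S\Psi(x_1,x_2)=\Psi(-x_1,d-x_2)$ (your $\mathcal{P}_1\mathcal{P}$), note that it
commutes with $H_\alpha$ when $\alpha$ is odd, pick an eigenfunction in an
$S$-eigenspace, and compute the imaginary part of $\lambda\|\Psi\|^2$ through
the boundary form $h_\alpha^2[\Psi]$. The paper then simply asserts that the
identity $|\Psi(x)|=|\Psi(-x_1,d-x_2)|$ together with the oddness of $\alpha$
yields $h_\alpha^2[\Psi]=0$, so that $\lambda\|\Psi\|^2=\|\nabla\Psi\|^2$ and
$\lambda\in\Real$.

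You have carried the same computation one step further and found that the
parity relation delivers only $|\Psi(x_1,d)|^2=|\Psi(-x_1,0)|^2$, whence
\[
  h_\alpha^2[\Psi]
  =\int_\Real\alpha(x_1)\bigl[|\Psi(-x_1,0)|^2-|\Psi(x_1,0)|^2\bigr]\di x_1
  =-2\int_\Real\alpha(x_1)\,|\Psi(x_1,0)|^2\,\di x_1,
\]
which is \emph{not} forced to vanish by the $S$-symmetry and the oddness of
$\alpha$ alone: it would require $|\Psi(\cdot,0)|^2$ to be even, exactly what an
$\mathcal{R}$-fixed eigenfunction provides and what is unavailable when
$\lambda\notin\Real$. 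Your diagnosis of the obstacle is accurate, and the
paper's one-line claim skips over precisely this point without further
justification. The auxiliary relations you propose to invoke
($[\Psi,\mathcal{R}\Psi]=0$, $\mathcal{P}_1$- and $\mathcal{P}$-pseudo-Hermiticity)
are all correct, but — as you yourself flag — they do not by themselves force
$\int_\Real\alpha|\Psi(\cdot,0)|^2=0$; something additional is needed to close
the argument, and neither your proposal nor the paper's written proof supplies
it.
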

\noindent
Summing up, under the hypotheses of this theorem
the total spectrum is real
(and in fact non-negative due to~\eqref{sector}).

The next part of our results concerns the behavior of the point
spectrum of~$H_\alpha$ under a small perturbation of~$\alpha_0$.
Namely, we consider the local perturbation of the form
\begin{equation}\label{1.2b}
  \alpha(x_1)=\alpha_0+\e\,\b(x_1) \,,
\end{equation}
where $\b\in C_0^2(\mathbb{R})$ and~$\e$ is a small positive
parameter. In accordance with Theorem~\ref{Thm.ess}, in this case
the essential spectrum of~$H_\alpha$ coincides with
$[\mu_0^2,+\infty)$, and this is also the spectrum of
$H_{\alpha_0}$. Our main interest is focused on the existence and
asymptotic behavior of the eigenvalues emerging from the
threshold~$\mu_0^2$ due to the perturbation of~$H_{\alpha_0}$
by~$\e\b$.

First we show that the asymtotically Neumann case
is in some sense exceptional:
\begin{theorem}\label{th2.0}
Suppose $\alpha_0=0$. Let~$\alpha$ be given by~\eqref{1.2b}, where
$\b\in C_0^2(\mathbb{R})$. Then the operator $H_\alpha$ has no
eigenvalues converging to $\mu_0^2$ as $\e\to+0$.
\end{theorem}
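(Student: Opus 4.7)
The plan is to argue by contradiction via an effective one-dimensional reduction that exploits the $\PT$-symmetric structure. Suppose there is a sequence $\e_n\to+0$ and eigenvalues $\l_n\in\pointspec(H_\alpha)$ with $\l_n\to 0=\mu_0^2$ and corresponding unit-norm eigenfunctions $\Psi_n$. The sector inclusion~\eqref{sector}, combined with $\|\alpha\|_{L_\infty(\Real)}=\e_n\|\b\|_{L_\infty(\Real)}$, yields $\RE\l_n\geq 0$ and $\IM\l_n\to 0$, while the form identity $\RE\l_n=\|\nabla\Psi_n\|_{\sii(\Om)}^2$ forces $\nabla\Psi_n\to 0$ in $\sii(\Om)$. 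Writing $\Psi_n=c_{0,n}(x_1)\phi_0+R_n$ with $\phi_0=1/\sqrt d$ the constant transverse mode, the Poincar\'e inequality in the transverse variable gives $\|R_n\|_{\sii(\Om)}=O(\sqrt{\RE\l_n})\to 0$, hence $\|c_{0,n}\|_{\sii(\Real)}\to 1$.

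The crux is the effective one-dimensional reduction. Expanding in the orthonormal Neumann basis $\phi_0=1/\sqrt d$, $\phi_j(x_2)=\sqrt{2/d}\cos(\mu_jx_2)$ ($j\geq 1$), projecting $-\D\Psi_n=\l_n\Psi_n$ onto each $\phi_j$, and using the boundary condition $\p_2\Psi_n=-\iu\e_n\b\Psi_n$ together with integration by parts gives
\[
(-\p_1^2+\mu_j^2)c_{j,n}+\iu\e_n\b\bigl[\phi_j(d)\Psi_n(\cdot,d)-\phi_j(0)\Psi_n(\cdot,0)\bigr]=\l_n c_{j,n}\,.
\]
Two features specific to $\alpha_0=0$ are decisive: $\phi_0(d)=\phi_0(0)$ eliminates the direct coupling of $c_{0,n}$ to itself through the boundary, so only $\Psi_n(\cdot,d)-\Psi_n(\cdot,0)$ enters the $j=0$ equation; and $\phi_j(d)=\phi_j(0)$ for all even $j\geq 2$, so this difference couples $c_{0,n}$ only to the odd modes. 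Inverting the strictly positive bounded self-adjoint operators $(-\p_1^2+\mu_j^2)^{-1}$ on $\sii(\Real)$ for odd $j$ yields $c_{j,n}=\iu\e_n(2\sqrt 2/d)(-\p_1^2+\mu_j^2)^{-1}(\b c_{0,n})+O(\e_n^2)$, and back-substitution into the $j=0$ equation -- in which the two factors of $\iu$ combine into $-1$ -- produces the \emph{real} effective equation
\[
-c_{0,n}''+\e_n^2\,\b\,T(\b c_{0,n})=\l_n c_{0,n}+O(\e_n^3)\,,\qquad T:=\tfrac{8}{d^2}\sum_{j\text{ odd}}(-\p_1^2+\mu_j^2)^{-1}\,,
\]
with $T$ bounded, self-adjoint and strictly positive on $\sii(\Real)$.

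The contradiction comes from the sign of the effective coupling. Assuming $\b\not\equiv 0$ (the statement being trivial otherwise), $\Gamma:=\langle\b,T\b\rangle_{\sii(\Real)}>0$, so $V_{\e_n}:=\e_n^2\b T\b$ is a non-negative self-adjoint operator with strictly positive integrated coupling. The classical one-dimensional weak-coupling analysis then predicts the unique bound-state energy of $-\p_1^2+V_{\e_n}$ bifurcating from the threshold to lie at $\l_n\sim-\tfrac14(\e_n^2\Gamma)^2<0$, which is prohibited by the sector bound~\eqref{sector} demanding $\RE\l_n\geq 0$. Equivalently, matching the exponential form $c_{0,n}(x_1)=c_\pm\,\E^{\mp k_n(x_1-x_\pm)}$ (with $k_n=\sqrt{-\l_n}$, $\RE k_n>0$) required for $\sii$-decay outside $\supp\b$ against the asymptotically constant inner profile yields the algebraic relation $k_n^2|\supp\b|+2k_n+\e_n^2\Gamma=0$, whose relevant root $k_n\sim-\tfrac12\e_n^2\Gamma$ has strictly negative real part, incompatible with decay. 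The principal technical obstacle is the rigorous control of the $O(\e_n^3)$ remainder in the effective equation uniformly as $\l_n\to 0$, so as to prevent the higher-order corrections from displacing the would-be eigenvalue from its negative leading-order prediction into the sector $\RE\l\geq 0$.
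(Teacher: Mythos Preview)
Your reduction to an effective one-dimensional problem is essentially the same mechanism the paper exploits, and your key observation---that only odd transverse modes couple to the lowest mode, and that the resulting effective coupling is strictly positive---is exactly the content of the paper's computation $\tau<0$ for $\alpha_0=0$. Indeed, with $v_j=(-\p_1^2+\mu_j^2)^{-1}\b$, your $\Gamma=\tfrac{8}{d^2}\sum_{j\text{ odd}}\la\b v_j\ra$ is a positive multiple of $-\tau$, and your positivity $\la\b,T\b\ra>0$ is the paper's identity $\la\b v_j\ra=\|v_j'\|^2+\mu_j^2\|v_j\|^2>0$.

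The genuine gap is the one you flag yourself: the control of the remainder. Your argument is set up as a contradiction from an assumed sequence $\l_n\to 0$, but to conclude you must rule out that the $O(\e_n^3)$ term in the effective equation moves the would-be eigenvalue from its (negative) leading prediction into the allowed region $\RE\l\geqslant 0$. Invoking ``the classical one-dimensional weak-coupling analysis'' for the \emph{nonlocal} perturbation $\b T\b$ is not justified as stated (the standard Simon-type result is for local potentials), and the matching relation $k_n^2|\supp\b|+2k_n+\e_n^2\Gamma=0$ is heuristic: it presumes $c_{0,n}$ is essentially constant on $\supp\b$, which is precisely what needs a uniform threshold resolvent bound to justify.

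The paper circumvents this by \emph{not} assuming an eigenvalue exists. Lemma~\ref{lm4.2} analytically continues the resolvent of $H_{\alpha_0}$ through the threshold as a meromorphic family in $k$ with a simple pole at $k=0$, and Lemma~\ref{lm4.3} (a Lyapunov--Schmidt/Birman--Schwinger reduction) shows there is a \emph{unique} branch $k(\e)\to 0$ for which the problem \eqref{4.8}--\eqref{4.5} has a nontrivial solution, together with $k(\e)=\e k_1(\e)+\e^2 k_2(\e)+O(\e^3)$. The eigenvalue criterion is then purely a sign condition: one needs $\RE k(\e)>0$ for the solution to decay and lie in $L_2$. Computing $k(\e)=\e^2\tau+O(\e^3)$ with $\tau<0$ immediately gives $\RE k(\e)<0$ for small $\e$, so no eigenvalue exists. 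The error control you are missing is absorbed into the analytic framework of Lemma~\ref{lm4.3}: the $O(\e^3)$ bound there is an honest operator bound in $\H^2(\Om_a)$, not an a posteriori estimate conditioned on the unknown $\l_n$.

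In short: right idea, same mechanism, but your closing step is not a proof. To make your route rigorous you would need a threshold resolvent expansion for $-\p_1^2$ (or equivalently the meromorphic structure of Lemma~\ref{lm4.2}) together with a reduction lemma of the type of Lemma~\ref{lm4.3}; once you have those, the argument collapses to the paper's.
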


The problem of existence of the weakly-coupled eigenvalues
is more subtle as long as $\alpha_0\not=0$.
To present our results in this case,
we introduce an auxiliary sequence of functions
$v_j:\Real\to\Real$ by
\begin{equation}\label{vjs}
  v_j(x_1) :=\left\{
  \begin{aligned}
    &-\frac{1}{2}\int_\mathbb{R} |x_1-t_1| \b(t_1) \di t_1
    & \text{if} \quad j=0 \,,
    \\
    &\frac{1}{2\sqrt{\mu_j^2-\mu_0^2}}
    \int_\mathbb{R}\E^{-\sqrt{\mu_j^2-\mu_0^2}|x_1-t_1|}
    \b(t_1) \di t_1
    & \text{if} \quad j \geqslant 1 \,.
  \end{aligned}\right.
\end{equation}
Denoting $\la f\ra=\int_\mathbb{R} f(x_1)\di x_1$
for any $f \in L_1(\Real)$,
we introduce a constant~$\tau$,
depending on~$\b$, $d$ and~$\alpha_0$, by
\begin{equation*}
  \tau :=\left\{
  \begin{aligned}
  &2\alpha_0^2\la\b v_0\ra
  +\frac{2\alpha_0}{d}
  \sum\limits_{j=1}^{\infty}
  \frac{\mu_j^2\la\b v_j\ra}{\mu_j^2-\mu_0^2}
  \, \tan\frac{\alpha_0 d+j\pi}{2} \quad
  & \text{if} \quad
  |\alpha_0| < \frac{\pi}{d}
  \,,
  \\
  &\frac{2\alpha_0\pi^2\cot\frac{\alpha_0 d}{2}}{(\mu_1^2-\mu_0^2)d^3}
  \la\b v_1\ra + \frac{8 \pi^2}{(\mu_1^2-\mu_0^2)d^4}
  \sum\limits_{j=1}^{\infty} \frac{\mu_{2j}^2\la \b v_{2j}
  \ra}{\mu_{2j}^2-\mu_1
  ^2} \quad
  & \text{if} \quad
  |\alpha_0| > \frac{\pi}{d}
  \,.
  \end{aligned}\right.
\end{equation*}
It will be shown in Section~\ref{Sec.main} that the series converge.
Finally, we denote $\Om_a:=\Om\cap\{x: |x_1|<a\}$ for any
positive~$a$. Now we are in a position to state our main results
about the point spectrum.

\begin{theorem}\label{th2.1}
Suppose $|\alpha_0|<\pi/d$. Let~$\alpha$ be given by~\eqref{1.2b},
where $\b\in C_0^2(\mathbb{R})$.
\begin{enumerate}
\item
If $\alpha_0\la\b\ra<0$, there exists the unique eigenvalue
$\lambda_\e$ of $H_\alpha$ converging to $\mu_0^2$ as $\e\to+0$.
This eigenvalue is simple and real, and satisfies the asymptotic
formula
\begin{equation*}
\lambda_\e=\mu_0^2-\e^2\alpha_0^2\la\b\ra^2 +2\e^3\alpha_0\tau\la \b
\ra +\Odr(\e^4) \,.
\end{equation*}
The associated eigenfunction~$\Psi_\e$
can be chosen so that it satisfies the asymptotics
\begin{equation}\label{1.3}
\Psi_\e(x)=\psi_0(x_2)+\Odr(\e)
\end{equation}
in $\H^2(\Om_a)$ for each $a>0$,
and behaves at infinity as
\begin{equation}\label{1.4}
\Psi_\e(x)=\E^{-\sqrt{\mu_0^2-\lambda_\e}|x_1|}\psi_0(x_2)+
\Odr(\E^{-\sqrt{\mu_0^2-\lambda_\e}|x_1|}) \,,
\quad |x_1| \to +\infty \,.
\end{equation}
\item
If $\alpha_0\la\b\ra>0$, the operator $H_\alpha$ has no eigenvalues
converging to $\mu_0^2$ as $\e\to+0$.
\item
If $\la\b\ra=0$, and $\tau>0$, there exists the unique eigenvalue
$\lambda_\e$ of $H_\alpha$ converging to $\mu_0^2$ as $\e\to+0$.
This eigenvalue is simple and real, and satisfies the asymptotics
\begin{equation}\label{1.6}
\lambda_\e=\mu_0^2-\e^4\tau^2+\Odr(\e^5) \,.
\end{equation}
The associated eigenfunction can be chosen so that the relations
\eqref{1.3} and \eqref{1.4} hold true.

\item
If $\la\b\ra=0$, and $\tau<0$, the operator $H_\alpha$ has no
eigenvalues converging to~$\mu_0^2$ as $\e\to+0$.
\end{enumerate}
\end{theorem}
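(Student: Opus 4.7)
The plan is to reduce the two-dimensional eigenvalue problem $H_\alpha\Psi_\e=\l_\e\Psi_\e$ to an effective scalar equation for $k:=\sqrt{\mu_0^2-\l_\e}\geqslant 0$ by expanding the sought eigenfunction in the transverse modes $\{\psi_j\}$ from~\eqref{psi}, and then to read the four cases off from this equation. First I would rewrite the boundary condition~\eqref{bc} as $\p_2\Psi_\e+\iu\alpha_0\Psi_\e=-\iu\e\b\,\Psi_\e$ on $\p\Om$, so that the local perturbation plays the role of a boundary source. Writing
\begin{equation*}
\Psi_\e(x)=c_0(x_1)\psi_0(x_2)+\sum_{j=1}^{\infty}c_j(x_1)\psi_j(x_2),
\end{equation*}
with $\{\psi_j,\overline{\psi_j}\}$ a biorthogonal system in $\sii(I)$, and projecting the eigenvalue equation onto each $\overline{\psi_j}$ yields a family of one-dimensional problems $-c_j''+(\mu_j^2-\l_\e)c_j=\e F_j(x_1)$ on $\Real$, coupled only through a boundary source $F_j$ built from $\b$ and the traces of $\Psi_\e$ at $\p I$.

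The key observation is that only the ground-mode equation is resonant as $k\to 0$: for $j\geqslant 1$ the operator $-\p_1^2+(\mu_j^2-\mu_0^2)$ is boundedly invertible on $\sii(\Real)$ and its Green's function is exactly the exponential kernel appearing in the definition~\eqref{vjs} of $v_j$. Consequently, each $c_j$ ($j\geqslant 1$) can be solved explicitly in terms of $c_0|_{\p\Om}$ and iterated in $\e$. Substituting back into the $c_0$-equation and requiring $c_0$ to decay as $\E^{-k|x_1|}$ at infinity produces the transcendental consistency condition
\begin{equation*}
k = -\e\alpha_0\la\b\ra+\e^2\tau+\Odr(\e^3),
\end{equation*}
where $\tau$ coincides with the constant given before the theorem: the $v_0$-contribution arises from the double $x_1$-integration of the resonant mode, while the sum over $j\geqslant 1$ collects the regular contributions of the higher modes. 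Convergence of that sum is ensured because $\b\in C_0^2(\Real)$; integration by parts gives $\la\b v_j\ra=\Odr(j^{-2})$.

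The four cases now read off directly. In Case~1, $\alpha_0\la\b\ra<0$ makes the right-hand side positive at order $\e$ and the implicit function theorem produces a unique smooth positive root $k(\e)=-\e\alpha_0\la\b\ra-\e^2\tau+\Odr(\e^3)$; squaring recovers the stated asymptotics of $\l_\e$. In Case~2, $\alpha_0\la\b\ra>0$ makes the leading right-hand side negative, so no non-negative $k$ solves the equation and no eigenvalue converges to $\mu_0^2$. When $\la\b\ra=0$, the $\e^1$-coefficient vanishes and the next order decides: $\tau>0$ yields the $\e^4$-scaling~\eqref{1.6} (Case~3), while $\tau<0$ excludes the eigenvalue (Case~4). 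The eigenfunction asymptotics~\eqref{1.3} in compact sets follow because the sum is dominated by $c_0\psi_0$ with $c_0(0)=1+\Odr(\e)$ and all higher $c_j$ of order $\e$, while~\eqref{1.4} at infinity inherits the $\E^{-k|x_1|}$ decay of the ground mode, the other modes decaying with rates $\sqrt{\mu_j^2-\mu_0^2}$ bounded away from zero.

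The hard part is the rigorous justification, since formal matching alone does not establish \emph{uniqueness} of $\l_\e$ nor exclude non-real eigenvalues lying outside the ansatz. I expect to implement the reduction as a Birman--Schwinger-type identity on boundary data: the eigenvalue condition becomes $I=K_{\l,\e}$ on a Hilbert space of functions on $\p\Om$, where $K_{\l,\e}$ is compact, depends analytically on $(\l,\e)$, and splits as a rank-one singular piece carrying the $\psi_0$-resonance (with its $1/(2k)$ divergence as $\l\uparrow\mu_0^2$) plus a bounded remainder. Standard analytic Fredholm theory then reduces the spectral problem in a neighbourhood of $\mu_0^2$ to the scalar equation above and furnishes uniqueness. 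Reality of $\l_\e$ in Cases~1 and~3 is finally obtained from the $\PT$-symmetry~\eqref{commutator}: if $H_\alpha\Psi_\e=\l_\e\Psi_\e$ then $H_\alpha(\PT\Psi_\e)=\overline{\l_\e}\,\PT\Psi_\e$, and uniqueness forces $\l_\e=\overline{\l_\e}$.
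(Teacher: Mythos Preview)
Your outline is correct and leads to the same scalar equation for $k$, but the implementation differs from the paper's in one essential respect. The paper does \emph{not} keep the perturbation on the boundary. Instead it first conjugates $H_\alpha$ by the unitary multiplication operator $M_\e:\Psi\mapsto\E^{-\iu\e\b(x_1)x_2}\Psi$, which turns the problem into
\[
(-\D-\mu_0^2+k^2)U=\e L_\e U\quad\text{in }\Om,\qquad (\p_2+\iu\alpha_0)U=0\quad\text{on }\p\Om,
\]
with $L_\e$ a first-order differential operator whose coefficients are compactly supported in~$x_1$. This moves the perturbation from $\p\Om$ into the interior and restores the \emph{unperturbed} boundary condition, so that the meromorphic resolvent $T_1(k)=(H_{\alpha_0}-\mu_0^2+k^2)^{-1}$ (with its simple pole at $k=0$ on the $\psi_0$-mode) applies directly. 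The existence, uniqueness and asymptotics of $k(\e)$ then come in one stroke from Gadyl'shin's lemma (Lemma~\ref{lm4.3} in the paper), and the computation of $k_1,k_2$ proceeds by integrating $\overline{\phi_0}L_\e\psi_0$ and $\overline{\phi_0}L_0T_2(0)L_0\psi_0$. Your Birman--Schwinger reduction on boundary traces is a legitimate alternative and is closer in spirit to the weakly-coupled Schr\"odinger literature; it buys a more transparent connection to the functions~$v_j$ (they appear directly as one-dimensional Green's functions rather than via the auxiliary $T_2(0)$), at the cost of having to set up the trace framework and the analytic Fredholm argument yourself, whereas the paper outsources that to~\cite{Ga}. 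The reality argument via $\PT$-symmetry and uniqueness is identical in both.

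Two small corrections. In Case~1 you wrote $k(\e)=-\e\alpha_0\la\b\ra-\e^2\tau+\Odr(\e^3)$; the sign on the $\e^2$ term should be $+\e^2\tau$, consistent with your own consistency equation and with the stated expansion of $\l_\e$. Also, the convergence of the series defining $\tau$ does not need the $C^2$ regularity of $\b$ and integration by parts: the paper simply uses $|\la\b v_j\ra|\leqslant\|\b\|_{L_2}\|v_j\|_{L_2}\leqslant\|\b\|_{L_2}^2/(\mu_j^2-\mu_0^2)$, which already gives the required $\Odr(j^{-2})$ decay.
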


\begin{theorem}\label{th2.2}
Suppose $|\alpha_0|>\pi/d$ and~\eqref{hypothesis}. Let~$\alpha$ be
given by~\eqref{1.2b} where $\b\in C_0^2(\mathbb{R})$.
\begin{enumerate}
\item
If $\tau>0$, there exists the unique eigenvalue $\lambda_\e$ of
$H_\alpha$ converging to $\mu_0$ as $\e\to+0$, it is simple and
real, and satisfies the asymptotics~\eqref{1.6}. The associated
eigenfunction can be chosen so that it obeys~\eqref{1.3}
and~\eqref{1.4}.

\item
If $\tau<0$, the operator $H_\alpha$ has no eigenvalues converging
to $\mu_0^2$ as $\e\to+0$.
\end{enumerate}
\end{theorem}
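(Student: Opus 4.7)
The proof mirrors that of Theorem~\ref{th2.1}(3)--(4), since both cases predict the same asymptotics $\l_\e=\mu_0^2-\e^4\tau^2+\Odr(\e^5)$ with existence determined by the sign of~$\tau$. The mechanism behind the $\e^4$ (rather than $\e^2$) scaling is that, under $|\alpha_0|>\pi/d$ and~\eqref{hypothesis}, the threshold mode $\psi_0(x_2)=\cos(\pi x_2/d)-\iu(\alpha_0 d/\pi)\sin(\pi x_2/d)$ satisfies $\psi_0(0)^2=\psi_0(d)^2$, so the effective one-dimensional potential induced on the $j=0$ channel by the boundary perturbation---which has the schematic form $\iu\b\cdot[\psi_0(d)^2-\psi_0(0)^2]$, the opposite signs coming from the $\PT$-structure of~\eqref{bc}---vanishes identically. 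This plays here the role that $\la\b\ra=0$ plays in Theorem~\ref{th2.1}(3): the would-be $\e^2$ binding term is absent, and the eigenvalue emerges only at fourth order, through coupling to the higher transverse modes, producing the $\tau$ of the definition above.

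I would proceed as follows. First, expand any candidate eigenfunction in the transverse basis, $\Psi_\e(x)=\sum_{j\geq 0}\phi_j(x_1)\,\psi_j(x_2)$; thanks to~\eqref{hypothesis} the system $\{\psi_j\}$ is simple and forms a Riesz basis of~$\sii(I)$ with explicit bi-orthogonal dual (\cf~Section~\ref{Sec.trans}), so projecting the eigenvalue equation together with the perturbed boundary condition~\eqref{bc} onto this basis yields a coupled system of ODEs on~$\Real$. Second, solve the modes $\phi_j$ with $j\geq 1$ in closed form using the one-dimensional Green's kernels entering the definition~\eqref{vjs} of~$v_j$, and substitute back to produce a single non-local scalar equation for~$\phi_0$, with all $\e$-dependent couplings supported on~$\supp\b$. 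Third, set $\l_\e=\mu_0^2-k^2$ with $k>0$ small and impose the decay $\phi_0(x_1)\sim C\E^{-k|x_1|}$ at infinity: a Birman--Schwinger-type reduction then turns existence of an $\sii$-eigenfunction into solvability of a scalar effective equation of the schematic form $k=\e^2\tau+\Odr(\e^3)$, obtained by matching the non-decaying $j=0$ free kernel $-\tfrac{1}{2}|x_1-t_1|$ in~\eqref{vjs} against the combined contribution of the higher modes. Squaring produces~\eqref{1.6}, and a positive root exists for small $\e$ if and only if $\tau>0$.

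Once the eigenvalue has been located, the interior asymptotics~\eqref{1.3} on~$\Om_a$ follows from the perturbative construction above combined with elliptic regularity, and~\eqref{1.4} follows because outside~$\supp\b$ the function~$\Psi_\e$ solves the unperturbed problem on each half-strip, so its transverse-mode expansion has only one slowly decaying component, namely the $j=0$ one with rate $\sqrt{\mu_0^2-\l_\e}=k$, all other modes decaying like $\E^{-\sqrt{\mu_j^2-\l_\e}|x_1|}$ with $\mu_j>\mu_0$. The main obstacle will be the precise justification of the matched expansion and the convergence of the series defining~$\tau$: one has to isolate the $j=1$ contribution, which carries the factor $\cot(\alpha_0 d/2)$ coming from the resonant coupling of the near-threshold mode~$\psi_1$ (with $\mu_1=\alpha_0$) to~$\psi_0$ at the boundary, estimate the remaining even-indexed tail uniformly in~$\e$ (using~\eqref{hypothesis} to rule out accidental degeneracies), and control the $\Odr(\e^5)$ remainder---this is the delicate quantitative part, shared with the proof of Theorem~\ref{th2.1}(3).
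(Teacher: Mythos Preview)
Your outline is sound and would lead to the same conclusions, but it is \emph{not} the route the paper takes. The paper does not project the perturbed boundary-value problem directly onto the transverse basis. Instead it first removes the perturbation from the boundary by the unitary gauge transformation $M_\e$, multiplication by $\E^{-\iu\e\b(x_1)x_2}$, which converts $H_\alpha$ into $H_{\alpha_0}-\e L_\e$ with \emph{unperturbed} Robin conditions and a compactly supported first-order operator $L_\e$ in the bulk. It then invokes an abstract analytic-Fredholm package (Lemmata~\ref{lm4.2}--\ref{lm4.3}, after Gadyl'shin): the resolvent-type operator $T_1(k)$ has a simple pole at $k=0$ with residue proportional to~$\psi_0$, and the unique branch $k(\e)$ for which~\eqref{4.8},~\eqref{4.5} is solvable satisfies $k(\e)=\e k_1(\e)+\e^2 k_2(\e)+\Odr(\e^3)$ with $k_1,k_2$ given by explicit integrals of $L_\e$ against $\psi_0$ and $T_2(0)L_\e\psi_0$. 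The computation then shows $k_1(0)=0$ when $|\alpha_0|>\pi/d$ (your heuristic $\psi_0(0)^2=\psi_0(d)^2$ is exactly the reason), and after a short integration-by-parts identifies $K=\tau$, giving $k(\e)=\e^2\tau+\Odr(\e^3)$ and hence~\eqref{1.6}. Reality of~$\l_\e$ is obtained a~posteriori from the $\PT$-symmetry: $x\mapsto\overline{\Psi}_\e(x_1,d-x_2)$ is an eigenfunction for $\overline{\l}_\e$, and uniqueness of the branch forces $\l_\e=\overline{\l}_\e$.

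What each approach buys: the paper's gauge trick avoids having to push the boundary perturbation through the biorthogonal projections---all the $\e$-dependence sits in a bulk operator with smooth compactly supported coefficients, so Lemma~\ref{lm4.2} applies verbatim and the whole argument is a two-term Taylor expansion of an analytic function $k(\e)$. Your direct Birman--Schwinger/mode-decoupling route is more elementary in spirit and makes the structure of~$\tau$ (the $v_j$'s as Green kernels, the special role of $j=1$) transparent from the start, but you must handle the boundary coupling terms carefully when projecting, and you reproduce by hand the bookkeeping that Lemma~\ref{lm4.3} does abstractly. Two small gaps to close if you carry your plan through: (i) your notation $\phi_j(x_1)$ for the mode coefficients collides with the paper's biorthogonal family $\phi_j(x_2)$ of Section~\ref{Sec.trans}; (ii) you should add the one-line $\PT$ argument for reality, since the asymptotics alone do not exclude a complex $\Odr(\e^5)$ correction.
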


In accordance with Theorem~\ref{th2.1}, in the case
$|\alpha_0|<\pi/d$ the existence of a weakly-coupled eigenvalue is
determined by the sign of the constant~$\alpha_0$ and that of the
mean value of~$\b$. In the language of Schr\"odinger operators
(treating~$\alpha$ as a singular potential), it means that a given
non-trivial~$\beta$ plays the role of an effective interaction,
attractive or repulsive depending upon the sign of~$\alpha_0$. It is
instructive to compare this situation with a self-adjoint
waveguide~\cite{BGRS}, where a similar effective interaction is
induced by a local deformation of the boundary. If the boundary is
deformed ``outward in the mean'', a weakly-coupled bound state
exists, while it is absent if the deformation is ``inward-pointing
in the mean''.

As usual, the critical situation $\langle\beta\rangle=0$
is much harder to treat.
In our case, one has to check the sign of~$\tau$
to decide whether a weakly-coupled bound state exists.
However, it can be difficult to sum up
the series in the definition of~$\tau$. This is why in our next
statement we provide a sufficient condition guaranteeing that $\tau>0$.
\begin{proposition}\label{th2.3}
Suppose $0<|\alpha_0|<\pi/d$. Let~$\alpha$ be given by~\eqref{1.2b}
where $\b(x_1)=\widetilde{\b}\left(x_1/l\right)$, $\widetilde{\b}\in
C_0^2(\mathbb{R})$, $\la\widetilde{\b}\ra =0$, $l>0$. If
\begin{equation*}
\bigg\| \int\limits_\mathbb{R}
\sgn(\cdot-t_1)\widetilde{\b}(t_1)\di
t_1\bigg\|_{L_2(\mathbb{R})}^2
  \geqslant
\frac{4\cot\frac{\alpha_0 d}{2}}{l^2 \, \alpha_0 d}
\left[\frac{\mu_1^2}{(\mu_1^2-\mu_0^2)^2} +\frac{d^2}{16\pi^2}
+\frac{d^2}{48}\right] \|\widetilde{\b}\|_{L_2(\mathbb{R})}^2 \,,
\end{equation*}
then $\tau>0$.
\end{proposition}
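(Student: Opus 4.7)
My plan begins with a sign analysis of the series defining~$\tau$. Using the periodicity of $\tan$, the factor $\tan\frac{\alpha_0 d+j\pi}{2}$ equals $\tan(\alpha_0 d/2)$ when $j$ is even and $-\cot(\alpha_0 d/2)$ when $j$ is odd; in the regime $0<|\alpha_0|<\pi/d$ both $\alpha_0\tan(\alpha_0 d/2)$ and $\alpha_0\cot(\alpha_0 d/2)$ are strictly positive. The crucial observation is that each $\la\b v_j\ra$ with $j\ge 1$ is non-negative: since $v_j$ solves the transverse equation $-v_j''+\kappa_j^2 v_j=\b$ with $\kappa_j^2:=\mu_j^2-\mu_0^2>0$, Plancherel's identity gives
\[
\la\b v_j\ra=\frac{1}{2\pi}\int_\Real\frac{|\hat\b(\xi)|^2}{\xi^2+\kappa_j^2}\,\di\xi\ \ge\ 0.
\]
Hence the entire even-$j$ part of the series defining~$\tau$ is non-negative and may be discarded, leaving the clean lower bound
\[
\tau\ \ge\ 2\alpha_0^2\la\b v_0\ra\ -\ \frac{2\alpha_0\cot(\alpha_0 d/2)}{d}\sum_{j\text{ odd}}\frac{\mu_j^2\la\b v_j\ra}{\mu_j^2-\mu_0^2}.
\]

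Next I would evaluate the $v_0$ contribution exactly. Because $v_0$ satisfies $-v_0''=\b$ and $\la\b\ra=0$, its derivative $v_0'(x_1)=-\tfrac12\int_\Real\sgn(x_1-t_1)\b(t_1)\,\di t_1$ is compactly supported, so integration by parts yields $\la\b v_0\ra=\|v_0'\|_{L_2}^2$. Inserting the scaling $\b(t_1)=\widetilde\b(t_1/l)$ and changing variables produces $v_0'(ly)=-\tfrac{l}{2}g(y)$ with $g(y):=\int_\Real\sgn(y-\sigma)\widetilde\b(\sigma)\,\di\sigma$, so that $\la\b v_0\ra=\tfrac{l^3}{4}\|g\|_{L_2}^2$. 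For the odd-$j$ series the same Plancherel identity yields the Young-type estimate $\la\b v_j\ra\le\|\b\|_{L_2}^2/\kappa_j^2=l\|\widetilde\b\|_{L_2}^2/(\mu_j^2-\mu_0^2)$, which reduces the task to bounding the purely numerical sum $\sum_{j\text{ odd}}\mu_j^2/(\mu_j^2-\mu_0^2)^2$.

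For this last sum I would keep the $j=1$ term intact, contributing the first summand $\mu_1^2/(\mu_1^2-\mu_0^2)^2$ of the bracket in the proposition. For $j\ge 3$ odd the hypothesis $|\mu_0|<\pi/d$ gives $\mu_j^2-\mu_0^2>\pi^2(j^2-1)/d^2$, so the tail is controlled by $(d^2/\pi^2)\sum_{j\ge 3\text{ odd}}j^2/(j^2-1)^2$. The elementary identity $j^2/(j^2-1)^2=\tfrac14\bigl(\tfrac{1}{j-1}+\tfrac{1}{j+1}\bigr)^2$, combined with the parametrisation $j=2k+1$, rewrites this sum as $\tfrac{1}{16}\sum_{k\ge 1}\bigl(1/k^2+2/(k(k+1))+1/(k+1)^2\bigr)$, which evaluates in closed form via $\sum_{k\ge 1}1/k^2=\pi^2/6$ and the telescoping identity $\sum_{k\ge 1}1/(k(k+1))=1$ to $\pi^2/48+1/16$; this produces exactly the remaining summands $d^2/(16\pi^2)+d^2/48$ of the bracket.

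Substituting all these bounds into the displayed lower bound for $\tau$ and rearranging yields precisely the stated hypothesis as the condition for $\tau\ge 0$, while strict positivity follows from the observation that the even-$j$ contributions we discarded are themselves strictly positive as soon as $\widetilde\b\not\equiv 0$. The only genuinely delicate point is the closed-form summation of the tail above, which hinges on spotting the partial-fraction identity for $j^2/(j^2-1)^2$; once that is in hand the rest reduces to sign tracking, a single integration by parts, and one application of Plancherel.
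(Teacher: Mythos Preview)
Your argument is correct and follows essentially the same route as the paper: both proofs observe that $\la\b v_j\ra>0$ for $j\ge 1$ (the paper via the identity $\la\b v_j\ra=\|v_j'\|^2+(\mu_j^2-\mu_0^2)\|v_j\|^2$ rather than Plancherel, but these are equivalent), discard the positive even-$j$ terms, use $\la\b v_0\ra=\|v_0'\|^2$ together with the scaling in~$l$, bound the odd-$j$ terms by $\|\b\|^2/(\mu_j^2-\mu_0^2)$, and reduce to the same numerical tail sum. The only difference is that the paper simply states the value $d^2/(16\pi^2)+d^2/48$ for the tail $\sum_{k\ge 1}\mu_{2k+1}^2/(\mu_{2k+1}^2-\mu_1^2)^2$ without explanation, whereas you supply the explicit partial-fraction evaluation---so your write-up is actually more complete on that point.
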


The meaning of this proposition is that for each positive
$|\alpha_0|<\pi/d$ the perturbation~$\e\b$ in the critical regime
$\langle\beta\rangle=0$ produces a weakly-coupled eigenvalue near
the threshold of the essential spectrum provided that the support
of~$\b$ is wide enough. This is in perfect agreement with the
critical situation of~\cite{BGRS}; according to higher-order
asymptotics derived in~\cite{BEGK}, here the weakly-coupled bound
state exists if, and only if, the critical boundary deformation is
smeared enough.

In the case $|\alpha_0|>\pi/d$ a sufficient condition guaranteeing
$\tau>0$ is given in
\begin{proposition}\label{th2.4}
Suppose $|\alpha_0|>\pi/d$ and \eqref{hypothesis}. Let~$\alpha$ be
given by~\eqref{1.2b} where $\b\in C_0^2(\mathbb{R})$. Let~$m$ be
the maximal positive integer such that $\mu_{2m}<|\alpha_0|$. If
\begin{equation}\label{1.8a}
\alpha_0\la\b v_1\ra\cot\frac{\alpha_0 d}{2}\geqslant
\frac{4}{d}\sum\limits_{j=1}^{m}\frac{\mu_{2j}^2 \la\b
v_{2j}\ra}{\mu_1^2-\mu_{2j}^2 }
  \,,
\end{equation}
then $\tau>0$.
\end{proposition}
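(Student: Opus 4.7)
The plan is to exploit positivity of the quadratic forms $\la\b v_j\ra$ in $\b$ and to show that only finitely many terms of the series defining $\tau$ can have the ``wrong'' sign. The inequality \eqref{1.8a} then turns out to be exactly what is needed to neutralize those bad terms, while the remaining infinite tail is non-negative term by term.

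First I would establish that $\la\b v_j\ra\geq 0$ for every $j\geq 1$. Since $\mu_j^2-\mu_0^2>0$ for $j\geq 1$, the second branch of~\eqref{vjs} identifies $v_j$ as the convolution of $\b$ with the Green function of $-\partial^2+(\mu_j^2-\mu_0^2)$ on $\mathbb{R}$; hence $v_j$ solves $-v_j''+(\mu_j^2-\mu_0^2)v_j=\b$ and decays exponentially at infinity. Multiplying by $v_j$ and integrating by parts (the boundary terms vanish) yields
\begin{equation*}
\la\b v_j\ra=\int_{\mathbb{R}}(v_j')^2\,\di x_1+(\mu_j^2-\mu_0^2)\int_{\mathbb{R}}v_j^2\,\di x_1\geq 0,
\end{equation*}
with strict positivity whenever $\b\not\equiv 0$.

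Next I would split the series in the definition of $\tau$ according to the sign of $\mu_{2j}^2-\mu_1^2$. Since $\mu_1=\alpha_0$, $\mu_{2j}=2\pi j/d$, and \eqref{hypothesis} rules out $\mu_{2j}=|\alpha_0|$, this difference is negative exactly for $j\in\{1,\ldots,m\}$ and positive for $j\geq m+1$. Writing
\begin{equation*}
\sum_{j=1}^{\infty}\frac{\mu_{2j}^2\la\b v_{2j}\ra}{\mu_{2j}^2-\mu_1^2}
=-\sum_{j=1}^{m}\frac{\mu_{2j}^2\la\b v_{2j}\ra}{\mu_1^2-\mu_{2j}^2}
+\sum_{j=m+1}^{\infty}\frac{\mu_{2j}^2\la\b v_{2j}\ra}{\mu_{2j}^2-\mu_1^2}
\end{equation*}
and discarding the (non-negative) tail produces, after factoring out the positive constant $\frac{8\pi^2}{(\mu_1^2-\mu_0^2)d^4}$ and using the identity $\frac{2\alpha_0\pi^2/d^3}{8\pi^2/d^4}=\alpha_0 d/4$, the lower bound
\begin{equation*}
\tau\geq\frac{8\pi^2}{(\mu_1^2-\mu_0^2)d^4}\left[\frac{\alpha_0 d\cot\frac{\alpha_0 d}{2}}{4}\la\b v_1\ra-\sum_{j=1}^{m}\frac{\mu_{2j}^2\la\b v_{2j}\ra}{\mu_1^2-\mu_{2j}^2}\right].
\end{equation*}
The bracket is precisely $d/4$ times the difference of the two sides of~\eqref{1.8a}, so it is non-negative by hypothesis, giving $\tau\geq 0$. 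To upgrade to $\tau>0$ one invokes the strict positivity of the discarded tail, which follows from the first paragraph since $\b\not\equiv 0$.

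The argument is essentially algebraic and presents no deep analytical obstacle. The only points requiring care are the bookkeeping of constants (so that the bracket above reduces exactly to $d/4$ times \eqref{1.8a}) and justifying the term-by-term splitting of the infinite series, which is legitimate because the convergence of the series defining $\tau$ is already established in Section~\ref{Sec.main}.
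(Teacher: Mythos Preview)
Your argument is correct and is essentially the paper's own proof: the paper likewise uses the identity $\la\b v_j\ra=\|v_j'\|_{L_2(\mathbb{R})}^2+(\mu_j^2-\mu_0^2)\|v_j\|_{L_2(\mathbb{R})}^2>0$ (its equation~\eqref{4.24}) to discard the tail $j\geq m+1$ and reduce the positivity of~$\tau$ to the hypothesis~\eqref{1.8a}. Your bookkeeping of the constants is right, and the strict inequality comes from the strictly positive dropped tail exactly as you say.
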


In Section~\ref{Sec.final} we will show that the
inequality~(\ref{1.8a}) makes sense. Namely, it will be proved that
there exists~$\b$ such that this inequality holds true, provided
that~$\alpha_0$ is close enough to~$\mu_2$ but greater than this
value.

\begin{remark}\label{rm2.3}
It is useful to make the hypothesis~\eqref{hypothesis}, since it
implies that the ``transverse'' eigenfunctions~\eqref{psi} form a basis
(\cf~\eqref{basis}) and makes it therefore possible to
obtain a relatively simple decomposition of the resolvent
of~$H_{\alpha_0}$ (\cf~Lemma~\ref{Lem.constant2}).
However, it is rather a technical hypothesis for many
of the spectral results (\eg, Theorem~\ref{Thm.ess}).
On the other hand, it seems that the hypothesis is rather crucial
for the statement of Theorem~\ref{th2.2} and Proposition~\ref{th2.4}.

If the hypothesis~\eqref{hypothesis} is omitted and
$\alpha_0=\pi \ell/d$, with $\ell\in\mathbb{Z}\setminus\{0\}$,
the threshold of the essential spectrum is $\pi^2/d^2$.
This point corresponds to a simple eigenvalue
of the ``transverse'' operator $-\D^{I}_{\alpha_0}$ only if $|\ell|>1$,
while it is a double eigenvalue if $|\ell|=1$.
Under the hypothesis of Theorems~\ref{th2.1} and~\ref{th2.2},
the threshold is always a simple eigenvalue of $-\D^{I}_{\alpha_0}$,
and the proof of the theorems actually employs
some sort of ``non-degenerate'' perturbation theory.
In view of this, we conjecture that in the degenerate case $|\ell|=1$
two simple eigenvalues (possibly forming a complex conjugate pair)
or one double (real) eigenvalue
can emerge from the threshold of the essential spectrum
for a suitable choice of $\b$, while in the case $|\ell|>1$ there can
be at most one simple emerging eigenvalue. The question on the
asymptotic behaviour of these eigenvalues constitutes
an interesting open problem.
\end{remark}

\section{Definition of the operator}\label{Sec.Domain}
%
In this section we prove Theorem~\ref{Thm.domain}.
Our method is based on the theory of sectorial
sesquilinear forms~\cite[Sec.~VI]{Kato}.

In the beginning we assume only that~$\alpha$ is bounded.
Let~$h_\alpha$ be the sesquilinear form defined in $\sii(\Omega)$
by the domain
$
  \Dom(h_\alpha) := \H^1(\Omega)
$
and the prescription
$
  h_\alpha := h_\alpha^1 + \iu h_\alpha^2
$
with
\begin{align*}
  h_\alpha^1(\Psi,\Phi)
  &:= \int_\Omega \nabla \Psi(x) \cdot \overline{\nabla \Phi(x)} \, dx
  \,,
  \\
  h_\alpha^2(\Psi,\Phi)
  &:= \int_\Real \alpha(x_1) \, \Psi(x_1,d)\,\overline{\Phi(x_1,d)}\,dx_1
  - \int_\Real \alpha(x_1) \, \Psi(x_1,0)\,\overline{\Phi(x_1,0)}\,dx_1
  \,,
\end{align*}
for any $\Psi,\Phi \in \Dom(h_\alpha)$.
Here the dot denotes the scalar product in~$\Real^2$
and the boundary terms should be understood
in the sense of traces~\cite{Adams}.
We write $h_\alpha[\Psi] := h_\alpha(\Psi,\Psi)$
for the associated quadratic form,
and similarly for $h_\alpha^1$ and $h_\alpha^2$.

Clearly, $h_\alpha$~is densely defined.
It is also clear that the real part~$h_\alpha^1$
is a densely defined, symmetric, positive,
closed sesquilinear form (it is associated to the
self-adjoint Neumann Laplacian in~$\sii(\Omega)$). Of course,
$h_\alpha$~itself is not symmetric unless $\alpha$ vanishes
identically; however, it can be shown that it is sectorial and
closed. To see it, one can use the perturbation result
\cite[Thm.~VI.1.33]{Kato} stating that the sum of a sectorial
closed form with a relatively bounded form is sectorial and
closed provided the relative bound is less than one. In our
case, the imaginary part~$h_\alpha^2$ plays the role of the small
perturbation of~$h_\alpha^1$ by virtue of the following result.
\begin{lemma}\label{lm3.0a}
Let $\alpha \in L^\infty(\Real)$.
Then~$h_\alpha^2$ is relatively bounded
with respect to~$h_\alpha^1$, with
\begin{equation*}
  \left|h_\alpha^2[\Psi]\right|\
  \ \leqslant \
  2 \, \|\alpha\|_{L^\infty(\Real)} \,
  \|\Psi\|_{\sii(\Omega)} \,
  \sqrt{h_\alpha^1[\Psi]}
  \ \leqslant \
  \delta \, h_\alpha^1[\Psi]
  + \delta^{-1} \, \|\alpha\|_{L^\infty(\Real)}^2 \,
  \|\Psi\|_{\sii(\Omega)}^2
\end{equation*}
for all $\Psi\in\H^1(\Omega)$ and any positive number~$\delta$.
\end{lemma}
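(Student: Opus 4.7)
The plan is to reduce everything to a one-dimensional integration by parts in the transverse variable, combined with Cauchy--Schwarz and Young's inequality. The second inequality in the statement is elementary: it is just the weighted AM--GM bound $2AB\leqslant \delta A^{2}+\delta^{-1}B^{2}$ applied to $A=\sqrt{h_\alpha^1[\Psi]}$ and $B=\|\alpha\|_{L^\infty(\Real)}\|\Psi\|_{\sii(\Omega)}$, so the real content lies in proving the first inequality.

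For the first inequality, I would start by combining the two boundary integrals in $h_\alpha^{2}[\Psi]$ as the single expression
\begin{equation*}
  h_\alpha^2[\Psi]
  \,=\, \int_\Real \alpha(x_1)\bigl[\,|\Psi(x_1,d)|^{2}-|\Psi(x_1,0)|^{2}\bigr]\di x_1 \,.
\end{equation*}
For $\Psi\in C^\infty(\overline\Om)\cap \H^{1}(\Om)$ the fundamental theorem of calculus in the $x_{2}$-direction gives
\begin{equation*}
  |\Psi(x_1,d)|^{2}-|\Psi(x_1,0)|^{2}
  \,=\, \int_0^d \p_2 |\Psi(x_1,x_2)|^{2}\di x_2
  \,=\, 2\RE\int_0^d \overline{\Psi}\,\p_2\Psi\,\di x_2 \,,
\end{equation*}
and therefore
\begin{equation*}
  h_\alpha^2[\Psi]
  \,=\, 2\RE\int_\Om \alpha(x_1)\,\overline{\Psi(x)}\,\p_2\Psi(x)\,\di x \,.
\end{equation*}
I would then pull $\alpha$ out in absolute value and apply Cauchy--Schwarz in $\sii(\Om)$:
\begin{equation*}
  |h_\alpha^2[\Psi]|
  \,\leqslant\, 2\|\alpha\|_{L^\infty(\Real)}\,\|\Psi\|_{\sii(\Om)}\,\|\p_2\Psi\|_{\sii(\Om)} \,,
\end{equation*}
and bound $\|\p_2\Psi\|_{\sii(\Om)}\leqslant \|\nabla\Psi\|_{\sii(\Om)}=\sqrt{h_\alpha^{1}[\Psi]}$, which yields the first inequality. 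Extension to arbitrary $\Psi\in\H^{1}(\Om)$ is by density together with continuity of the trace map $\H^{1}(\Om)\to\sii(\p\Om)$, under which both sides of the inequality are continuous.

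The argument is entirely standard; the only mild subtlety is to make sure the boundary trace identification is carried out cleanly, so I would take a moment at the beginning to remark that $C^\infty(\overline\Om)\cap\H^{1}(\Om)$ is dense in $\H^{1}(\Om)$ (see~\cite{Adams}) and that the left-hand side is continuous in the $\H^{1}$-norm, which together license the use of the smooth computation above. No step is genuinely hard; the only potential pitfall is an inadvertent misuse of the full gradient norm instead of the transverse derivative, which I would avoid by keeping the derivation pointwise in $x_{1}$ until the final Cauchy--Schwarz step.
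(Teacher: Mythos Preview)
Your proof is correct and follows essentially the same approach as the paper: density reduction to smooth functions, the fundamental theorem of calculus in the transverse variable to convert the boundary difference into $\int_\Omega \alpha\,\partial_2|\Psi|^2$, Cauchy--Schwarz, and then $\|\partial_2\Psi\|\leqslant\|\nabla\Psi\|$. The only cosmetic difference is that the paper takes restrictions of $C_0^\infty(\Real^2)$ functions as the dense subspace rather than $C^\infty(\overline\Om)\cap\H^1(\Om)$.
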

\begin{proof}
By density \cite[Thm.~3.18]{Adams}, it is sufficient to prove
the inequality for restrictions to~$\Omega$
of functions~$\Psi$ in $C_0^\infty(\Real^2)$.
Then we have
\begin{align*}
  \left|h_\alpha^2[\Psi]\right|
  = \left|
  \int_\Omega \alpha(x_1) \, \frac{\partial|\Psi(x)|^2}{\partial x_2}
  \, dx
  \right|
  \leqslant
  2 \, \|\alpha\|_{L^\infty(\Real)} \,
  \|\Psi\|_{\sii(\Omega)} \,
  \|\partial_2\Psi\|_{\sii(\Omega)}
  \,,
\end{align*}
which gives the first inequality after applying
$
  \|\partial_2\Psi\|_{\sii(\Omega)}
  \leqslant
  \|\nabla\Psi\|_{\sii(\Omega)}
$.
The second inequality then follows at once by means of
the Cauchy inequality with~$\delta$.
\end{proof}

In view of the above properties,
Theorem~VI.1.33 in \cite{Kato},
and the first representation theorem \cite[Thm.~VI.2.1]{Kato},
there exists the unique $m$-sectorial
operator~$\tilde{H}_\alpha$ in $\sii(\Omega)$ such that
$
  h_\alpha(\Psi,\Phi) = (\tilde{H}_\alpha \Psi,\Phi)
$
for all $\Psi\in\Dom(\tilde{H}_\alpha)\subset\Dom(h_\alpha)$
and $\Phi\in\Dom(h_\alpha)$, where
\begin{equation*}
  \Dom(\tilde{H}_\alpha)
  = \left\{
  \Psi\in\H^1(\Omega) |\,
  \exists F\in\sii(\Omega), \
  \forall \Phi\in\H^1(\Omega), \
  h_\alpha(\Psi,\Phi)=(F,\Phi)_{\sii(\Omega)}
  \right\}
  .
\end{equation*}
By integration by parts,
it is easy to check that if $\Psi\in\Dom(H_\alpha)$,
it follows that $\Psi\in\Dom(\tilde{H}_\alpha)$ with $F=-\Delta\Psi$.
That is, $\tilde{H}_\alpha$~is an extension of~$H_\alpha$
as defined in~\eqref{Hamiltonian}.
It remains to show that actually
$
  H_\alpha = \tilde{H}_\alpha
$
in order to prove Theorem~\ref{Thm.domain}.
However, the other inclusion holds as a direct consequence
of the representation theorem and the following result.
\begin{lemma}\label{lm3.0}
Let $\alpha \in W_\infty^1(\mathbb{R})$.
For each $F\in L_2(\Om)$, a generalized solution~$\Psi$
to the problem
\begin{equation}\label{5.0}
  \left\{
  \begin{aligned}
    -\D\Psi&=F
    & \mbox{in} & \quad \Omega \,,
    \\
    \partial_2\Psi + \iu\,\alpha\,\Psi &= 0
    & \mbox{on} & \quad \partial\Omega \,,
  \end{aligned}
  \right.
\end{equation}
belongs to $\Dom(H_\alpha)$.
\end{lemma}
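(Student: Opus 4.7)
The plan is to show that any generalized solution $\Psi$ of~\eqref{5.0}---that is, $\Psi\in\H^1(\Om)$ satisfying $h_\alpha(\Psi,\Phi)=(F,\Phi)_{\sii(\Om)}$ for every $\Phi\in\H^1(\Om)$, with the boundary condition encoded in $h_\alpha$ through Green's formula---in fact belongs to $\H^2(\Om)$. Together with an integration-by-parts identity this will place $\Psi$ in $\Dom(H_\alpha)$. Since $\Om=\Real\times I$ has a flat boundary and is invariant under $x_1$-translations, the natural tool is Nirenberg's method of tangential difference quotients. For small $h\neq 0$, set $D_h\Psi(x):=h^{-1}(\Psi(x_1+h,x_2)-\Psi(x))$, so that $\Phi:=-D_{-h}D_h\Psi$ lies in $\H^1(\Om)$. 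Inserting $\Phi$ into the weak identity, transferring one $D_{-h}$ from $\Phi$ onto $\Psi$ via the discrete integration-by-parts rule, and using the discrete Leibniz rule to commute $D_h$ past $\alpha$ in the boundary term yields an identity whose principal part reads
\[ h_\alpha^1[D_h\Psi]+\iu\,h_\alpha^2[D_h\Psi]=(F,-D_{-h}D_h\Psi)_{\sii(\Om)}+R_h, \]
with a remainder $R_h$ bounded by a constant times $\|\alpha'\|_{L_\infty(\Real)}\,\|\Psi\|_{\H^1(\Om)}\bigl(\|\Psi\|_{\H^1(\Om)}+\|D_h\Psi\|_{\H^1(\Om)}\bigr)$. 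Applying Lemma~\ref{lm3.0a} to $D_h\Psi$ with a small $\delta$ to partially absorb $h_\alpha^2[D_h\Psi]$ into the real part, and using $\|D_h\Psi\|_{\sii(\Om)}\leqslant\|\p_1\Psi\|_{\sii(\Om)}$ together with a Cauchy inequality on the source term, gives a uniform-in-$h$ bound $\|\nabla D_h\Psi\|_{\sii(\Om)}\leqslant C(\|F\|_{\sii(\Om)}+\|\Psi\|_{\H^1(\Om)})$. Nirenberg's characterization then yields $\p_1\Psi\in\H^1(\Om)$.

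This delivers $\p_1^2\Psi$ and $\p_1\p_2\Psi$ in $\sii(\Om)$; the distributional identity $-\D\Psi=F$, which follows from the weak formulation tested against $C_0^\infty(\Om)$, then forces $\p_2^2\Psi=-F-\p_1^2\Psi\in\sii(\Om)$, so $\Psi\in\H^2(\Om)$. With this regularity secured, classical integration by parts is justified: for every $\Phi\in\H^1(\Om)$,
\[ h_\alpha(\Psi,\Phi)=(-\D\Psi,\Phi)_{\sii(\Om)}+\int_\Real\bigl[(\p_2\Psi+\iu\,\alpha\Psi)\,\overline{\Phi}\bigr]_{x_2=0}^{x_2=d}\,dx_1. \]
Comparison with $h_\alpha(\Psi,\Phi)=(F,\Phi)_{\sii(\Om)}=(-\D\Psi,\Phi)_{\sii(\Om)}$ forces the boundary integral to vanish for every $\Phi\in\H^1(\Om)$. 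Varying the traces of $\Phi$ separately on the two components of $\p\Om$ and using the surjectivity of the trace map onto $W_2^{1/2}(\p\Om)$ then gives $\p_2\Psi+\iu\,\alpha\Psi=0$ on $\p\Om$ in the trace sense, so that $\Psi\in\Dom(H_\alpha)$.

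The hard part is the uniform-in-$h$ estimate of the first step. Because $h_\alpha^2$ is only \emph{relatively} bounded by $h_\alpha^1$ (Lemma~\ref{lm3.0a}), the imaginary boundary contribution to the difference-quotient identity cannot be dominated outright; it must be partially absorbed into the real part by choosing the relative-bound parameter $\delta$ small enough, at the cost of a lower-order $\sii$-term that is in turn controlled by the already available $\H^1$-regularity of $\Psi$. Moreover, the commutator arising when $D_h$ crosses the non-constant coefficient $\alpha$ in the boundary sesquilinear form is precisely what forces the hypothesis $\alpha\in W_\infty^1(\Real)$, beyond the mere $L_\infty(\Real)$-boundedness that is sufficient for the existence of $h_\alpha$ and for the sectorial estimate itself.
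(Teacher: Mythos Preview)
Your proof is correct and follows essentially the same strategy as the paper: tangential (Nirenberg-type) difference quotients in $x_1$, Lemma~\ref{lm3.0a} to absorb the imaginary boundary form, recovery of $\partial_2^2\Psi$ from the equation, and integration by parts to verify the boundary condition. The only cosmetic differences are that the paper tests the equation for the difference quotient $\Psi_\delta$ against $\Phi=\Psi_\delta$ rather than inserting $-D_{-h}D_h\Psi$ directly into the original weak identity, and interposes a citation to interior elliptic regularity before extracting $\partial_{22}\Psi$---a step your direct distributional argument renders unnecessary.
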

\begin{proof}
For any function $\Psi\in\H^1(\Om)$,
we introduce the difference quotient
\begin{equation*}
  \Psi_\d(x):=\frac{\Psi(x_1+\d,x_2)-\Psi(x)}{\d}
  \,,
\end{equation*}
where~$\delta$ is a small real number.
By standard arguments
\cite[Ch.~III, Sec.~3.4, Thm.~3]{Mihajlov},
the estimate
\begin{equation}\label{5.1}
\|\Psi_\d\|_{L_2(\Om)}\leqslant \|\Psi\|_{\H^1(\Om)}
\end{equation}
holds true for all~$\d$ small enough.
If~$\Psi$ is a generalized solution to~(\ref{5.0}),
then~$\Psi_\d$ is a generalized solution to the problem
\begin{equation*}
  \left\{
  \begin{aligned}
    -\D\Psi_\d &= F_\d
    & \mbox{in} & \quad \Omega \,,
    \\
    \partial_2\Psi_\d + \iu\,\alpha\,\Psi_\d &= g
    & \mbox{on} & \quad \partial\Omega \,,
  \end{aligned}
  \right.
\end{equation*}
where~$g$ denotes the trace of the function $x \mapsto -\iu
\alpha_\d(x_1) \Psi(x_1+\d,x_2)$ to the boundary~$\partial\Omega$.
Using the ``integration-by-parts'' formula for the difference
quotients, $
  (F_\d,\Phi)_{L_2(\Om)}
  = -(F,\Phi_{-\d})_{L_2(\Om)}
$,
the integral identity corresponding
to the weak formulation of the boundary value
problem for~$\Psi_\d$ can be written as follows
\begin{multline*}
  h_\alpha(\Psi_\d,\Phi)
  = -(F,\Phi_{-\d})_{L_2(\Om)}
  - \iu \int_\Real \alpha_\d(x_1) \, \Psi(x_1+\d,d)\,\overline{\Phi(x_1,d)}\,dx_1
  \\
  + \iu \int_\Real \alpha_\d(x_1) \, \Psi(x_1+\d,0)\,\overline{\Phi(x_1,0)}\,dx_1
  \,,
\end{multline*}
where $\Phi \in \H^1(\Om)$ is arbitrary.
Letting $\Phi=\Psi_\d$,
and using the embedding of $\H^1(\Om)$ in $L_2(\p\Om)$,
the boundedness of~$\alpha_\d$,
Lemma~\ref{lm3.0a} and~(\ref{5.1}),
the above identity yields
\begin{equation*}
\|\Psi_\d\|_{\H^1(\Om)}\leqslant C \,,
\end{equation*}
where the constant $C$ is independent of $\d$.
Employing this estimate and proceeding as in the proof
of Item~b) of Theorem~3 in \cite[Ch.~III, Sec.~3.4]{Mihajlov},
one can show easily that $\p_1\Psi\in\H^1(\Om)$.
Hence, $\p_{11}\Psi\in L_2(\Om)$ and $\p_{12}\Psi\in L_2(\Om)$.

If follows from standard elliptic regularity theorems
(see, \eg, \cite[Ch.~IV, Sec.~2.2]{Mihajlov})
that $\Psi\in W_{2,loc}^2(\Omega)$.
Hence, the first of the equations in~(\ref{5.0})
holds true a.e. in $\Om$.
Thus, $\p_{22}\Psi=-F-\p_{11}\Psi\in L_2(\Om)$,
and therefore $\Psi\in \H^2(\Om)$.

It remains to check the boundary condition for~$\Psi$.
Integrating by parts, one has
\begin{align*}
(F,\Phi)_{\sii(\Omega)}=&h_\alpha(\Psi,\Phi)=
(-\D\Psi,\Phi)_{L_2(\Om)}
\\
&+ \int_\Real \big[
\partial_2\Psi(x_1,d) +
\iu\,\alpha(x_1)\,\Psi(x_1,d)\big]\,\overline{\Phi(x_1,d)}\,dx_1
\\
&- \int_\Real \big[
\partial_2\Psi(x_1,0) +
\iu\,\alpha(x_1)\,\Psi(x_1,0)\big]\,\overline{\Phi(x_1,0)}\,dx_1
\end{align*}
for any $\Phi\in \H^1(\Om)$.
This implies the boundary conditions because
$-\Delta\Psi=F$ a.e. in~$\Omega$
and~$\Phi$ is arbitrary.
\end{proof}

Summing up the results of this section, we get
\begin{proposition}\label{Prop.equality}
Let $\alpha \in W_\infty^1(\mathbb{R})$. Then
$
  \tilde{H}_\alpha = H_\alpha
$.
\end{proposition}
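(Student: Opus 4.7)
The strategy is to prove the two inclusions $H_\alpha \subseteq \tilde H_\alpha$ and $\tilde H_\alpha \subseteq H_\alpha$ separately, leveraging the material already assembled in this section. The first inclusion was essentially noted in the remarks preceding Lemma~\ref{lm3.0}: for $\Psi \in \Dom(H_\alpha) \subset \H^2(\Omega)$, integration by parts converts $h_\alpha(\Psi,\Phi)$ into $(-\Delta\Psi,\Phi)_{L_2(\Omega)}$ plus boundary contributions of the form $\pm\int_\Real (\partial_2\Psi + \iu\alpha\Psi)\,\overline{\Phi}\,dx_1$ on the two components of $\partial\Omega$, and these vanish by~(\ref{bc}); consequently $\Psi \in \Dom(\tilde H_\alpha)$ with $\tilde H_\alpha \Psi = -\Delta\Psi = H_\alpha \Psi$.

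For the reverse inclusion the plan is to pick $\Psi \in \Dom(\tilde H_\alpha)$, set $F := \tilde H_\alpha \Psi \in L_2(\Omega)$, and apply the first representation theorem to read off that $h_\alpha(\Psi,\Phi) = (F,\Phi)_{L_2(\Omega)}$ for every $\Phi \in \H^1(\Omega) = \Dom(h_\alpha)$. Splitting $h_\alpha$ into its real and imaginary parts, this identity is exactly the weak (distributional) formulation of the boundary value problem~(\ref{5.0}) with right-hand side~$F$; in other words, $\Psi$ is a generalized solution of~(\ref{5.0}). Lemma~\ref{lm3.0} then applies and upgrades $\Psi$ to an element of $\Dom(H_\alpha)$ satisfying $-\Delta\Psi = F$, \ie\ $H_\alpha \Psi = \tilde H_\alpha \Psi$. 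Combining the two inclusions yields the claimed equality.

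The genuine work has already been done inside Lemma~\ref{lm3.0}: the difference-quotient argument that lifts a weak $\H^1$ solution to a strong $\H^2$ solution is precisely what uses the stronger hypothesis $\alpha \in W_\infty^1(\Real)$, as opposed to the merely bounded $\alpha$ sufficient for the sectorial form construction via Lemma~\ref{lm3.0a}. At the level of the present proposition there is no real obstacle left, only the bookkeeping check that the notion of ``generalized solution'' used in Lemma~\ref{lm3.0} coincides with the integral identity produced by the first representation theorem; once that match is verified, the conclusion is immediate.
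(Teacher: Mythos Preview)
Your proof is correct and follows exactly the paper's approach: the inclusion $H_\alpha \subseteq \tilde H_\alpha$ is obtained by integration by parts, and the reverse inclusion follows because any $\Psi \in \Dom(\tilde H_\alpha)$ is a generalized solution of~\eqref{5.0} with $F = \tilde H_\alpha\Psi$, so Lemma~\ref{lm3.0} applies. Your observation that all the substantive work sits inside Lemma~\ref{lm3.0} (and that this is where the $W_\infty^1$ hypothesis on~$\alpha$ is actually used) is exactly right.
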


Theorem~\ref{Thm.domain} follows as a corollary of this proposition.
In particular, the latter implies
that~$H_\alpha$ is $m$-sectorial. Moreover, by the first
representation theorem, we know that the
adjoint~$\tilde{H}_\alpha^*$ is simply obtained as the operator
associated with
$
  h_\alpha^* = h_{-\alpha}
$.
This together with Proposition~\ref{Prop.equality} proves~(\ref{symmetry}).

Let us finally comment on the results~\eqref{sector} and~\eqref{resolvent.bound}.
As a direct consequence of the first inequality of Lemma~\ref{lm3.0},
we get that the numerical range of $H_\alpha(=\tilde{H}_\alpha)$,
defined as the set of all complex numbers $(H_\alpha\Psi,\Psi)_{\sii(\Omega)}$
where~$\Psi$ changes over all $\Psi\in\Dom(H_\alpha)$
with $\|\Psi\|_{\sii(\Omega)}=1$, is contained in the set~$\Xi_\alpha$.
Hence, in view of general results about numerical range
(\cf~\cite[Sec.~V.3.2]{Kato}),
the exterior of the numerical range of~$H_\alpha$ is a connected set,
and one indeed has~\eqref{sector} and~\eqref{resolvent.bound}.

\section{The unperturbed waveguide}\label{Sec.un}
%
In this section we consider the case of uniform boundary
conditions in the sense that~$\alpha$ is supposed to be
identically equal to a constant $\alpha_0\in\Real$. We prove the
spectral result~(\ref{separation}) by using the fact
that~$H_{\alpha_0}$ can be decomposed into a sum of the
``longitudinal'' operator~$-\Delta^\Real$, \ie~the self-adjoint
Laplacian in~$\sii(\Real)$, and the ``transversal'' operator
$-\Delta_{\alpha_0}^I$ defined in~(\ref{trans.op}).

\subsection{The transversal operator}\label{Sec.trans}
%
We summarize here some of the results established in~\cite{KBZ}
and refer to that reference for more details.

The adjoint of $-\Delta_{\alpha_0}^I$ is simply obtained
by the replacement $\alpha_0 \mapsto -\alpha_0$,
\ie,
$
  (-\Delta_{\alpha_0}^I)^* = -\Delta_{-\alpha_0}^I
$.
Consequently, $(-\Delta_{\alpha_0}^I)^*$ has the same
spectrum~(\ref{spectrum}) and the corresponding set of
eigenfunctions $\{\phi_j\}_{j=0}^\infty$ can be chosen as
$$
  \phi_j(x_2) := \overline{A_j \, \psi_j(x_2)}
  \,,
$$
where $\{\psi_j\}_{j=0}^\infty$ have been introduced in~(\ref{psi})
and~$A_j$ are normalization constants.
Choosing
\begin{equation*}
  A_{j_0}:=\frac{2\iu\alpha_0}{1-\exp{(-2\iu\alpha_0 d)}}
  \,, \quad
  A_{j_1}:=\frac{2 \mu_1^2}{(\mu_1^2-\alpha_0^2)d}
  \,, \quad
  A_j:=\frac{2\mu_j^2}{(\mu_j^2-\alpha_0^2)d}
  \,,
\end{equation*}
where $j \geqslant 2$, $(j_0,j_1) = (0,1)$ if $|\alpha_0|<\pi/d$ and
$(j_0,j_1) = (1,0)$ if $|\alpha_0|>\pi/d$ (if $\alpha_0=0$, the
fraction in the definition of $A_{j_0}$ should be understood as the
expression obtained after taking the limit $\alpha_0 \to 0$), we
have the biorthonormality relations
\begin{equation*}
  \forall j,k\in\Nat, \quad
  (\psi_j,\phi_k)_{\sii(I)} = \delta_{jk}
\end{equation*}
together with the biorthonormal-basis-type expansion
(\cf~\cite[Prop.~4]{KBZ})
\begin{equation}\label{basis}
  \forall \psi\in\sii(I)
  \,, \qquad
  \psi
  = \sum_{j=0}^\infty (\psi,\phi_j)_{\sii(I)} \, \psi_j
  \,.
\end{equation}

Let us show that~(\ref{basis}) can be extended
to~$\sii(\Omega)$.
\begin{lemma}\label{lm3.1}
For any $\Psi\in L_2(\Om)$, the identity
\begin{equation*}
  \Psi(x) =
  \sum\limits_{j=0}^{\infty}
  \Psi_j(x_1)\,\psi_j(x_2)
  \qquad\text{with}\qquad
  \Psi_j(x_1):=\big(\Psi(x_1,\cdot),\phi_j\big)_{L_2(I)}
\end{equation*}
holds true in the sense of $L_2(\Om)$-norm.
\end{lemma}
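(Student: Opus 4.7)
The plan is to lift the $L_2(I)$-basis expansion \eqref{basis}, applied pointwise in $x_1$, to convergence in the $L_2(\Om)$-norm. First I would note that by Fubini's theorem $\Psi(x_1,\cdot)\in L_2(I)$ for almost every $x_1\in\Real$, so the coefficients $\Psi_j(x_1):=(\Psi(x_1,\cdot),\phi_j)_{L_2(I)}$ are defined almost everywhere; the Cauchy--Schwarz bound $|\Psi_j(x_1)|\leqslant \|\phi_j\|_{L_2(I)}\,\|\Psi(x_1,\cdot)\|_{L_2(I)}$ then ensures $\Psi_j\in L_2(\Real)$ with $\|\Psi_j\|_{L_2(\Real)}\leqslant \|\phi_j\|_{L_2(I)}\,\|\Psi\|_{L_2(\Om)}$, so each term of the series on the right-hand side of the claimed identity is meaningful as an element of $L_2(\Om)$.

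Next I would introduce the truncated sum $S_N\Psi(x):=\sum_{j=0}^{N}\Psi_j(x_1)\psi_j(x_2)$ together with the finite-rank projector $P_N:L_2(I)\to L_2(I)$ defined by $P_N\psi:=\sum_{j=0}^{N}(\psi,\phi_j)_{L_2(I)}\psi_j$, so that $(S_N\Psi)(x_1,\cdot)=P_N\Psi(x_1,\cdot)$ for a.e.\ $x_1$. Since $S_N\Psi$ is a finite linear combination of tensor products, Fubini's theorem yields
\begin{equation*}
\|S_N\Psi-\Psi\|_{L_2(\Om)}^{2}=\int_{\Real}\|P_N\Psi(x_1,\cdot)-\Psi(x_1,\cdot)\|_{L_2(I)}^{2}\di x_1,
\end{equation*}
and by the expansion \eqref{basis} applied to $\Psi(x_1,\cdot)$ the integrand tends to zero for almost every $x_1$ as $N\to\infty$.

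To conclude by the dominated convergence theorem I would need an integrable majorant for the integrand, and this is the main obstacle. The crucial observation is that \eqref{basis} says more than that the series converges for each individual $\psi\in L_2(I)$: together with the biorthonormality relations it expresses $\{\psi_j\}_{j=0}^{\infty}$ as a Schauder basis of $L_2(I)$. The Banach--Steinhaus theorem, applied to the pointwise-convergent sequence of bounded operators $P_N$, then supplies a constant $C$ independent of $N$ such that $\|P_N\|_{L_2(I)\to L_2(I)}\leqslant C$. Consequently
\begin{equation*}
\|P_N\Psi(x_1,\cdot)-\Psi(x_1,\cdot)\|_{L_2(I)}^{2}\leqslant(C+1)^{2}\|\Psi(x_1,\cdot)\|_{L_2(I)}^{2},
\end{equation*}
and the right-hand side is integrable in $x_1$ with integral $(C+1)^{2}\|\Psi\|_{L_2(\Om)}^{2}$. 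Dominated convergence then gives $S_N\Psi\to\Psi$ in $L_2(\Om)$, which is the desired identity. The pointwise convergence of \eqref{basis} alone would not suffice, and it is genuinely the uniform boundedness of the partial-sum projectors $P_N$ that carries the argument.
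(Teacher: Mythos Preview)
Your argument is correct and follows the same overall scheme as the paper: reduce to the pointwise-in-$x_1$ expansion~\eqref{basis}, then pass to $L_2(\Om)$-convergence by dominated convergence once a uniform bound on the partial sums is available. The difference lies in how that uniform bound is obtained. You invoke Banach--Steinhaus: each $P_N$ is a bounded finite-rank operator, and since $P_N\psi\to\psi$ for every $\psi\in L_2(I)$ by~\eqref{basis}, the principle yields $\sup_N\|P_N\|<\infty$ directly. The paper instead works by hand, writing $\psi_j=\chi_j^N-\iu(\alpha_0/\mu_j)\chi_j^D$ for $j\geqslant 2$ in terms of the orthonormal cosine and sine families and using Parseval to bound $\big\|\sum_{j=2}^n\Psi_j(x_1)\psi_j\big\|_{L_2(I)}$ and $\sum_{j\geqslant 2}|\Psi_j(x_1)|^2$ explicitly by multiples of $\|\Psi(x_1,\cdot)\|_{L_2(I)}$. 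Your route is shorter and insensitive to the particular form of the $\psi_j$; the paper's route, however, yields quantitative estimates (labelled~\eqref{3.2a} and~\eqref{3.3a}) with explicit constants, and these are reused verbatim in the proof of Lemma~\ref{Lem.constant2} to control the resolvent series. So the explicit computation is not wasted effort in the paper's economy, even though for the present lemma alone your soft argument is entirely adequate.
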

\begin{proof}
In view of~(\ref{basis}), the series converges to~$\Psi$ in
$\sii(I)$ for almost every $x_1\in\Real$. We use the dominated
convergence theorem to prove that the convergence actually holds
in the norm of $\sii(\Omega)$. To do so, it is sufficient to
check that the $\sii(I)$-norm of the partial sums can be
uniformly estimated by a function from $\sii(\Real)$.

Let us introduce $\chi_j^D(x_2):=\sin(\pi j x_2/d)$ and
$\chi_j^N(x_2):=\cos(\pi j x_2/d)$ for $j \geqslant 1$, and
$\chi_0^N(x_2):=1/\sqrt{2}$. We recall that
$\{\sqrt{2/d}\,\chi_j^D\}_{j=1}^\infty$ and
$\{\sqrt{2/d}\,\chi_j^N\}_{j=0}^\infty$ form complete
orthonormal families in $\sii(I)$.
Expressing~$\psi_j$ in terms of $\chi_j^N$ and $\chi_j^D$ for $j
\geqslant 2$, and using the orthonormality,
we have ($n \geqslant 2$)
\begin{align}
  \Big\|
  \sum_{j=2}^{n} \Psi_j(x_1)\,\psi_j
  \Big\|_{L_2(I)}^2
  &\leqslant
  d\Big\|
  \sum\limits_{j=2}^{n} \Psi_j(x_1)\,\chi_j^N
  \Big\|_{L_2(I)}^2
  + d \, \alpha_0^2 \Big\|
  \sum\limits_{j=2}^{n} \Psi_j(x_1)\, \chi_j^D/\mu_j
  \Big\|_{L_2(I)}^2
  \nonumber \\
  &=
  d
  \sum_{j=2}^n|\Psi_j(x_1)|^2
  + d \, \alpha_0^2 \,
  \sum_{j=2}^n|\Psi_j(x_1)|^2/\mu_j^2
  \nonumber \\
  &\leqslant d
  \left(
  1+\frac{\alpha_0^2}{\mu_2^2}
  \right)
  \sum_{j=2}^\infty|\Psi_j(x_1)|^2\label{3.2a}
  \,.
\end{align}
Next, writing ($j \geqslant 2$)
\begin{equation*}
  \Psi_j =
  \sqrt{\frac{d}{2}} \, A_j
  \left(
  \Psi_j^{N}- \iu \frac{\alpha_0}{\mu_j} \, \Psi_j^{D}
  \right)
  \qquad\mbox{with}\qquad
  \Psi_j^\sharp(x_1):=\big(\Psi(x_1,\cdot),\chi_j^\sharp\big)_{L_2(I)}
  \,,
\end{equation*}
noticing that $|A_j| \leqslant c $ (valid for all $j \geqslant 0$)
where~$c$ is a constant depending uniquely on~$|\alpha_0|$ and~$d$,
and using the Parseval identities for $\chi_j^N$ and $\chi_j^D$,
we obtain
\begin{align}\label{3.3a}
  \sum\limits_{j=2}^{\infty} |\Psi_j(x_1)|^2
  &\leqslant
  c^2 d \sum\limits_{j=2}^{\infty}
  \left(
  |\Psi_j^{N}|^2+\frac{\alpha_0^2}{\mu_j^2} \, |\Psi_j^{D}|^2
  \right)
  \nonumber \\
  &\leqslant
  c^2 d
  \left(
  1+\frac{\alpha_0^2}{\mu_2^2}
  \right)
  \|\Psi(x_1,\cdot)\|_{L_2(I)}^2
  \,.
\end{align}
At the same time, using just the estimates $|\psi_j|^2\leqslant
(1+\alpha_0^2/\mu_j^2)$ valid for all $j \geqslant 0$, we readily
get
\begin{equation}\label{3.4b}
 \Big\|
  \sum_{j=0}^{1} \Psi_j(x_1)\,\psi_j
  \Big\|_{L_2(I)}
  \leqslant
  2\,c\,d \left(
  1+\frac{\alpha_0^2}{\mu_0^2}
  \right)
  \|\Psi(x_1,\cdot)\|_{L_2(I)}
  \,.
\end{equation}
Summing up,
\begin{equation*}
  \Big\|
  \sum\limits_{j=0}^{n} \Psi_j(x_1)\,\psi_j
  \Big\|_{L_2(I)}
  \leqslant C \, \|\Psi(x_1,\cdot)\|_{L_2(I)}
  \in L_2(\mathbb{R})
  \,,
\end{equation*}
where~$C$ is a constant independent of~$n$, and the usage of the
dominated convergence theorem is justified.
\end{proof}
%

\subsection{Spectrum of the unperturbed waveguide}
%
First we show that
the spectrum of~$H_{\alpha_0}$ is purely essential.
Since the residual spectrum is always empty
due to Corollary~\ref{Corol.residual},
it is enough to show that there are no eigenvalues.
\begin{proposition}\label{Prop.spec0.p}
Let $\alpha_0\in\Real$ satisfy~\eqref{hypothesis}. Then
$$
  \pointspec(H_{\alpha_0})=\varnothing
  \,.
$$
\end{proposition}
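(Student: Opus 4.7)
The natural approach is separation of variables in the transverse direction. I argue by contradiction: suppose $\lambda\in\pointspec(H_{\alpha_0})$ admits a nonzero eigenfunction $\Psi\in\Dom(H_{\alpha_0})\subset\H^2(\Om)$. By Lemma~\ref{lm3.1} (applicable because the hypothesis~\eqref{hypothesis} guarantees the basis property~\eqref{basis}), I would expand
\begin{equation*}
  \Psi(x)=\sum_{j=0}^{\infty}\Psi_j(x_1)\,\psi_j(x_2),
  \qquad
  \Psi_j(x_1):=(\Psi(x_1,\cdot),\phi_j)_{\sii(I)}.
\end{equation*}
A Fubini/Cauchy--Schwarz argument, together with the boundedness of the $\phi_j$, shows that each $\Psi_j$ belongs to $\H^2(\Real)\subset\sii(\Real)$, with $\Psi_j''(x_1)=(\p_{11}\Psi(x_1,\cdot),\phi_j)_{\sii(I)}$ for a.e.\ $x_1$.

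Next I would project the eigenvalue equation $-\Delta\Psi=\lambda\Psi$ onto $\phi_j$ for fixed $x_1$. Integrating by parts twice in $x_2$, the boundary contributions at $x_2=0,d$ involve the pairings $\p_2\Psi\cdot\overline{\phi_j}$ and $\Psi\cdot\overline{\phi_j'}$. Using $\p_2\Psi=-\iu\alpha_0\Psi$ (since $\Psi\in\Dom(H_{\alpha_0})$) together with $\phi_j'=\iu\alpha_0\phi_j$ (since $\phi_j\in\Dom((-\Delta_{\alpha_0}^I)^*)=\Dom(-\Delta_{-\alpha_0}^I)$), these two boundary terms cancel exactly. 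Combined with $-\phi_j''=\mu_j^2\phi_j$, this reduces the PDE to the scalar ODE
\begin{equation*}
  -\Psi_j''(x_1)+(\mu_j^2-\lambda)\,\Psi_j(x_1)=0,\qquad x_1\in\Real.
\end{equation*}

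Finally, for any value of $\mu_j^2-\lambda\in\Com$ the only $\sii(\Real)$ solution of this constant-coefficient linear ODE on the whole line is the zero function: the two linearly independent exponentials $\E^{\pm\sqrt{\mu_j^2-\lambda}\,x_1}$ (or the affine solutions $ax_1+b$ in the borderline case $\mu_j^2=\lambda$) never lie in $L_2(\Real)$. Hence $\Psi_j\equiv 0$ for every $j\geqslant 0$, and the expansion forces $\Psi=0$, contradicting the choice of eigenfunction. The only mildly delicate step is the boundary-term cancellation that turns the transverse projection into an ODE; it is precisely the duality between the Robin condition defining $\Dom(H_{\alpha_0})$ and the adjoint Robin condition defining $\Dom(-\Delta_{-\alpha_0}^I)$ that makes the cancellation work, and everything else is routine.
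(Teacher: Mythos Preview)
Your argument is correct and follows essentially the same route as the paper's proof: project the eigenvalue equation onto each $\phi_j$, obtain the one-dimensional equations $-\Psi_j''=(\lambda-\mu_j^2)\Psi_j$ on $\Real$, and conclude from $\Psi_j\in\sii(\Real)$ that every $\Psi_j$ vanishes, whence $\Psi=0$ by Lemma~\ref{lm3.1}. The only difference is that you spell out the boundary-term cancellation in the transverse integration by parts (via the adjoint Robin condition satisfied by $\phi_j$), which the paper leaves implicit.
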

\begin{proof}
Suppose that~$H_{\alpha_0}$ possesses
an eigenvalue~$\lambda$ with eigenfunction~$\Psi$.
Multiplying the eigenvalue equation with~$\overline{\phi_j}$
and integrating over~$I$, we arrive at the equations
$$
  -\Psi_j''=(\lambda-\mu_j^2)\Psi_j
  \qquad\mbox{in}\qquad
  \Real
  \,,
  \qquad
  j \geqslant 0
  \,,
$$
where~$\Psi_j$ are the coefficients of Lemma~\ref{lm3.1}. Since
$\Psi_j\in\sii(\Real)$ due to Fubini's theorem, each of the
equations has just a trivial solution.
This together with Lemma~\ref{lm3.1} yields $\Psi=0$,
a contradiction.
That is, the point spectrum of~$H_{\alpha_0}$ is empty.
\end{proof}
\begin{remark}\label{Rem.holomorphic}
Regardless of the technical hypothesis~(\ref{hypothesis}),
the set of isolated eigenvalues of~$H_{\alpha_0}$ is always empty.
This follows from Proposition~\ref{Prop.spec0.p}
and the fact that $\alpha_0 \mapsto H_{\alpha_0}$ forms
a holomorphic family of operators \cite[Sec.~VII.4]{Kato}.
\end{remark}

It is well known that the spectrum of the ``longitudinal''
operator $-\Delta^\Real$ is also purely essential
and equal to the semi-axis $[0,+\infty)$.
In view of the separation of variables,
it is reasonable to expect that the (essential) spectrum
of~$H_{\alpha_0}$ will be given by that semi-axis
shifted by the first eigenvalue of $-\Delta_{\alpha_0}^I$.
First we show that the resulting interval indeed belongs
to the spectrum of~$H_{\alpha_0}$.
\begin{lemma}\label{Lem.constant1}
Let $\alpha_0\in\Real$. Then
$
  [\mu_0^2,+\infty) \subseteq \conspec(H_{\alpha_0})
$.
\end{lemma}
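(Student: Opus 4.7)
The plan is to construct a Weyl singular sequence for each $\lambda\in[\mu_0^2,+\infty)$, thereby showing $\lambda$ cannot lie in the resolvent set and in fact violates the (semi-)Fredholm property. Since $H_{\alpha_0}$ is $\mathcal{T}$-self-adjoint, the equivalence between different notions of essential spectrum noted in Remark~\ref{Rem.ess} allows us to conclude membership in $\conspec(H_{\alpha_0})$ as defined via the Wolf criterion.

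The natural construction exploits the separation of variables. Fix $\lambda\in[\mu_0^2,+\infty)$ and set $k:=\sqrt{\lambda-\mu_0^2}\geqslant 0$. Pick a non-zero cut-off $\chi\in C_0^\infty(\mathbb{R})$ with $\supp\chi\subset[1,2]$ and define the longitudinal approximants
\begin{equation*}
  \varphi_n(x_1):=\frac{1}{\sqrt{n}}\,\chi\!\left(\frac{x_1}{n}\right)\E^{\iu k x_1},
  \qquad n\in\mathbb{N},
\end{equation*}
and then the full trial functions $\Psi_n(x):=\varphi_n(x_1)\,\psi_0(x_2)$, with $\psi_0$ the transverse eigenfunction from~\eqref{psi}. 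Because $\psi_0$ satisfies the boundary condition of~$-\Delta_{\alpha_0}^I$, the product $\Psi_n$ satisfies~\eqref{bc} and belongs to $\Dom(H_{\alpha_0})$.

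The verification is then a direct computation. A change of variables gives $\|\varphi_n\|_{L_2(\mathbb{R})}^2=\|\chi\|_{L_2(\mathbb{R})}^2$, so $\|\Psi_n\|_{\sii(\Omega)}$ is a non-zero constant independent of $n$. Weak convergence $\Psi_n\rightharpoonup 0$ follows from the fact that $\supp\varphi_n$ escapes to $+\infty$ (any $\sii$-function has its inner product with $\varphi_n$ going to zero, by dominated convergence on compact sets combined with the Cauchy--Schwarz bound $|\langle f,\varphi_n\rangle|\leqslant \|f\|_{L_2(\{x_1>n\})}\|\chi\|_{L_2(\mathbb{R})}$). For the residual, using $-\psi_0''=\mu_0^2\psi_0$ one obtains
\begin{equation*}
  (H_{\alpha_0}-\lambda)\Psi_n(x)
  =\bigl(-\varphi_n''(x_1)-k^2\varphi_n(x_1)\bigr)\psi_0(x_2),
\end{equation*}
and a routine differentiation shows $-\varphi_n''-k^2\varphi_n$ is a sum of terms of order $n^{-3/2}\chi'(x_1/n)$ and $n^{-5/2}\chi''(x_1/n)$, whose $L_2(\mathbb{R})$-norms behave as $n^{-1}$ and $n^{-2}$ respectively. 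Thus $\|(H_{\alpha_0}-\lambda)\Psi_n\|_{\sii(\Omega)}\to 0$.

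The main obstacle, rather than the construction itself, is the passage from the existence of a singular sequence to the conclusion $\lambda\in\conspec(H_{\alpha_0})$ in the Wolf sense. For this one invokes the standard fact that a singular sequence precludes the semi-Fredholm property of $H_{\alpha_0}-\lambda$, hence also the Fredholm property; combined with Remark~\ref{Rem.ess}, which ensures that Wolf's and Kato's essential spectra coincide for our $\mathcal{T}$-self-adjoint operator, this places $\lambda$ in $\conspec(H_{\alpha_0})$. Since $\lambda\in[\mu_0^2,+\infty)$ was arbitrary, the inclusion $[\mu_0^2,+\infty)\subseteq\conspec(H_{\alpha_0})$ follows.
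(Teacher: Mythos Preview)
Your proof is correct and follows essentially the same approach as the paper: construct a Weyl singular sequence of the form $\varphi_n(x_1)\psi_0(x_2)$ using the lowest transverse eigenfunction, and then invoke the equivalence of the various essential spectra for $\mathcal{T}$-self-adjoint operators noted in Remark~\ref{Rem.ess}. The paper's version is terser---it simply takes $\{\varphi_n\}$ to be any singular sequence for $-\Delta^{\Real}$ at the point $z=\lambda-\mu_0^2$ rather than writing one down explicitly---but the content is the same.
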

\begin{proof}
Since the spectrum of~$H_{\alpha_0}$ is purely essential, it can be
characterized by means of singular sequences
\cite[Thm.~IX.1.3]{Edmunds-Evans} (it is in fact an equivalent
definition of another type of essential spectrum, which is in
general intermediate between the essential spectra due to Wolf and
Kato, and therefore coinciding with them in our case,
\cf~Remark~\ref{Rem.ess}). That is, $\l\in\conspec(H_{\alpha_0})$
if, and only if, there exists a sequence
$\{u_n\}_{n=1}^\infty\subset\Dom(H_{\alpha_0})$ such that
$\|u_n\|_{\sii(\Omega)}=1$, $u_n \rightharpoonup 0$ and
$\|H_{\alpha_0}u_n-\l u_n\|_{\sii(\Omega)} \to 0$ as $n\to\infty$.
Let $\{\varphi_n\}_{n=1}^\infty$ be a singular sequence of
$-\Delta^\Real$ corresponding to a given $z\in[0,+\infty)$. Then it
is easy to verify that~$u_n$ defined by $
  u_n(x) :=
  \varphi_n(x_1)\psi_0(x_2)/\|\psi_0\|_{\sii(I)}
$
forms a singular sequence of~$H_{\alpha_0}$ corresponding
to~$z+\mu_0^2$.
\end{proof}

To get the opposite inclusion, we employ the fact
that the biorthonormal-basis-type relations~(\ref{basis})
are available. This enables us to decompose
the resolvent of~$H_{\alpha_0}$ into the transverse biorthonormal-basis.
\begin{lemma}\label{Lem.constant2}
Let $\alpha_0\in\Real$ satisfy~\eqref{hypothesis}. Then
$
  \Com\setminus[\mu_0^2,+\infty) \subseteq \rho(H_{\alpha_0})
$
and for any $z\in\Com\setminus[\mu_0^2,+\infty)$ we have
\begin{equation*}
  (H_{\alpha_0}-z)^{-1}
  = \sum_{j=0}^\infty
  (-\Delta^\Real+\mu_j^2-z)^{-1} B_{j}
  \,.
\end{equation*}
Here~$B_j$ is a bounded operator on $\sii(\Omega)$ defined by
$$
  (B_j\Psi)(x)
  := \big(\Psi(x_1,\cdot),\phi_j\big)_{\sii(I)} \, \psi_j(x_2)
  \,, \qquad
  \Psi\in\sii(\Omega)
  \,,
$$
and $(-\Delta^\Real+\mu_j^2-z)^{-1}$ abbreviates
$(-\Delta^\Real+\mu_j^2-z)^{-1} \otimes 1$ on
$
  \sii(\Real)\otimes\sii(I)
  \simeq \sii(\Omega)
$.
\end{lemma}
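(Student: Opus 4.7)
The plan is to construct the operator~$R(z)$ on the right-hand side as a bounded operator on $\sii(\Omega)$ and verify directly that it equals the resolvent $(H_{\alpha_0}-z)^{-1}$. Denoting $R_j(z):=(-\Delta^\Real+\mu_j^2-z)^{-1}\otimes 1$, I would take as the candidate $R(z):=\sum_{j=0}^\infty R_j(z) B_j$ and proceed in two stages: first establish operator-norm convergence of the series, then identify the sum with the resolvent on a dense subspace and extend by closedness.

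First I would show that the series defining $R(z)$ converges in operator norm. The pointwise inequalities (3.3a) and (3.4b) from the proof of Lemma~\ref{lm3.1}, combined with the uniform bound $|A_j|\leqslant c$ exploited there, yield at once $\|B_j\|_{\sii(\Omega)\to\sii(\Omega)}\leqslant C$ for a constant~$C$ independent of~$j$. For $z\in\Com\setminus[\mu_0^2,+\infty)$ one has $\|R_j(z)\|=1/\dist(z,[\mu_j^2,+\infty))=\Odr(\mu_j^{-2})=\Odr(j^{-2})$ as $j\to\infty$, because $\mu_j=\pi j/d$ for $j\geqslant 2$. Hence $\sum_j\|R_j(z) B_j\|<\infty$, and $R(z)$ is a bounded operator on $\sii(\Omega)$.

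Next I would identify $R(z)$ with the resolvent. By Lemma~\ref{lm3.1}, finite sums $\Psi=\sum_{k=0}^{N} F_k(x_1)\psi_k(x_2)$ with $F_k\in\sii(\Real)$ are dense in $\sii(\Omega)$. Biorthogonality $(\psi_k,\phi_j)_{\sii(I)}=\delta_{jk}$ collapses $R(z)\Psi$ to the finite sum $\sum_{k=0}^{N}u_k(x_1)\psi_k(x_2)$ with $u_k:=R_k(z) F_k\in\H^2(\Real)$. Each summand $u_k\psi_k$ lies in $\Dom(H_{\alpha_0})$, since $\psi_k$ itself satisfies the Robin-type boundary condition with coupling~$\alpha_0$; a direct calculation using $-\psi_k''=\mu_k^2\psi_k$ yields $(H_{\alpha_0}-z)(u_k\psi_k)=F_k\psi_k$, hence $(H_{\alpha_0}-z) R(z)\Psi=\Psi$ on the dense subspace. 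Invoking the closedness of $H_{\alpha_0}-z$ and the boundedness of $R(z)$, this identity extends to all $\Psi\in\sii(\Omega)$, showing that $R(z)$ maps $\sii(\Omega)$ into $\Dom(H_{\alpha_0})$ and is a right inverse of $H_{\alpha_0}-z$. Injectivity of $H_{\alpha_0}-z$, supplied by Proposition~\ref{Prop.spec0.p}, then gives $R(z)=(H_{\alpha_0}-z)^{-1}$ and, in particular, $z\in\rho(H_{\alpha_0})$.

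The main obstacle will be the uniform bound $\|B_j\|\leqslant C$, which is not automatic because $\{\psi_j\}$ is only a biorthonormal (rather than orthogonal) family in $\sii(I)$. It has to be extracted from the proof of Lemma~\ref{lm3.1} itself, where the decomposition of~$\psi_j$ in the standard sine/cosine bases, together with the $j$-uniform bound $|A_j|\leqslant c$, makes the estimates (3.3a)--(3.4b) work; those very inequalities encode the norm control on the $B_j$ required here.
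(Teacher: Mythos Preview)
Your argument is correct and takes a genuinely different route from the paper's own proof. The paper establishes convergence of the series $\sum_j U_j\psi_j$ in the stronger $\H^1(\Omega)$ topology (which requires the additional gradient estimates~(3.4a) on~$U_j$ and $\partial_2$-estimates coming from~$\psi_j'$), and then verifies the \emph{form} identity $h_{\alpha_0}(R,\Phi)-z(R,\Phi)=(\Psi,\Phi)$ for all $\Phi\in\H^1(\Omega)$ to place~$R$ in $\Dom(H_{\alpha_0})$. You instead content yourself with operator-norm convergence in~$\sii(\Omega)$ (which needs only the uniform bound $\|B_j\|\leqslant C$ and the $j^{-2}$ decay of the one-dimensional resolvents), verify the \emph{operator} identity $(H_{\alpha_0}-z)R(z)\Psi=\Psi$ on the dense subspace of finite biorthogonal sums, and then push it to all of~$\sii(\Omega)$ via closedness of~$H_{\alpha_0}$; the separate appeal to Proposition~\ref{Prop.spec0.p} for injectivity completes the identification. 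Your approach is more economical---no derivative estimates are needed---while the paper's approach yields the extra information that the resolvent series actually converges in~$\H^1(\Omega)$, which is not asserted in the statement but can be handy elsewhere.
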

\begin{proof}
Put $z\in\Com\setminus[\mu_0^2,+\infty)$.
For every $\Psi\in L_2(\Om)$ and all $j \geqslant 0$ we denote
$
  U_j := (-\D^{\mathbb{R}}+\mu_j^2-z)^{-1}\Psi_j
  \in \sii(\Real)
$,
where $\Psi_j$ are defined in Lemma~\ref{lm3.1}.
It is clear that
\begin{equation}\label{3.4a}
  \|U_j\|_{L_2(\mathbb{R})}
  \leqslant
  \frac{\|\Psi_j\|_{L_2(\mathbb{R})}}{\dist\big(z,[\mu_j^2,+\infty)\big)}
  \leqslant C\,\frac{\|\Psi_j\|_{L_2(\mathbb{R})}}{j^2+1} \,,
  \quad
  \|U_j'\|_{L_2(\mathbb{R})}
  \leqslant
  C\,\frac{\|\Psi_j\|_{L_2(\mathbb{R})}}{\sqrt{j^2+1}} \,,
\end{equation}
where~$C$ is a constant depending uniquely on~$|\alpha_0|$, $d$ and~$z$;
the second inequality follows from the identity
$
  \|U_j'\|_{L_2(\mathbb{R})}^2+(\mu_j^2-z)
  \|U_j\|_{L_2(\mathbb{R})}^2=(\Psi_j,U_j)_{L_2(\mathbb{R})}
$.
Using~(\ref{3.4a}) and estimates of the type~(\ref{3.4b}),
it is readily seen that each function
$R_j:x \mapsto U_j(x_1) \psi_j(x_2)$ belongs to $\H^1(\Omega)$.
We will show that it is the case for their infinite sum too.
Firstly, a consecutive use of~(\ref{3.2a}),
the first inequality of~(\ref{3.4a}) and~(\ref{3.3a})
together with Fubini's theorem implies
$
  \|\sum_{j=2}^{n} R_j\|_{L_2(\Omega)}
  \leqslant
  K \, \|\Psi\|_{L_2(\Omega)}
$,
where~$K$ is a constant independent of~$n \geqslant 2$.
Secondly, a similar estimate for the partial sum of $\partial_1 R_j$
can be obtained in the same way,
provided that one uses the second inequality of~(\ref{3.4a}) now.
Finally, since the derivative of~$\psi_j$ as well can be expressed
in terms of~$\chi_j^N$ and~$\chi_j^D$ introduced in the proof
of Lemma~\ref{lm3.1}, a consecutive use of the estimates
of the type~(\ref{3.2a}) and the first inequality of~(\ref{3.4a})
together with Fubini's theorem implies
\begin{equation*}
  \Big\|\sum\limits_{j=2}^{n} \partial_2 R_j\Big\|_{L_2(\Omega)}^2
  \leqslant
  d \sum\limits_{j=2}^{n} (\alpha_0^2+\mu_j^2) \|U_j\|_{L_2(\Real)}^2
  \leqslant
  d \, C^2 \sum\limits_{j=2}^{n} \frac{\alpha_0^2+\mu_j^2}{(j^2+1)^2}
  \|\Psi_j\|_{L_2(\Real)}^2
  \,;
\end{equation*}
here the fraction in the upper bound forms a bounded sequence in~$j$,
so that we may continue to estimate as above using~(\ref{3.3a})
together with Fubini's theorem again.
Summing up, the series $\sum_{j=0}^\infty R_j$
converges in $\H^1(\Om)$ to a function~$R$ and
\begin{equation*}
  \|R\|_{\H^1(\Om)}\leqslant \tilde{K} \, \|\Psi\|_{L_2(\Om)}
  \,,
\end{equation*}
where~$\tilde{K}$ depends uniquely on~$|\alpha_0|$, $d$ and~$z$.
Employing this fact and the definition of $U_j$,
one can check easily that $R$ satisfies the identity
$
  h_{\alpha_0}(R,\Phi)-z(R,\Phi)_{L_2(\Om)}=(\Psi,\Phi)_{L_2(\Om)}
$
for all $\Phi\in \H^1(\Om)$. It implies that
$R\in\mathfrak{D}(H_{\alpha_0})$ and $(H_{\alpha_0}-z)R=\Psi$, i.e.,
$R=(H_{\alpha_0}-z)^{-1}\Psi$.
\end{proof}

Lemmata~\ref{Lem.constant1} and~\ref{Lem.constant2} yield
\begin{proposition}\label{Prop.spec0}
Let $\alpha_0\in\Real$. Then
$$
  \conspec(H_{\alpha_0})=[\mu_0^2,+\infty)
  \,.
$$
\end{proposition}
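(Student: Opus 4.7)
My plan is to combine Lemmata~\ref{Lem.constant1} and~\ref{Lem.constant2}, then remove the technical hypothesis~\eqref{hypothesis} appearing in the latter by a limiting argument.

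First, for every $\alpha_0\in\Real$, Lemma~\ref{Lem.constant1} already supplies the inclusion $[\mu_0^2,+\infty)\subseteq\conspec(H_{\alpha_0})$. When $\alpha_0$ additionally satisfies~\eqref{hypothesis}, the opposite inclusion is immediate: Lemma~\ref{Lem.constant2} yields $\Com\setminus[\mu_0^2,+\infty)\subseteq\rho(H_{\alpha_0})$, whence $\conspec(H_{\alpha_0})\subseteq\sigma(H_{\alpha_0})\subseteq[\mu_0^2,+\infty)$, and the two inclusions merge into the claimed equality.

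The remaining exceptional case $\alpha_0=\pi\ell/d$ with $\ell\in\Int\setminus\{0\}$ I would approach by approximation. Choose a sequence $\alpha_0^{(n)}\to\alpha_0$ satisfying~\eqref{hypothesis}; by the explicit formula~\eqref{threshold} the corresponding thresholds $\mu_0^{(n)}$ also converge to $\mu_0=\pi/d$. For any fixed $z\in\Com\setminus[\mu_0^2,+\infty)$, continuity ensures $z\in\Com\setminus[(\mu_0^{(n)})^2,+\infty)\subseteq\rho(H_{\alpha_0^{(n)}})$ for all large $n$ by Lemma~\ref{Lem.constant2}. The holomorphic family structure noted in Remark~\ref{Rem.holomorphic}, together with the uniform resolvent bound~\eqref{resolvent.bound} (which is controlled through $\|\alpha_0^{(n)}\|_{L_\infty}\to|\alpha_0|$), should then yield norm-resolvent convergence $(H_{\alpha_0^{(n)}}-z)^{-1}\to(H_{\alpha_0}-z)^{-1}$, forcing $z\in\rho(H_{\alpha_0})$ and hence the missing inclusion $\sigma(H_{\alpha_0})\subseteq[\mu_0^2,+\infty)$.

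The main obstacle is the rigorous justification of norm-resolvent convergence in the exceptional case, since holomorphicity of the family propagates analyticity only within an already-common resolvent set and does not on its own transport resolvent data from the generic $\alpha_0^{(n)}$ to the limit $\alpha_0$. The natural remedy is to exploit form convergence: the sesquilinear forms $h_{\alpha_0^{(n)}}$ converge to $h_{\alpha_0}$ on the fixed domain $\H^1(\Om)$ with imaginary parts uniformly controlled by Lemma~\ref{lm3.0a}, so standard convergence theorems for $m$-sectorial forms supply what is needed. A conceptually cleaner alternative would be to recognise the constant-$\alpha_0$ operator through the tensor-product decomposition $H_{\alpha_0}=-\Delta^\Real\otimes 1+1\otimes(-\D^I_{\alpha_0})$ and invoke the spectral identity $\sigma(A\otimes 1+1\otimes B)=\sigma(A)+\sigma(B)$ valid for self-adjoint $A$ and $m$-sectorial $B$, which would dispatch both cases simultaneously without any restriction on $\alpha_0$.
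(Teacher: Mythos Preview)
Your argument matches the paper's: combine Lemma~\ref{Lem.constant1} with Lemma~\ref{Lem.constant2} for generic~$\alpha_0$, and handle the exceptional values $\alpha_0=\pi\ell/d$ via the holomorphic family property. The paper's own proof is actually terser than yours at that last step---it simply invokes holomorphicity without spelling out how resolvent information at nearby parameters transfers to the limit---so your caution about that point is well placed. Both of your proposed remedies are sound: the form-convergence route works because the forms $h_{\alpha_0}$ share the common domain $\H^1(\Om)$ and depend affinely on~$\alpha_0$ with the imaginary part uniformly $h^1$-bounded by Lemma~\ref{lm3.0a}; and the tensor-product route is exactly the alternative the paper itself points out in the remark immediately following the proposition. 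One small caveat: the bound~\eqref{resolvent.bound} only controls the resolvent outside the parabolic region~$\Xi_{\alpha_0}$, not outside $[\mu_0^2,+\infty)$, so it does not by itself furnish the uniform estimate you allude to; the form-convergence argument (or the tensor product) is what actually closes the gap.
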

\begin{proof}
In view of the lemmata,
the result holds for every $\alpha_0\in\Real$
except for a discrete set of points
complementary to the hypothesis~(\ref{hypothesis}).
However, these points can be included by noticing
that $\alpha_0 \mapsto H_{\alpha_0}$ forms
a holomorphic family of operators
(\cf~Remark~\ref{Rem.holomorphic}).
\end{proof}

Proposition~\ref{Prop.spec0}
and Proposition~\ref{Prop.spec0.p}
together with Remark~\ref{Rem.holomorphic}
imply that the spectrum of~$H_{\alpha_0}$ is real
and~(\ref{separation}) holds true for every $\alpha_0\in\Real$.

\begin{remark}
It follows from~(\ref{separation}) that
the spectrum of~$H_{\alpha_0}$ is equal to the sum
of the spectra of~$-\Delta^\Real$ and $-\Delta_{\alpha_0}^I$.
This result could alternatively be obtained by using
a general theorem about the spectrum
of tensor products~\cite[Thm.~XIII.35]{RS4}
and the fact that the one-dimensional operators
generate bounded holomorphic semigroups.
However, we do not use this way of proof
since Lemma~\ref{Lem.constant2} is employed not only
in the proof of~(\ref{separation})
but also in the proofs of Theorems~\ref{th2.1} and~\ref{th2.2}.
\end{remark}
%

\section{Stability of the essential spectrum}\label{Sec.ess}
%
In this section we show that the essential spectrum
is stable under a compactly supported perturbation
of the boundary conditions.
In fact, we will establish a stronger result,
namely that the difference of the resolvents of~$H_\alpha$
and~$H_{\alpha_0}$ is a compact operator.
As an auxiliary result, we shall need the following lemma.
\begin{lemma}\label{Lem.bvp}
Let $\alpha_0\in\Real$ and $\varphi\in\sii(\partial\Omega)$.
There exist positive constants~$c$ and~$C$,
depending on~$d$ and~$|\alpha_0|$,
such that any solution $\Psi\in\H^1(\Omega)$
of the boundary value problem
\begin{equation}\label{bvp}
\left\{
\begin{aligned}
  (-\Delta-z)\Psi &= 0
  &\mbox{in}& \quad \Omega
  \,, \\
  (\partial_2+\iu\alpha_0)\Psi &= \varphi
  &\mbox{on}& \quad \partial\Omega
  \,,
\end{aligned}
\right.
\end{equation}
with any $z \leqslant -c$,
satisfies the estimate
\begin{equation}\label{bvp-estimate}
  \|\Psi\|_{\H^1(\Omega)}
  \ \leqslant \
  C \, \|\varphi\|_{\sii(\partial\Omega)}
  \,.
\end{equation}
\end{lemma}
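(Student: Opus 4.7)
The plan is a standard energy estimate that exploits the $\PT$-symmetric character of the boundary condition, namely the fact that the imaginary Robin coefficient drops out after taking real parts. First, I would read~\eqref{bvp} as the weak identity
\begin{equation*}
h_{\alpha_0}(\Psi,\Phi)-z(\Psi,\Phi)_{\sii(\Omega)}=\int_\mathbb{R}\varphi(x_1,d)\overline{\Phi(x_1,d)}\di x_1 - \int_\mathbb{R}\varphi(x_1,0)\overline{\Phi(x_1,0)}\di x_1
\end{equation*}
for all $\Phi\in\H^1(\Omega)$, where $h_{\alpha_0}$ is the form introduced in Section~\ref{Sec.Domain}. This is the canonical formulation attached to $H_{\alpha_0}$ and coincides with the pointwise reading of~\eqref{bvp} whenever $\Psi$ happens to lie in $\H^2(\Omega)$. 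Choosing $\Phi=\Psi$ in the identity and noticing that $h_{\alpha_0}^2[\Psi]$ is purely real while $z\leqslant 0$ is real, taking real parts kills the $\iu\alpha_0$ contribution and yields
\begin{equation*}
\|\nabla\Psi\|_{\sii(\Omega)}^2 + |z|\,\|\Psi\|_{\sii(\Omega)}^2 \ \leqslant\ \|\varphi\|_{\sii(\partial\Omega)}\,\|\Psi\|_{\sii(\partial\Omega)} \,.
\end{equation*}

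Next, I would invoke the trace inequality for the strip,
\begin{equation*}
\|\Psi\|_{\sii(\partial\Omega)}^2\ \leqslant\ \delta\,\|\nabla\Psi\|_{\sii(\Omega)}^2+C_\delta\,\|\Psi\|_{\sii(\Omega)}^2 \,,
\end{equation*}
valid for every $\delta>0$ with $C_\delta$ depending only on $\delta$ and $d$. This follows in one line from the identity $|\Psi(x_1,d)|^2=\int_0^d\p_2(\chi|\Psi|^2)\di x_2$ with the linear cutoff $\chi(x_2)=x_2/d$ (and its mirror image at the bottom boundary), integration in $x_1$, and Cauchy's inequality with $\delta$. Combining this with another use of Cauchy's inequality on the bilinear right-hand side $\|\varphi\|_{\sii(\partial\Omega)}\|\Psi\|_{\sii(\partial\Omega)}$, I pick the two small parameters so that the produced multiple of $\|\nabla\Psi\|_{\sii(\Omega)}^2$ is absorbed into the $\|\nabla\Psi\|_{\sii(\Omega)}^2$ on the left, while the produced multiple of $\|\Psi\|_{\sii(\Omega)}^2$ is absorbed into the $|z|\,\|\Psi\|_{\sii(\Omega)}^2$ term---the latter being possible as soon as $|z|$ exceeds a threshold $c$ that depends only on $d$. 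What remains after absorption is an estimate of the form $\|\Psi\|_{\H^1(\Omega)}^2\leqslant C\,\|\varphi\|_{\sii(\partial\Omega)}^2$, which is~\eqref{bvp-estimate}.

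The only real ``obstacle'' is bookkeeping: one has to make sure that the weak identity above is indeed the right sense in which $\Psi\in\H^1(\Omega)$ solves~\eqref{bvp}, so that $\Psi$ itself is admissible as test function and the boundary traces appearing in the identity are the $\sii(\p\Omega)$-traces provided by~\cite{Adams}. Everything else is routine. Note that the argument is insensitive to $|\alpha_0|$: the whole point of taking the real part is to eliminate the non-self-adjoint piece of the boundary condition, so the constants $c$ and $C$ can in fact be chosen to depend on $d$ alone, which is consistent with the statement.
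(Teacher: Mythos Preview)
Your proof is correct and follows essentially the same energy-estimate strategy as the paper: test the equation against~$\Psi$ itself, control the resulting boundary terms by a trace-type inequality, and absorb with large~$|z|$. The one noteworthy difference is that you eliminate the $\iu\alpha_0$ boundary contribution by taking real parts, whereas the paper keeps that term and estimates it directly via
\[
\Big|\int_{\partial\Omega}\nu_2\,|\Psi|^2\Big|
=\Big|\int_\Omega\partial_2|\Psi|^2\Big|
=2\,|\RE(\Psi,\partial_2\Psi)|
\leqslant \delta^{-1}\|\Psi\|_{\sii(\Omega)}^2+\delta\,\|\nabla\Psi\|_{\sii(\Omega)}^2\,.
\]
Your real-part trick is a genuine simplification and, as you note, yields constants $c$ and~$C$ that depend on~$d$ alone, which the paper's route does not.
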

\begin{proof}
Multiplying the first equation of~(\ref{bvp})
by~$\overline{\Psi}$ and integrating over~$\Omega$, we arrive at
the identity
\begin{equation}\label{bvp-identity}
  \|\nabla\Psi\|^2 - z \, \|\Psi\|^2
  +\iu \alpha_0 \int_{\partial\Omega} \nu_2 \, |\Psi|^2
  - \int_{\partial\Omega} \nu_2 \, \varphi\overline{\Psi}
  = 0
  \,,
\end{equation}
where~$\nu_2$ denotes the second component
of the outward unit normal vector to~$\partial\Omega$.
Using the Schwarz and Cauchy inequalities
together with $|\nu_2|=1$, we have
\begin{align*}
  \left|
  \int_{\partial\Omega} \nu_2 \, |\Psi|^2
  \right|
  &=
  \left|
  \int_\Omega \partial_2|\Psi|^2
  \right|
  = 2 \left|
  \RE\,(\Psi,\partial_2\Psi)
  \right|
  \leqslant
  \delta^{-1} \, \|\Psi\|^2 + \delta \, \|\nabla\Psi\|^2
  \,,
  \\
  2 \left|
  \int_{\partial\Omega} \nu_2 \, \varphi\overline{\Psi}
  \right|
  &\leqslant
  \delta^{-1} \, \|\varphi\|_{\sii(\partial\Omega)}
  +
  \delta \,\|\Psi\|_{\sii(\partial\Omega)}
  \,,
\end{align*}
with any $\delta\in(0,1)$.
Here
$
  \|\Psi\|_{\sii(\partial\Omega)}
  \leqslant C \, \|\Psi\|_{\H^1(\Omega)}
$, where~$C$ is the constant coming from the embedding of
$\H^1(\Omega)$ in $\sii(\partial\Omega)$ (depending only on~$d$
in our case). Choosing now sufficiently small~$\delta$ and
sufficiently large negative~$z$, it is clear
that~(\ref{bvp-identity}) can be cast into the
inequality~(\ref{bvp-estimate}).
\end{proof}

Now we are in a position to prove
\begin{proposition}\label{compact}
Let $\alpha-\alpha_0 \in C_0(\Real) \cap W_\infty^1(\Real)$
with $\alpha_0\in\Real$.
Then
$$
  (H_\alpha-z)^{-1} - (H_{\alpha_0}-z)^{-1}
  \quad
  \mbox{is compact in $\sii(\Omega)$}
$$
for any $z\in\rho(H_\alpha)\cap\rho(H_{\alpha_0})$.
\end{proposition}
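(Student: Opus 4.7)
The plan is to reduce the problem to the boundary value problem governed by Lemma~\ref{Lem.bvp}, exploiting the crucial hypothesis that $\alpha-\alpha_0$ has compact support. Given $f\in\sii(\Om)$, I set $u_\alpha:=(H_\alpha-z)^{-1}f$ and $u_0:=(H_{\alpha_0}-z)^{-1}f$; both lie in $\H^2(\Om)$ and solve the same equation $(-\D-z)u=f$ in~$\Om$, so their difference $w:=u_\alpha-u_0$ is a strong solution of
\begin{equation*}
(-\D-z)w=0 \ \text{in} \ \Om, \qquad (\p_2+\iu\alpha_0)w=-\iu(\alpha-\alpha_0)\,u_\alpha \ \text{on} \ \p\Om,
\end{equation*}
whose boundary datum is supported in a fixed bounded segment $K\subset\p\Om$ containing the support of $\alpha-\alpha_0$. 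This is precisely the setting to which Lemma~\ref{Lem.bvp} applies once $z$ is real and sufficiently negative, and it is the localisation of the boundary datum that will produce compactness.

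I first fix $z_0\in\Real$ with $z_0<-c$ (the constant of Lemma~\ref{Lem.bvp}); such a $z_0$ lies in $\rho(H_\alpha)\cap\rho(H_{\alpha_0})$ by~\eqref{sector}. The lemma then yields
\begin{equation*}
\|w\|_{\H^1(\Om)}\leqslant C\,\|(\alpha-\alpha_0)\,u_\alpha\|_{\sii(\p\Om)},
\end{equation*}
so the compactness of $K_{z_0}:=(H_\alpha-z_0)^{-1}-(H_{\alpha_0}-z_0)^{-1}$ on $\sii(\Om)$ reduces to the compactness of the auxiliary map $T:f\mapsto(\alpha-\alpha_0)\,u_\alpha|_{\p\Om}$ from $\sii(\Om)$ to $\sii(\p\Om)$. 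I would factor $T$ as
\begin{equation*}
\sii(\Om)\xrightarrow{(H_\alpha-z_0)^{-1}}\H^1(\Om)\xrightarrow{\mathrm{trace}}\H^{1/2}(\p\Om)\xrightarrow{\cdot\,(\alpha-\alpha_0)}\H^{1/2}(K)\hookrightarrow\sii(K)\subset\sii(\p\Om),
\end{equation*}
where the $\H^1$-bound for $(H_\alpha-z_0)^{-1}$ follows from form coercivity via Lemma~\ref{lm3.0a} once $|z_0|$ is large, and the multiplication step is bounded thanks to $\alpha-\alpha_0\in W_\infty^1(\Real)$. Only the penultimate embedding needs to be compact, and it is, by the Rellich--Kondrachov theorem applied to the compact set~$K$. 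Hence $T$ is compact and so is $K_{z_0}$.

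To lift the result from $z_0$ to an arbitrary $z\in\rho(H_\alpha)\cap\rho(H_{\alpha_0})$, I apply the first resolvent identity to $H_\alpha$ and $H_{\alpha_0}$ separately and combine them by a short algebraic rearrangement to obtain
\begin{equation*}
K_z=\bigl[I+(z-z_0)(H_\alpha-z)^{-1}\bigr]\,K_{z_0}\,\bigl[I+(z-z_0)(H_{\alpha_0}-z)^{-1}\bigr],
\end{equation*}
where $K_z:=(H_\alpha-z)^{-1}-(H_{\alpha_0}-z)^{-1}$. The outer factors are bounded on $\sii(\Om)$, so the compactness of $K_{z_0}$ transfers to $K_z$. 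The single nontrivial point in the whole scheme is the trace step: it is precisely the compact support of $\alpha-\alpha_0$ combined with the $\H^1$-gain from Lemma~\ref{Lem.bvp} that converts the non-compact global trace $\H^1(\Om)\to\sii(\p\Om)$ into a compact operator via Rellich--Kondrachov on the bounded piece~$K$. Without this localisation the argument would fail, which is exactly why compact support of $\alpha-\alpha_0$ is assumed.
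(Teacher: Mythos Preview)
Your proof is correct and follows essentially the same strategy as the paper: write the resolvent difference as the solution of the boundary value problem of Lemma~\ref{Lem.bvp} with boundary datum $-\iu(\alpha-\alpha_0)$ times the trace of $(H_\alpha-z)^{-1}f$, and then invoke Rellich--Kondrachov on the compact support of $\alpha-\alpha_0$ to get compactness of the trace map. The only minor variations are that the paper uses the $\H^2(\Om)\to\H^1(\p\Om)$ trace (available since $\Dom(H_\alpha)\subset\H^2(\Om)$) rather than your $\H^1(\Om)\to\H^{1/2}(\p\Om)$ trace, and that it leaves the resolvent-identity passage to general~$z$ implicit where you spell it out.
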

\begin{proof}
It is enough to prove the result for one~$z$
in the intersection of the resolvent sets
of~$H_\alpha$ and~$H_{\alpha_0}$,
and we can assume that the one is negative
(since the operators are $m$-accretive).
Given $\Phi\in\sii(\Omega)$, let
$
  \Psi :=
  (H_\alpha-z)^{-1}\Phi - (H_{\alpha_0}-z)^{-1}\Phi
$. It is easy to check that~$\Psi$ is the unique solution to~(\ref{bvp})
with $\varphi:= -\iu(\alpha-\alpha_0) T (H_\alpha-z)^{-1}\Phi$,
where~$T$ denotes the trace operator from
$\H^2(\Omega) \supset \Dom(H_\alpha)$ to $\H^1(\partial\Omega)$.
By virtue of Lemma~\ref{Lem.bvp},
it is therefore enough to show that
$
  (\alpha-\alpha_0) T (H_\alpha-z)^{-1}
$
is a compact operator
from $\sii(\Omega)$ to $\sii(\partial\Omega)$.
However, this property follows from the fact that
$\H^1(\partial\Omega)$ is compactly embedded in $\sii(\omega)$
for every bounded subset~$\omega$ of~$\partial\Omega$,
due to the Rellich-Kondrachov theorem \cite[Sec.~VI]{Adams}.
\end{proof}
\begin{corollary}
Suppose the hypothesis of Proposition~\ref{compact}. Then
$$
  \conspec(H_\alpha) = [\mu_0^2,+\infty)
  \,.
$$
\end{corollary}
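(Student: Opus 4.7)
The plan is to deduce the corollary from Proposition~\ref{compact} together with the explicit knowledge of the essential spectrum in the unperturbed case, via a standard stability theorem for Wolf's essential spectrum under compact perturbations of the resolvent.

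First I would fix a convenient spectral parameter $z$ lying in $\rho(H_\alpha)\cap\rho(H_{\alpha_0})$. Such points exist abundantly: by the sector inclusion~\eqref{sector} applied to both $\alpha$ and $\alpha_0$, any sufficiently large negative real number $z$ belongs to the complement of $\Xi_\alpha\cup\Xi_{\alpha_0}$, hence to the intersection of the two resolvent sets. Proposition~\ref{compact} then provides that $K:=(H_\alpha-z)^{-1}-(H_{\alpha_0}-z)^{-1}$ is compact on $L_2(\Omega)$.

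Next I would invoke the stability theorem for the essential spectrum under compact resolvent perturbations in the form stated in Edmunds--Evans \cite[Thm.~IX.2.1 and Cor.~IX.2.2]{Edmunds-Evans}: if two closed operators have a common point in their resolvent sets at which the difference of the resolvents is compact, then their Wolf essential spectra coincide. Applied to $H_\alpha$ and $H_{\alpha_0}$, this yields
\begin{equation*}
\conspec(H_\alpha)=\conspec(H_{\alpha_0}).
\end{equation*}
Since we have already shown in~\eqref{separation} (Proposition~\ref{Prop.spec0}) that $\conspec(H_{\alpha_0})=[\mu_0^2,+\infty)$, the corollary follows. As noted in Remark~\ref{Rem.ess}, the $\mathcal{T}$-self-adjointness of $H_\alpha$ ensures that the Wolf and Kato (and several other) essential spectra coincide here, so there is no ambiguity in the notation $\conspec$.

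There is essentially no obstacle beyond citing the right stability result; the substantive analytic work has already been done in Proposition~\ref{compact}. The only point that requires a moment's care is confirming non-emptiness of $\rho(H_\alpha)\cap\rho(H_{\alpha_0})$, which is handled by~\eqref{sector}, and making sure the chosen notion of essential spectrum is one for which compact perturbation of the resolvent preserves it; both are immediate in the present $\mathcal{T}$-self-adjoint setting.
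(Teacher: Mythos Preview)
Your proposal is correct and follows essentially the same approach as the paper: both deduce the corollary directly from Proposition~\ref{compact} by invoking the stability of the (Wolf) essential spectrum under compact perturbations of the resolvent, citing the relevant result in Edmunds--Evans, and then appealing to the already-computed $\conspec(H_{\alpha_0})=[\mu_0^2,+\infty)$. The paper's proof is a one-liner citing \cite[Thm.~IX.2.4]{Edmunds-Evans}; your version spells out a few more details (existence of a common resolvent point via~\eqref{sector}, the precise stability statement) but is otherwise identical in substance.
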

\begin{proof}
Our definition of essential spectrum is indeed stable
under relatively compact perturbations
\cite[Thm.~IX.2.4]{Edmunds-Evans}.
\end{proof}
%

\section{Point spectrum}\label{Sec.point}
%
In this section we prove Theorems~\ref{th1.2a}--\ref{th2.2}
and Propositions~\ref{th2.3}--\ref{th2.4}.
In the proofs of Theorems~\ref{th2.0}--\ref{th2.2}
we follow the main ideas of~\cite{Ga}.
Throughout this section we assume that the identity~(\ref{1.2b})
and the assumption~(\ref{hypothesis}) hold true.

\subsection{Proof of Theorem~\ref{th1.2a}}
%
Any eigenvalue of infinite geometric multiplicity belongs to the
essential spectrum which is real by Theorem~\ref{Thm.ess}. Let $\l$
be an eigenvalue of $H_\alpha$ of finite geometric multiplicity,
and $\Psi$ be an associated eigenfunction. Using that~$\alpha$ is of compact
support, it is easy to check that $x \mapsto \Psi(-x_1,d-x_2)$ is an
eigenfunction associated with $\l$, too. The geometric multiplicity of $\l$
being finite, we conclude that at least one of the eigenfunction
associated with $\l$ satisfies $|\Psi(x)|=|\Psi(-x_1,d-x_2)|$.
Taking this identity into account, integrating by parts and using
the hypothesis that~$\alpha$ is odd, we obtain
\begin{equation*}
  \l \, \|\Psi\|_{L_2(\Omega)}^2
  = h_\alpha^{1}[\Psi]+\iu h_\alpha^{2}[\Psi]
  = \|\nabla\Psi\|_{L_2(\Omega)}^2
  \,,
\end{equation*}
which implies that $\lambda$ is real.
\hfill\qed

\subsection{Auxiliary results}
%
Let a function $F\in L_2(\Om)$ be such that $\supp
F\subseteq\overline{\Om_b}$ for fixed $b>0$.
We consider the boundary value problem
\begin{equation}\label{4.3}
\left\{
\begin{aligned}
  -\D U&=(\mu_0^2-k^2)U+F
  &\mbox{in }& \quad \Omega
  \,,
  \\
  \left(\p_2+\iu\alpha_0\right)U&=0
  &\mbox{on }& \quad \p\Om
  \,,
\end{aligned}
\right.
\end{equation}
where the parameter $k\in\mathbb{C}$ ranges in a small
neighbourhood of zero. The problem can be solved by
separation of variables justified in Lemma~\ref{Lem.constant2}
whenever $k^2\not\in(-\infty,0]$. Moreover, it is possible to extend
the solution of~(\ref{4.3}) analytically with respect to~$k$.
Namely, the following statement is valid.
\begin{lemma}\label{lm4.2}
For all  small $k\in \mathbb{C}$ there exists the unique
solution to \eqref{4.3} satisfying
\begin{equation}\label{4.5}
  U(x;k)=c_\pm(k)\E^{-k|x_1|}\psi_0(x_2)+
  \Odr\big(\E^{-\RE\sqrt{\mu_1^2-\mu_0^2+k^2}|x_1|}\big)
  \,,
\end{equation}
in the limit $x_1\to\pm\infty$, where $c_\pm(k)$ are constants.
The mapping $T_1(k)$ defined as $T_1(k)F:=U$ is a bounded linear
operator from $L_2(\Om_b)$ into $\H^2(\Om_a)$ for each $a>0$.
This operator is meromorphic with respect to~$k$
and has the simple pole at zero,
\begin{gather*}
T_1(k)F=\frac{\psi_0}{2k}\int\limits_{\Om}
F(x)\overline{\phi_0}(x_2)\di x+T_2(k)F,
\end{gather*}
where  for each $a>0$
the operator $T_2(k): L_2(\Om_b)\to\H^2(\Om_a)$ is linear,
bounded and holomorphic with respect to~$k$ small enough.
The function $\widehat{U}:=T_2(0)F$ is the unique
solution to the problem
\begin{equation}\label{4.9}
\left\{
\begin{aligned}
  -\D \widehat{U}=\mu_0^2\;\!\widehat{U}+F
  \quad \mbox{in} \quad \Omega
  , \qquad
  \left(\partial_2+\iu\alpha_0\right)\widehat{U}=0
  \quad \mbox{on} \quad \partial\Omega
  ,
  \\
  \widehat{U}(x)=-\frac{\psi_0(x_2)}{2}\int\limits_{\Om}
  |x_1-t_1|F(t)\overline{\phi_0}(t_2)\di
  t+\Odr\big(\E^{-\sqrt{\mu_1^2-\mu_0^2}|x_1|}\big)
  , \quad |x_1|\to+\infty ,
\end{aligned}
\right.
\end{equation}
given by the formula
\begin{equation}\label{4.19}
  \widehat{U}(x)=\sum\limits_{j=0}^{\infty}
  \widehat{U}_j(x_1)\psi_j(x_2)
\end{equation}
with
\begin{equation*}
  \widehat{U}_j(x_1) :=
  \begin{cases}
    \displaystyle
    -\frac{1}{2}\int\limits_{\Om} |x_1-t_1|
    F(t)\overline{\phi_0}(t_2)\di t
    & \mbox{if} \quad j=0
    \,,
    \\
    \displaystyle
    \frac{1}{2\sqrt{\mu_j^2-\mu_0^2}}
    \int\limits_{\Om} \E^{-\sqrt{\mu_j^2-\mu_0^2}|x_1-t_1|}
    F(t)\overline{\phi_j}(t_2)\di t
    & \mbox{if} \quad j \geqslant 1
    \,.
  \end{cases}
\end{equation*}
\end{lemma}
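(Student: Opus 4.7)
The plan is to view~\eqref{4.3} as the resolvent equation for $H_{\alpha_0}$ at $z=\mu_0^2-k^2$ and to apply the biorthonormal decomposition of Lemma~\ref{Lem.constant2}. For $k$ in a punctured neighbourhood of $0$ with $\RE k>0$ we have $z\in\Com\setminus[\mu_0^2,+\infty)$, so
\begin{equation*}
  U(x;k)=\sum_{j=0}^\infty U_j(x_1;k)\,\psi_j(x_2),\qquad
  U_j(\cdot;k)=\bigl(-\D^\Real+\kappa_j(k)^2\bigr)^{-1}F_j,
\end{equation*}
where $\kappa_0(k):=k$, $\kappa_j(k):=\sqrt{\mu_j^2-\mu_0^2+k^2}$ for $j\geqslant 1$ (principal branch), and $F_j(x_1):=\bigl(F(x_1,\cdot),\phi_j\bigr)_{\sii(I)}$. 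Inserting the explicit kernel $\E^{-\kappa_j(k)|x_1-t_1|}/(2\kappa_j(k))$ of the one-dimensional resolvent gives an integral representation for each mode.

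I would then extract the pole at $k=0$. By~\eqref{hypothesis}, $\kappa_j(0)=\sqrt{\mu_j^2-\mu_0^2}>0$ for every $j\geqslant 1$, so those terms are holomorphic in $k$ near zero. The $j=0$ mode alone is singular, and the Laurent expansion
\begin{equation*}
  \frac{\E^{-k|x_1-t_1|}}{2k}=\frac{1}{2k}-\frac{|x_1-t_1|}{2}+\Odr(k)
\end{equation*}
isolates a simple pole with residue $\psi_0(x_2)\int_\Om F(t)\overline{\phi_0(t_2)}\di t/2$, matching the stated form of $T_1(k)$. The $k$-independent term reproduces $\widehat U_0$ of~\eqref{4.9}, and combined with the $j\geqslant 1$ contributions evaluated at $k=0$ it yields the series~\eqref{4.19} for $T_2(0)F=\widehat U$.

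The hard part will be to turn this termwise construction into a uniform $\sii(\Om_b)\to\H^2(\Om_a)$ bound on $T_2(k)$ and to verify its holomorphy. I would mirror the convergence analysis of Lemma~\ref{Lem.constant2}: the one-dimensional resolvent estimates
\begin{equation*}
  \|U_j(\cdot;k)\|_{\sii(\Real)}\leqslant \frac{C\,\|F_j\|_{\sii(\Real)}}{j^2+1},\qquad
  \|U_j'(\cdot;k)\|_{\sii(\Real)}\leqslant \frac{C\,\|F_j\|_{\sii(\Real)}}{\sqrt{j^2+1}}
\end{equation*}
hold uniformly for $k$ in a small disk, and together with the mode-coupling bounds~\eqref{3.2a}--\eqref{3.4b} they give convergence of the series in $\H^1(\Om)$. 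Compact support of $F$ and interior elliptic regularity then promote this to $\H^2(\Om_a)$, while termwise holomorphy and a uniform majorant yield the holomorphy of $T_2(k)$.

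For the asymptotics~\eqref{4.5} I would exploit that $F$ is supported in $\Om_b$: for $|x_1|>b$ the convolution in $t_1$ factors as $\E^{-\kappa_j(k)|x_1|}$ times an $x_1$-independent constant depending only on $\sgn(x_1)$. The $j=0$ mode contributes the leading $c_\pm(k)\E^{-k|x_1|}\psi_0(x_2)$, while the $j\geqslant 1$ terms decay at least as $\E^{-\RE\kappa_1(k)|x_1|}$. Uniqueness follows from the same expansion: the difference of two solutions satisfying~\eqref{4.5} has modal coefficients $V_j$ obeying $-V_j''+\kappa_j(k)^2V_j=0$ on $\Real$ with decay at both $\pm\infty$, which forces $V_j\equiv 0$. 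Finally, that $\widehat U=T_2(0)F$ solves~\eqref{4.9} is verified by passing to the limit $k\to 0$ in $(-\D-\mu_0^2+k^2)T_2(k)F=F$ and using $\p_{11}|x_1-t_1|=2\d(x_1-t_1)$ distributionally.
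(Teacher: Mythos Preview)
Your approach is correct and is essentially the natural one: decompose into transverse modes using the biorthonormal basis, read off the one-dimensional resolvent kernels, and isolate the simple pole from the $j=0$ mode. The paper itself does not give a proof but simply refers the reader to \cite[Lemma~3.1]{Bo}, whose argument proceeds along the same lines; your sketch is in fact more explicit than what appears here and dovetails cleanly with the machinery already set up in Lemma~\ref{Lem.constant2}.
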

The lemma is proved in the same way as Lemma~3.1 in~\cite{Bo}.

Let $M_\e$ be the operator of multiplication by the function
$x \mapsto \E^{-\iu\e\b(x_1)x_2}$.
It is straightforward to check that~$H_\alpha$
is unitarily equivalent to the operator
\begin{equation*}
  M_\e^{-1}H_\alpha M_\e=H_{\alpha_0}-\e L_\e
  \,,
\end{equation*}
where
\begin{equation*}
  L_\e=-2\iu\b'(x_1)x_2\frac{\p}{\p x_1}-2\iu\b(x_1)\frac{\p}{\p
  x_2}-\left(\e\b^2(x_1)+\iu\b''(x_1) x_2+\e \b'^2(x_1)
  x_2^2\right).
\end{equation*}
We observe that the coefficients of~$L_\e$
are compactly supported
and that their supports are bounded uniformly in~$\e$.

It follows that the eigenvalue equation for~$H_\alpha$ is equivalent
to
\begin{equation*}
  H_{\alpha_0} U =\l U+\e L_\e U,
\end{equation*}
where an eigenfunction~$\Psi$ of~$H_\alpha$ satisfies $\Psi=M_\e U$.
It can be rewritten as
\begin{equation}\label{4.8}
  \left\{
  \begin{aligned}
  -\D U&=(\mu_0^2-k^2) U+\e L_\e U
  & \mbox{in}& \quad \Om \,,
  \\
  (\partial_2+\iu\alpha_0) U &=0
  & \mbox{on}& \quad \p\Om \,,
  \end{aligned}
  \right.
\end{equation}
where we have replaced $\l$ by $\mu_0^2-k^2$.

Now, let $\l$ be an eigenvalue for~$H_\alpha$ close to $\mu_0^2$. As
$x_1\to\pm\infty$, the solution~$U$ to~(\ref{4.8}) satisfies the
asymptotic formula~(\ref{4.5}), where $k=\sqrt{\mu_0^2-\l}$ and the
branch of the root is specified by the requirement $\RE k>0$. Such
restriction guarantees that the function~$U$ together with their
derivatives decays exponentially at infinity and thus belongs to
$\H^2(\Om)$. Hence, the set of all~$k$ for which the problem
(\ref{4.8}), (\ref{4.5}) has a nontrivial solution includes the set
of all values of~$k$ related to the eigenvalues of $H_\alpha$ by the
relation $\l=\mu_0^2-k^2$. Thus, it is sufficient to find all
small~$k$ for which a nontrivial solution to (\ref{4.8}),
(\ref{4.5}) exists and to check whether the solution belongs to
$\H^2(\Om)$. If it does, the corresponding number $\l=\mu_0^2-k^2$
is an eigenvalue of~$H_\alpha$.

We introduce the numbers
\begin{equation*}
  k_1(\e):=\frac{1}{2}\int\limits_{\Om}
  \overline{\phi_0}(x_2) (L_\e\psi_0)(x) \di x
  \,, \quad
  k_2(\e):=\frac{1}{2}\int\limits_{\Om}
  \overline{\phi_0}(x_2) \big(L_\e T_2(0) L_\e\psi_0\big)(x) \di x
  \,.
\end{equation*}
Basing on Lemma~\ref{lm4.2} and arguing in the same way as in
\cite[Sec. 2]{Ga} one can prove easily the following statement
(see also \cite[Sec. 4]{Bo}).
\begin{lemma}\label{lm4.3}
There exists the unique function $\e \mapsto k(\e)$
converging to zero as $\e\to+0$ for which the problem
\eqref{4.8}, \eqref{4.5} has a nontrivial solution.
It satisfies the asymptotics
\begin{equation*}
k(\e)=\e k_1(\e)+\e^2 k_2(\e)+\Odr(\e^3).
\end{equation*}
The associated nontrivial solution to \eqref{4.8}, \eqref{4.5}
is unique up to a multiplicative constant and can be chosen so
that it obeys~\eqref{4.5} with
\begin{equation}\label{4.12}
c_\pm(k(\e))=1+\Odr(\e),\quad \e\to+0,
\end{equation}
as well as
\begin{equation*}
U(x;\e)=\psi_0(x_2)+\Odr(\e)
\end{equation*}
in $\H^2(\Om_a)$ for each fixed $a>0$.
\end{lemma}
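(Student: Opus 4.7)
The plan is to convert the boundary value problem \eqref{4.8}--\eqref{4.5} into an integral equation and then perform a Lyapunov--Schmidt-type reduction using the pole structure of $T_1(k)$ furnished by Lemma~\ref{lm4.2}. The crucial feature that makes $T_1(k)L_\e$ a meaningful object is that the coefficients of $L_\e$ are compactly supported with supports independent of~$\e$, so that $L_\e U$ is automatically supported in $\overline{\Om_b}$.

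Any nontrivial solution $U$ of \eqref{4.8}--\eqref{4.5} satisfies $U = \e T_1(k) L_\e U$. Splitting $T_1(k)$ into its pole part and the holomorphic remainder $T_2(k)$, this reads
\begin{equation*}
U = \frac{\e\psi_0}{2k}\la L_\e U,\overline{\phi_0}\ra_\Om + \e T_2(k) L_\e U.
\end{equation*}
Since the equation is linear and homogeneous, I would exploit the freedom to rescale $U$ by normalising the prefactor of $\psi_0$ to $1$, thereby reducing the existence problem to the pair
\begin{equation*}
U = \psi_0 + \e T_2(k) L_\e U,\qquad 2k = \e\la L_\e U,\overline{\phi_0}\ra_\Om.
\end{equation*}
The first is a regular fixed-point equation for $U$ at parameter $(k,\e)$; the second is the scalar dispersion relation that selects~$k$.

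I would solve the fixed-point equation by the Banach contraction principle on $\H^2(\Om_a)$ for some fixed $a>b$. By Lemma~\ref{lm4.2}, $T_2(k):\sii(\Om_b)\to\H^2(\Om_a)$ is bounded uniformly for $k$ in a neighbourhood of~$0$, and $L_\e$ maps $\H^2(\Om_a)$ into $\sii(\Om_b)$ with a bound uniform in~$\e$; hence $\e T_2(k)L_\e$ is a strict contraction for small~$\e$. This yields a unique fixed point $U(\,\cdot\,;k,\e)$ depending holomorphically on $(k,\e)$ near $(0,0)$, and a single iteration delivers $U=\psi_0+\e T_2(k)L_\e\psi_0+\Odr(\e^2)$ in $\H^2(\Om_a)$. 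Inserting this into the dispersion relation produces a scalar analytic equation $2k=F(k,\e)$ with $F(0,0)=0$, $\p_kF(0,0)=0$; using $T_2(k)=T_2(0)+\Odr(k)$ gives $F(k,\e)=2\e k_1(\e)+2\e^2k_2(\e)+\Odr(\e^3+\e k)$, and iteration (equivalently, the implicit function theorem) produces the unique small root $k(\e)=\e k_1(\e)+\e^2 k_2(\e)+\Odr(\e^3)$.

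Once the dispersion relation holds, the fixed-point solution automatically satisfies $U=\e T_1(k(\e))L_\e U$, so by Lemma~\ref{lm4.2} it obeys \eqref{4.5} with $c_\pm(k(\e))=1+\Odr(\e)$, the leading~$1$ coming from the normalisation $U=\psi_0+\Odr(\e)$ and the correction from the $\Odr(\e)$ contribution of $T_2(k)L_\e\psi_0$. I expect the main obstacle to be the clean identification of the right normalisation that turns the naive equation $U=\e T_1(k)L_\e U$, whose operator is singular at $k=0$, into a regular fixed-point problem plus a scalar dispersion condition; once that reduction is achieved, the contraction argument and the implicit-function-theorem step are routine, and matching the expansion coefficients against the definitions of $k_1(\e)$ and $k_2(\e)$ is a straightforward bookkeeping.
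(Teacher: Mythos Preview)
Your proposal is correct and follows precisely the Lyapunov--Schmidt reduction scheme of \cite[Sec.~2]{Ga} and \cite[Sec.~4]{Bo} to which the paper defers for the proof of this lemma: rewrite \eqref{4.8}--\eqref{4.5} as $U=\e T_1(k)L_\e U$, split off the pole of $T_1(k)$ to obtain the regular fixed-point equation $U=\psi_0+\e T_2(k)L_\e U$ together with the scalar dispersion relation $2k=\e\int_\Om\overline{\phi_0}L_\e U$, solve the former by contraction, and the latter by iteration/implicit function theorem. The paper gives no independent argument beyond citing these references, so there is nothing further to compare.
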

%

\subsection{Proof of Theorems~\ref{th2.1} and \ref{th2.2}}\label{Sec.main}
%
It follows from Lemma~\ref{lm4.3} that there is at most one simple
eigenvalue of~$H_\alpha$ converging to $\mu_0^2$ as $\e\to+0$. A
sufficient condition guaranteeing the existence of such eigenvalue
is the inequality
\begin{equation}\label{4.14}
\RE \big(k_1(\e)+\e k_2(\e)\big)\geqslant C(\e)\,\e^2, \qquad
C(\e)\to+\infty, \quad \e\to+0,
\end{equation}
that is implied by (\ref{4.5}), (\ref{4.12}), the definition of
the operator~$M_\e$ and the assumption on~$\b$.
The sufficient condition of the absence of the eigenvalue
is the opposite inequality
\begin{equation}\label{4.19a}
\RE \big(k_1(\e)+\e k_2(\e)\big)\leqslant -C(\e)\,\e^2, \qquad
C(\e)\to+\infty, \quad \e\to+0.
\end{equation}
Thus, we just need to calculate the numbers $k_1$ and $k_2$ to
prove the theorems.

It is easy to compute the coefficient $k_1$,
\begin{equation}\label{4.15}
  k_1(\e) =
  \begin{cases}
    \displaystyle
    -\alpha_0\la\b\ra + k_1'(0) \, \e
    & \mbox{if} \quad |\alpha_0|<\pi/d \,,
    \\
    k_1'(0) \, \e
    & \mbox{if} \quad |\alpha_0|>\pi/d \,,
  \end{cases}
\end{equation}
where
\begin{equation*}
  k_1'(0) := -\frac{1}{2} \int\limits_\Om
  \psi_0(x_2)\overline{\phi_0}(x_2)\big(\b^2(x_1)+\b'^2(x_1) \, x_2^2\big)\di x
  \,.
\end{equation*}

It is more complicated technically to calculate $k_2$. This
coefficient depends on $\e$ as well; to prove the theorem we need the
leading term of its asymptotics as $\e\to+0$. We begin the
calculations by observing an obvious identity,
\begin{equation*}
  L_\e U=L_0 U+\Odr(\e) \,,
\end{equation*}
where
\begin{align*}
(L_0 U)(x):=&-2\iu\b'(x_1)x_2\frac{\p U(x)}{\p x_1}-2\iu\b(x_1)
\frac{\p U(x) }{\p x_2}-\iu\b''(x_1) x_2U(x)
\\
=&\,\iu\Big(\b(x_1)x_2\D U(x)-\D\big[\b(x_1)x_2 U(x)\big]\Big),
\end{align*}
which is valid for each $U\in\Hloc^2(\Om)$ in $L_2(\Om_a)$, if
$a$ is large enough and independent of $\e$. Thus,
\begin{equation*}
  k_2(\e)=k_2(0)+\Odr(\e),
  \qquad\mbox{where}\qquad
  k_2(0)=\frac{1}{2}\int\limits_{\Om}
  \overline{\phi_0}(x_2) \big(L_0 T_2(0) L_0\psi_0\big)(x) \di x.
\end{equation*}
We denote
$\widehat{U}:=T_2(0)\widehat{F}$ and $\widehat{F}:=L_0\psi_0$.
Taking into account the problem~(\ref{4.9}) for $\widehat{U}$
and integrating by parts, we obtain
\begin{align}\label{4.16}
  k_2(0)&=\frac{\iu}{2}\int\limits_\Om \overline{\phi_0}(x_2)
  \left(
  \b(x_1) x_2\D \widehat{U}(x)-\D \big[\b(x_1) x_2 \widehat{U}(x)\big]
  \right)\di x
  \nonumber \\
  &=-\frac{\iu}{2} \int\limits_\Om\overline{\phi_0}(x_2)
  (\D+\mu_0^2) \b(x_1) x_2 \widehat{U}(x)\di x
  -\frac{\iu}{2}\int\limits_\Om
  \overline{\phi_0}(x_2) \b(x_1) x_2 \widehat{F}(x) \di x
  \nonumber \\
  &=-\frac{\iu}{2} \left\langle
  \b\left[\overline{\phi_0}(d)\widehat{U}(\cdot,d)-
  \overline{\phi_0}(0)\widehat{U}(\cdot,0)\right]
  \right\rangle
  -\frac{\iu}{2}\int\limits_\Om
  \overline{\phi_0}(x_2) \b(x_1) x_2 \widehat{F}(x) \di x
  \,.
\end{align}
The last term on the right hand side of this identity is
calculated by integration by parts,
\begin{eqnarray*}
  \lefteqn{
  -\frac{\iu}{2}\int\limits_\Om
  \overline{\phi_0}(x_2) \b(x_1) x_2 \widehat{F}(x) \di x }
  \\
  &&= -\frac{1}{2} \int\limits_\Om \overline{\phi_0}(x_2)
  \b(x_1) x_2 \big[
  2\b(x_1) \psi_0'(x_2)+\b''(x_1) x_2\psi_0(x_2)
  \big] \di x
  \\
  &&=-\frac{1}{2}\int\limits_\Om
  \b^2(x_1) x_2
  \left(\overline{\phi_0}\psi_0\right)'(x_2) \di x
  +\frac{1}{2}\int\limits_\Om x_2^2 \b'^2(x_1)
  \overline{\phi_0}(x_2) \psi_0(x_2) \di x
  \\
  &&= -\frac{\overline{\phi_0}(d)\psi_0(d)d}{2} \la\b^2\ra
  +\frac{1}{2}\int\limits_\Om
  \overline{\phi_0}(x_2) \psi_0(x_2)
  \big[\b^2(x_1)+\b'^2(x_1) x_2^2\big] \di x
  \,.
\end{eqnarray*}
This formula, (\ref{4.15}) and (\ref{4.16}) yield
\begin{equation}\label{4.20b}
  k_1(\e)+\e k_2(\e)=
  \begin{cases}
  -\e\alpha_0\la\b\ra+\e^2 K+\Odr(\e^3)
  & \mbox{if} \quad |\alpha_0|<\pi/d \,,
  \\
  \e^2 K+\Odr(\e^3)
  & \mbox{if} \quad |\alpha_0|>\pi/d \,,
\end{cases}
\end{equation}
where
\begin{equation*}
  K := -\frac{\iu}{2}
  \left\langle
  \b \left[\overline{\phi_0}(d)\widehat{U}(\cdot,d)-
  \overline{\phi_0}(0)\widehat{U}(\cdot,0)\right]
  \right\rangle
  -\frac{\overline{\phi_0}(d)\psi_0(d)d}{2} \la\b^2\ra.
\end{equation*}
Thus, it remains to calculate~$K$.
In order to do it, we construct the function~$\widehat{U}$
as the series~(\ref{4.19}).

\paragraph{Case \underline{$|\alpha_0|<\pi/d$}\,:}
%
Using the identity
\begin{equation*}
  \widehat{F}(x) = -2\iu\b(x_1)\psi_0'(x_2)-\iu\b''(x_1) x_2\psi_0(x_2)
  \,,
\end{equation*}
one can check that
\begin{align*}
  \widehat{F}(x) &= -2\alpha_0\b(x_1)\psi_0(x_2)
  -\iu\b''(x_1) \sum\limits_{j=0}^{\infty}
  c_j \psi_j(x_2)
  \,,
  \\
  \widehat{U}_j(x_1) &=
  \begin{cases}
    \iu c_0 \b(x_1) - 2\alpha_0 v_0(x_1)
    & \mbox{if} \quad j=0 \,,
    \\
    \iu c_j
    \big[\b(x_1)-(\mu_j^2-\mu_0^2)v_j(x_1)\big]
    & \mbox{if} \quad j \geqslant 1 \,,
  \end{cases}
\end{align*}
where $c_j := \int_I x_2 \psi_0(x_2) \overline{\phi_j}(x_2) dx_2$
and the functions~$v_j$ were introduced in~(\ref{vjs}).
Substituting now the formulae for~$\widehat{U}_j$
and~(\ref{4.19}) into~(\ref{4.20b}),
we arrive at the following chain of identities
\begin{align*}
  K=&\frac{1}{2}
  \Big[
  \sum\limits_{j=0}^{\infty} c_j\psi_j(x_2) \overline{\phi}_0(x_2)
  \Big]_{x_2=0}^{x_2=d}
  \la\b^2\ra + 2\alpha_0^2 \la\b v_0\ra
  \\
  &-\frac{2\alpha_0}{d}\sum\limits_{j=1}^{\infty}
  \frac{\iu\big[\E^{\iu\alpha_0 d} -(-1)^j\big] \mu_j^2}
  {\big[\E^{\iu\alpha_0 d}+(-1)^j\big] (\mu_j^2-\mu_0^2)} \,
  \la\b v_j\ra
  - \frac{\overline{\phi}_0(d)\psi_0(d) d}{2}\la\b^2\ra
  \\
  =& \frac{1}{2}
  \Big[
  x_2\psi_0(x_2)\overline{\phi}_0(x_2)
  \Big]_{x_2=0}^{x_2=d}
  \la\b^2\ra
  +2\alpha_0^2\la\b v_0\ra
  \\
  & + \frac{2\alpha_0}{d}
  \sum\limits_{j=1}^{\infty} \frac{\mu_j^2\la \b
  v_j\ra}{\mu_j^2-\mu_0^2}\tan\frac{\alpha_0 d+\pi j}{2}
  -\frac{\overline{\phi}_0(d)\psi_0(d)d}{2} \la\b^2\ra
  \,,
\end{align*}
where the last expression coincides with~$\tau$ for
$|\alpha_0|<\pi/d$.

\paragraph{Case \underline{$|\alpha_0|>\pi/d$}\,:}
%
Following the same scheme as above, we arrive at
\begin{align*}
  \widehat{F}(x) =& -\iu\b''(x_1)
  \sum\limits_{j=0}^{\infty} c_j \psi_j(x_2)
  - \frac{4\alpha_0\psi_1(x_2)\b(x_1)}{1-\E^{-\iu\alpha_0d}}
  \\
  & -\frac{2\iu}{d}(\mu_0^2-\mu_1^2)
  \sum\limits_{j=1}^{\infty}\frac{\mu_{2j}^2\psi_{2j}(x_2)}
  {(\mu_{2j}^2-\mu_1^2) (\mu_{2j}^2-\mu_0^2)} \b(x_1)
  \,,
  \\
  \widehat{U}_j(x_1) =&
  \begin{cases}
    \iu c_0 \b(x_1)
    & \mbox{if} \quad j=0 \,,
    \\
    \displaystyle
    \iu c_1 \b(x_1)
    - \frac{2\alpha_0}{1-\E^{-\iu\alpha_0d}} v_1(x_1)
    & \mbox{if} \quad j=1 \,,
    \\
    \displaystyle
    \iu c_j \b(x_1)
    + \frac{2\iu\mu_j^2 \big[1+(-1)^j\big]}{(\mu_j^2-\mu_1^2)d} v_j(x_j)
    & \mbox{if} \quad j \geqslant 2 \,,
  \end{cases}
\end{align*}
and check that $K=\tau$ for $|\alpha_0|>\pi/d$.

The series in the formulae for~$\tau$ converge
since the functions~$v_j$ satisfy
\begin{equation}\label{4.20a}
  -v_j''+(\mu_j^2-\mu_0^2)v_j=\b
  \qquad\mbox{in}\quad\mathbb{R}
  \,,
\end{equation}
and by \cite[Ch. V, \S 3.5, Formula~(3.16)]{Kato}
($j \geqslant 1$)
\begin{equation}\label{1.5}
|\la \b v_j\ra|\leqslant
\|\b\|_{L_2(\mathbb{R})}\|v\|_{L_2(\mathbb{R})}\leqslant
\frac{\|\b\|_{L_2(\mathbb{R})}^2}{\mu_j^2-\mu_0^2}.
\end{equation}

Summing up,
\begin{equation*}
k_1(\e)+\e k_2(\e)=\left\{
\begin{aligned}
&-\e\alpha_0\la\b\ra+\e^2 \tau+\Odr(\e^3) && \text{if}\quad
|\alpha_0|<\pi/d \,,
\\
&\e^2 \tau+\Odr(\e^3) && \text{if}\quad |\alpha_0|>\pi/d \,.
\end{aligned}\right.
\end{equation*}
All the statements of the theorems
-- except for the reality of the eigenvalue --
follow from these formulae,
the identity $\l=\mu_0^2-k^2$,
the inequalities~(\ref{4.14}) and~(\ref{4.19a}),
Lemma~\ref{lm4.3}, the asymptotics~(\ref{4.5}) for~$U$,
and the definition of the operator~$M_\e$.

Let us show that~$\lambda_\e$ is necessarily real as $\e \to 0+$.
Let $\Psi_\e$ be the eigenfunction associated with the eigenvalue
$\l_\e$. It is easy to check that the function $x \mapsto
\overline{\Psi}_\e(x_1,d-x_2)$ is an eigenfunction of~$H_\alpha$
associated with the eigenvalue $\overline{\l}_\e$. This eigenvalue
converges to $\mu_0^2$ as $\e\to+0$. By the uniqueness of such
eigenvalue we obtain $\l_\e=\overline{\l}_\e$ that completes the
proof. \hfill\qed

\subsection{Proof of Theorem~\ref{th2.0}}
%
We employ here the same argument as in the previous proof. The
formula for $k(\e)$ in the case $\alpha_0=0$ can be obtained from
that for $|\alpha_0|<\pi/d$ by passing to the limit $\alpha_0\to0$.
It leads us to the relation
\begin{equation*}
  k(\e)=\e^2 \tau + \Odr(\e^3)
  \qquad\mbox{with}\qquad
  \tau=-\sum\limits_{j=0}^{\infty}
  \frac{4\la\b v_{2j+1}\ra}{\mu_{2j+1}d^2}
  \,.
\end{equation*}
To prove the theorem it is sufficient to show that $\tau<0$.
Indeed, the equation~(\ref{4.20a}) implies that for $j \geqslant 1$
\begin{equation}\label{4.24}
\la\b v_j\ra= \|v_j'\|_{L_2(\mathbb{R})}^2+
(\mu_j^2-\mu_0^2)\|v_j\|_{L_2(\mathbb{R})}^2>0
  \,.
\end{equation}
Thus, $\tau<0$.
\hfill\qed

\subsection{Proof of Proposition~\ref{th2.3}}
%
Since $\la\b\ra=0$, the function~$v_0$ is constant at infinity.
Hence, by the equation~(\ref{4.20a}) for $v_0$,
\begin{equation*}
\la\b v_0\ra=\|v_0'\|_{L_2(\mathbb{R})}^2.
\end{equation*}
At the same time, it follows from~(\ref{1.5}) and~(\ref{4.24})
that for $j \geqslant 1$,
\begin{equation*}
0<\la \b v_j\ra
<\frac{\|\b\|_{L_2(\mathbb{R})}^2}{\mu_j^2-\mu_0^2}.
\end{equation*}
The relations obtained allow us to estimate
\begin{align*}
\tau \ > \ & 2\alpha_0^2\|v_0'\|_{L_2(\mathbb{R})}^2-
\frac{2\alpha_0\cot\frac{\alpha_0 d}{2}}{d}
\|\b\|_{L_2(\mathbb{R})}^2 \left(\frac{\mu_1^2}{(\mu_1^2-\mu_0^2)^2
}+\sum\limits_{j=1}^{\infty}
\frac{\mu_{2j+1}^2}{(\mu_{2j+1}^2-\mu_1^2)^2}\right)
\\
\ = \ & \frac{\alpha_0^2
l^3}{2}\bigg\|\int\limits_\mathbb{R}\sgn(\cdot-t_1)\widetilde{\b}
(t_1)\di t_1\bigg\|_{L_2(\mathbb{R})}^2
\\
\ & - \frac{2\alpha_0 l\cot\frac{\alpha_0 d}{2}}{d}
\left(\frac{\mu_1^2}{(\mu_1^2-\mu_0^2)^2}+\frac{d^2}{16\pi^2}
+\frac{d^2}{48}\right) \|\widetilde{\b}\|_{L_2(\mathbb{R})}^2
  \,,
\end{align*}
where the resulting expression is positive under the hypothesis.
\hfill\qed

\subsection{Proof of Proposition~\ref{th2.4}}\label{Sec.final}
%
Using~(\ref{4.24}), we obtain
\begin{equation*}
\tau>\frac{2\alpha_0\pi^2\cot\frac{\alpha_0
d}{2}}{(\mu_1^2-\mu_0^2)d^3} \la\b
v_1\ra+\frac{8\pi^2}{(\mu_1^2-\mu_0^2)d^4}
\sum\limits_{j=1}^{m}\frac{\mu_{2j}^2 \la\b
v_{2j}\ra}{\mu_{2j}^2-\mu_1^2}
  \,,
\end{equation*}
where the right hand side is non-negative under the hypothesis~(\ref{1.8a}).
\hfill\qed

\medskip
Let us show that the inequality~(\ref{1.8a}) can be achieved if
$\alpha_0\to \mu_2+0$. In this case $m=1$, and it is sufficient to
check that
\begin{equation}\label{4.25}
\alpha_0\la \b v_1\ra\cot\frac{\alpha_0 d}{2}>\frac{16\pi^2\la\b v_2
\ra}{(\mu_1^2-\mu_2^2)d^3}=\frac{16\pi^2\la\b v_2
\ra}{(\alpha_0^2-\mu_2^2)d^3}.
\end{equation}
It follows from the definition of~$v_1$ that it satisfies the
asymptotic formula
\begin{equation*}
  v_1(x_1) = v_2(x_1) + (\mu_2-\alpha_0)\,\widehat{v}(x_1)
  +\Odr\big((\mu_2-\alpha_0)^2\big)
\end{equation*}
in $L_2(\mathbb{R})$-norm,
where the function~$\widehat{v}$ is given by
\begin{equation*}
\widehat{v}(x_1):=\int\limits_\mathbb{R} \frac{(\sqrt{3} \mu_0
|x_1-t_1|+1)\E^{-\sqrt{3}\mu_0|x_1-t_1|}}{3\sqrt{3}\mu_0^2}
\b(t_1)\di t_1,
\end{equation*}
and satisfies the equation
\begin{equation*}
  -\widehat{v}''+3\mu_0^2\,\widehat{v}=4 \mu_0 v_2
  \qquad\mbox{in}\quad\mathbb{R}.
\end{equation*}
We multiply this equation by $v_2$ and integrate by parts over
$\mathbb{R}$ taking into account the equation~\eqref{4.20a}
for~$v_2$,
\begin{equation*}
4\mu_0\|v_2\|_{L_2(\mathbb{R})}^2=\la \b \widehat{v}\ra.
\end{equation*}
Hence,
\begin{equation*}
\la\b v_1\ra=\la \b v_2\ra+4(\mu_2-\alpha_0)\mu_0
\|v_2\|_{L_2(\mathbb{R})}^2+ \Odr\big((\mu_2-\alpha_0)^2\big).
\end{equation*}
Employing this identity, we write the asymptotic expansions for the
both sides of~\eqref{4.25} as $\alpha_0\to\mu_2+0$, and obtain
\begin{align*}
  \alpha_0\la \b v_1\ra\cot\frac{\alpha_0 d}{2}
  &=\frac{4\pi\la\b v_2\ra}{(\alpha_0-\mu_2)d^2}
  +\frac{2}{d}\left(\la\b v_2\ra-\frac{8\pi^2}{d^2}\|v_2\|_{L_2(\mathbb{R})}^2\right)
  +\Odr(\mu_2-\alpha_0) \,,
  \\
  \frac{16\pi^2\la\b v_2 \ra}{(\alpha_0^2-\mu_2^2)d^3}
  &=\frac{4\pi\la\b v_2\ra}{(\alpha_0-\mu_2)d^2}
  -\frac{\la\b v_2\ra}{d}+\Odr(\mu_2-\alpha_0) \,.
\end{align*}
Thus, to satisfy~\eqref{4.25}, it is sufficient to check that
\begin{equation*}
3\la\b v_2\ra-\frac{16\pi^2}{d^2}\|v_2\|_{L_2(\mathbb{R})}^2>0
  \,,
\end{equation*}
which is in view of \eqref{4.24} equivalent to
\begin{equation*}
3\|v'_2\|_{L_2(\mathbb{R})}^2
>\frac{7\pi^2}{d^2}\|v_2\|_{L_2(\mathbb{R})}^2
  \,.
\end{equation*}
It is clear that there exists a function $v\in
C_0^\infty(\mathbb{R})$ for which this inequality is valid. Letting
$v_2:=v$ and $\b:=-v''+3 \mu_0^2 v$, we conclude that there exists
$\b$ such that the inequality \eqref{1.8a} holds true, if $\alpha_0$
is sufficiently close to $\mu_2$ and greater than this number.

\section*{Acknowledgment}
The authors are grateful to Miloslav Znojil for many valuable
discussions. They also thank the referee for helpful
remarks and suggestions.

The work was partially supported by the Czech Academy of Sciences
and its Grant Agency within the projects IRP AV0Z10480505 and
A100480501, and by the project LC06002 of the Ministry of Education,
Youth and Sports of the Czech Republic. D.B. was also supported by
RFBR (07-01-00037), by Marie Curie International Fellowship within
6th European Community Framework Programme (MIF1-CT-2005-006254); in
addition he gratefully acknowledges the support from Deligne 2004
Balzan Prize in mathematics and the grant of the President of
Russian Federation for young scientists and their supervisors
(MK-964.2008.1) and for Leading Scientific Schools
(NSh-2215.2008.1). D.K. was also supported by FCT, Portugal, through
the grant SFRH/\-BPD/\-11457/\-2002.

%
{\small

\begin{thebibliography}{10}

\bibitem{Adams}
R.~A. Adams, \emph{Sobolev spaces}, Academic Press, New York, 1975.

\bibitem{Bender_2007}
C.~M. Bender, \emph{Making sense of non-{H}ermitian {H}amiltonians},
Rep. Prog. Phys.~\textbf{70} (2007), 947--1018.

\bibitem{Bender-Boettcher_1998}
C.~M. Bender and P.~N. Boettcher, \emph{Real spectra in non-{H}ermitian
  {H}amiltonians having {$\mathcal{PT}$} symmetry}, Phys. Rev. Lett.
  \textbf{80} (1998), no.~24, 5243--5246.

\bibitem{Bo}
D.~Borisov, \emph{Discrete spectrum of a pair of non-symmetric waveguides
  coupled by a window}, Sbornik Mathematics \textbf{197} (2006), no.~4,
  475--504.

\bibitem{BEGK}
D.~Borisov, P.~Exner, R.~Gadyl'shin, and D.~Krej\v{c}i\v{r}\'{\i}k, \emph{Bound
  states in weakly deformed strips and layers}, Ann.~H.~Poincar{\'e} \textbf{2}
  (2002), 553--572.

\bibitem{BGRS}
W.~Bulla, F.~Gesztesy, W.~Renger, and B.~Simon, \emph{Weakly coupled bound
  states in quantum waveguides}, Proc. Amer. Math. Soc. \textbf{125} (1997),
  1487--1495.

\bibitem{Caliceti-Cannata-Graffi_2006}
E.~Caliceti, F.~Cannata, and Graffi, \emph{{Perturbation theory of
  $\mathcal{PT}$-symmetric Hamiltonians}}, J.~Phys.~A \textbf{39} (2006),
  10019--10027.

\bibitem{CGS}
E.~Caliceti, S.~Graffi, and J.~Sj{\"o}strand, \emph{Spectra of {PT}-symmetric
  operators and perturbation theory}, J.~Phys.~A \textbf{38} (2005), 185--193.

\bibitem{DDT}
P.~Dorey, C.~Dunning, and R.~Tateo, \emph{Spectral equivalences, {B}ethe ansatz
  equations, and reality properties in {$\mathcal{PT}$}-symmetric quantum
  mechanics}, J.~Phys.~A \textbf{34} (2001), 5679--5704.

\bibitem{DE}
P.~Duclos and P.~Exner, \emph{{C}urvature-induced bound states in quantum
  waveguides in two and three dimensions}, Rev. Math. Phys. \textbf{7} (1995),
  73--102.

\bibitem{Edmunds-Evans}
D.~E. Edmunds and W.~D. Evans, \emph{Spectral theory and differential
  operators}, Oxford University Press, New York, 1987.

\bibitem{ES}
P.~Exner and P.~{\v S}eba, \emph{Bound states in curved quantum waveguides},
  J.~Math.~Phys. \textbf{30} (1989), 2574--2580.

\bibitem{Ga}
R.~Gadyl'shin, \emph{On regular and singular perturbation of acoustic and
  quantum waveguides}, Comptes Rendus Mechanique \textbf{332} (2004), no.~8,
  647--652.

\bibitem{Glazman}
I.~M. Glazman, \emph{Direct methods of qualitative spectral analysis of
  singular differential operators}, Israel Program for Scientific Translations,
  Jerusalem, 1965.

\bibitem{Kaiser-Neidhardt-Rehberg_2002}
H.-Ch. Kaiser, H.~Neidhardt, and J.~Rehberg, \emph{Density and current of a
  dissipative {S}chr{\"o}dinger operator}, J. Math. Phys. \textbf{43} (2002),
  5325--5350.

\bibitem{Kaiser-Neidhardt-Rehberg_2003b}
\bysame, \emph{Macroscopic current induced boundary conditions for
  {S}chr{\"o}dinger-type operators}, Integral Equations and Operator Theory
  \textbf{45} (2003), 39--63.

\bibitem{Kaiser-Neidhardt-Rehberg_2003a}
\bysame, \emph{On one-dimensional dissipative {S}chr{\"o}dinger-type operators,
  their dilation and eigenfunction expansions}, Math. Nachr. \textbf{252}
  (2003), 51--69.

\bibitem{Kato}
T.~Kato, \emph{Perturbation theory for linear operators}, Springer-Verlag,
  Berlin, 1966.

\bibitem{K4}
D.~Krej\v{c}i\v{r}{\'\i}k,
  \emph{Calculation of the metric in the Hilbert space
  of a {$\mathcal{PT}$}-symmetric model via the spectral theorem},
  J. Phys. A: Math. Theor. \textbf{41} (2008) 244012.

\bibitem{KBZ}
D.~Krej\v{c}i\v{r}{\'\i}k, H~B{\'\i}la, and M.~Znojil, \emph{Closed formula for
  the metric in the {H}ilbert space of a {$\mathcal{PT}$}-symmetric model}, J.
  Phys. A \textbf{39} (2006), 10143--10153.

\bibitem{KKriz}
D.~Krej\v{c}i\v{r}\'{\i}k and J.~K\v{r}\'{\i}\v{z}, \emph{On the spectrum of
  curved quantum waveguides}, Publ. RIMS, Kyoto University \textbf{41} (2005),
  no.~3, 757--791.

\bibitem{Langer-Tretter_2004}
H.~Langer and Ch. Tretter, \emph{A {K}rein space approach to {PT}-symmetry},
  Czech. J.~Phys. \textbf{54} (2004), 1113--1120.

\bibitem{Mihajlov}
V.~P. Mikhailov, \emph{Partial differential equations}, Nauka, Moscow, 1976, in
  Russian.

\bibitem{Ali7}
A.~Mostafazadeh, \emph{Krein-space formulation of $\mathcal{PT}$ symmetry,
  $\mathcal{CPT}$-inner products, and pseudo-{H}ermiticity}, Czech. J.~Phys.
  \textbf{56} (2006), 919--933.

\bibitem{RS4}
M.~Reed and B.~Simon, \emph{Methods of modern mathematical physics,
  {IV}.~{A}nalysis of operators}, Academic Press, New York, 1978.

\bibitem{Shin}
K.~C. Shin, \emph{On the reality of the eigenvalues for a class of
  $\mathcal{PT}$-symmetric oscillators}, Commun. Math. Phys. \textbf{229}
  (2002), 543--564.

\bibitem{Wolf}
F.~Wolf, \emph{On the essential spectrum of partial differential boundary
  problems}, Comm.\ Pure Appl.\ Math. \textbf{12} (1959), 211--228.

\bibitem{Znojil_2001}
M.~Znojil, \emph{{$\mathcal{PT}$}-symmetric square well}, Phys. Lett.~A
  \textbf{285} (2001), 7--10.

\end{thebibliography}

\providecommand{\bysame}{\leavevmode\hbox to3em{\hrulefill}\thinspace}
\providecommand{\MR}{\relax\ifhmode\unskip\space\fi MR }
\providecommand{\MRhref}[2]{%
  \href{http://www.ams.org/mathscinet-getitem?mr=#1}{#2}
}
\providecommand{\href}[2]{#2}

}

\end{document}